\definecolor{BlueBottle}{RGB}{0, 130, 255}
\numberwithin{equation}{section}
\theoremstyle{plain}                         
\newtheorem{assumption}{Assumption}[section]
\newtheorem{proposition}{Proposition}
\newtheorem{lemma}{Lemma}
\theoremstyle{definition}                  
\newtheorem{remark}{Remark}
\theoremstyle{remark}                      
\newcommand{\E}{\mathbb{E}}
\newcommand{\Var}{\mathbb{V}ar}
\newcommand{\Cov}{\mathbb{C}ov}
\renewcommand{\qed}{\hfill$\blacksquare$}
\renewcommand*\env@matrix[1][\arraystretch]{%
	\edef\arraystretch{#1}%
	\hskip -\arraycolsep
	\let\@ifnextchar\new@ifnextchar
	\array{*\c@MaxMatrixCols c}}
\title{On the Lower Confidence Band for the Optimal Welfare in Policy Learning }
\author{ Kirill Ponomarev\thanks{Department of Economics, University of Chicago. Email: kponomarev@uchicago.edu} \and 	Vira Semenova\thanks{Department of Economics, University of California, Berkeley. Email: vsemenova@berkeley.edu. \newline
First version: October 2024, \href{https://arxiv.org/abs/2410.07443}{https://arxiv.org/abs/2410.07443.}  We are grateful to Alberto Abadie, Timothy Armstrong, Isaiah Andrews, Jiafeng  Chen, Victor Chernozhukov, Timothy Christensen, Bryan Graham, Michael Jansson, Patrick Kline, Sokbae Lee, Elena Manresa, Demian Pouzo, Andres Santos, Azeem Shaikh, Liyang  Sun, Vasilis Syrgkanis, Max Tabord-Meehan, Tiemen Woutersen, and participants of 2024 California Econometrics Conference for valuable comments. Jinglin Yang provided superb research assistance. All errors are ours.
}  }
\begin{document}
\maketitle

\begin{abstract}

We study inference on the optimal welfare in a policy learning problem and propose reporting a lower confidence band (LCB). A natural approach to constructing an LCB is to invert a one-sided $t$-test based on an efficient estimator for the optimal welfare. However, we show that for an empirically relevant class of DGPs, such an LCB can be first-order dominated by an LCB based on a welfare estimate for a suitable suboptimal treatment policy. We show that such first-order dominance is possible if and only if the optimal treatment policy is not ``well-separated'' from the rest, in the sense of the commonly imposed margin condition. When this condition fails, standard debiased inference methods are not applicable. We show that uniformly valid and easy-to-compute LCBs can be constructed analytically by inverting moment-inequality tests with the maximum and quasi-likelihood-ratio test statistics. As an empirical illustration, we revisit the National JTPA study and find that the proposed LCBs achieve reliable coverage and competitive length. 
  \end{abstract}
  \maketitle
\noindent\texttt{Keywords:}  Statistical decision theory, policy learning, optimal welfare, lower confidence band, partial identification, sensitivity analysis, cross-fitting,  uniformity \\
\noindent\texttt{JEL Numbers:} C14, C31, C54

\newpage

\section{Introduction}



Treatment assignment problems are ubiquitous in economics, including governments providing subsidies to disadvantaged households, firms offering job training opportunities to their employees, colleges allocating scholarships to students, and online retailers offering discounts to customers. In such settings, a decision-maker (DM) aims to design a treatment rule that determines who should --- and who should not --- be treated, based on observable individual characteristics, to maximize welfare \citep{Manski2004}. Since developing good treatment rules may be costly and time-consuming, the DM might want to quantify the potential welfare gains. To this end, the DM may conduct a preliminary experiment and test a hypothesis that the optimal welfare (or welfare gain) exceeds a certain threshold.

Conducting inference for the optimal welfare (and welfare gain) is a challenging task. From a practical perspective, it may require solving complicated non-convex optimization problems, estimating functions of high-dimensional inputs non-parametrically, and dealing with noisy welfare estimates due to suboptimal experiment design. Theoretically, a major complication is the potential non-uniqueness of the optimal policy, which makes standard debiased inference methods inapplicable \citep{HiranoPorter2012, LuedtkeLaan}.


In this paper, we show that good estimators and tight lower confidence bands (LCBs) for the optimal welfare (and welfare gain) can be obtained by leveraging suboptimal policies. Our first contribution is to demonstrate a possible trade-off between the welfare level and the precision with which it can be estimated in finite samples. For empirically relevant data-generating processes (DGPs), we provide an example of a slightly suboptimal policy, whose welfare can be estimated substantially more precisely than the optimal one. As a result, an LCB targeting such suboptimal welfare can be first-order tighter --- at the $N^{-1/2}$ scale for sample size $N$ --- than the LCB targeting the optimal welfare directly. Additionally, such suboptimal policy yields a better estimator of the optimal welfare in terms of mean-squared error, for all $N$ large enough. In particular, this example shows that incorporating asymptotically redundant information can yield first-order improvements for estimators and inference procedures in finite samples.

Our second contribution is to characterize the class of DGPs for which the first-order trade-off between welfare and precision is possible. Intuitively, if the optimal policy is ``well-separated'' from the rest, the precision gain of any suboptimal policy cannot compensate for the welfare loss. We formalize this intuition using a local asymptotic approximation around a DGP at which ``separation'' fails, and derive minimax rates for the gap between the two LCBs. As a result, we show that the first-order trade-off is possible if and only if the margin condition of \citet{MammenTsybakov} and \citet{Tsybakov} fails to hold uniformly over the relevant DGPs. In such settings, standard debiased inference procedures may be invalid, so alternative inference methods are needed.


To this end, we propose LCBs that address the aforementioned welfare-precision tradeoff and remain valid regardless of the margin condition. The idea is to construct a (possibly large but) finite subclass of test policies, based on economic intuition, within which a ``good'' suboptimal policy may be found. Each of these policies provides a lower bound on the optimal welfare, yielding a collection of moment inequalities that can be tested using existing methods \citep{andrews2010inference, CLR, romano2014practical, canay2017practical}. The existing tests combine self-normalization \citep[precision correction in][]{CLR} with moment selection, leading to tight LCBs that remain valid under relatively weak conditions. For the problem at hand, the tests can often be inverted analytically, so the LCBs are easy to compute in practice. 

To illustrate our theoretical results, we revisit the U.S. National Job Training Partnership Act (JTPA) experiment \citet{Bloom1997}. 
The experiment randomly assigned individuals with distinct education levels and baseline earnings to a job training program and recorded their post-treatment salary. 
For most education years --- apart from graduation thresholds --- the respective conditional average treatment effect is statistically insignificant, 
indicating a violation of the margin condition. Standard procedures that either ignore education or use a holdout sample to estimate the first-best policy suffer from substantial power loss. 
We consider several classes of test policies based only on education and construct the corresponding LCBs by inverting moment-inequality tests as described above. In line with the theoretical predictions, the LCBs are substantially shorter than the available alternatives. 


\medskip 

\noindent 
\textbf{Related Literature} \,
This paper contributes to a large cross-disciplinary literature on optimal treatment choice, following \citet{Manski2004}. In econometrics, contributions range from early program-evaluation and partial-identification approaches to modern policy learning \citep{DEHEJIA2005141,HiranoPorter2009,Stoye,Chamberlain2011,BhatacharyaDupas2012,Tetenov2012,Rai2019,KitagawaTetenov,MbakopTabord,  AtheyWager2, Sun,SasakiUra,KitagawaLeeChen2022,Yata2021,ArmstrongShen2023,chernozhukov2025polece,Moon:25}. In statistics, optimal treatment regimes are commonly learned via Q-learning and A-learning \citep{QianMurphy,Murphy2003,Robins2004,ShiEtAl2018}. This literature focuses primarily on obtaining treatment rules that perform well in terms of expected regret.

In this paper, we consider a complementary problem of inference on the optimal welfare, also studied in \citet{LuedtkeLaan}.\footnote{This problem is distinct from the ``inference on winners'' considered in \citet{andrews2024inference}, \citet{andrewschen2025}, and \citet{chernozhukov2025polece}, and the proposed LCBs are generally not valid in those settings. }  In the absence of ties among the best policies, the authors showed that the optimal welfare is a regular parameter and derived the semiparametric efficiency bound for it. The bound turns out to be the same as if the best policy was known \textit{ex ante}. When ties are present, the optimal welfare is no longer regular \citep{HiranoPorter2012}, but in view of the above, an oracle efficient estimator based on one of the optimal policies still provides a natural benchmark for our analysis. We complement the results of \citet{LuedtkeLaan} by studying MSEs of the estimators and expected length of the associated LCBs in finite samples, formalizing the necessity of the margin condition for one-sided inference, and proposing simple robust inference procedures.  The proposed procedures provide alternatives to the approaches based on smoothing, as in   \citet{chen2023inference}, \citet{levis2023covariateassisted} and \citet{whitehouse2025}, or entropic regularization, as in \citet{benmichael2025}. They also relate to a broader literature on robust policy learning, including decisions under ambiguity \citep{benmichael2021safepolicy,cui2024policy} and concerns about external validity \cite{adjaho2023externally}. Although we focus on the utilitarian (linear) formulation of welfare throughout, the proposed approach also applies in non-linear settings, such as inequality-sensitive welfare studied in \citet{kasy2016partial, kitagawa2021equality, terschuur2025locally}, among others.

This paper also contributes to the literature on inference for partially identified parameters. We show that in finite samples, inference based on sharp bounds may be less precise than inference based on loose bounds, giving rise to a first-order trade-off between sharpness and precision. We argue that existing inference methods are able to address this trade-off by combining self-normalization (precision-correction) and moment selection, while retaining uniform validity \citep[][]{andrews2010inference,CLR,romano2014practical,canay2017practical,bai2022two}.\footnote{A related question of inference with over-identifying inequality constraints is studied, e.g., in \citet{cox2024simple} and \citet{ketz2025short}. Our setting is different in that the target parameter may not be asymptotically Gaussian even when the constraints are not binding.}

The rest of the paper is organized as follows. Section \ref{sec:notation} introduces the policy learning problem and motivates our target parameters. Section \ref{sec:example} gives a sequence of DGPs exhibiting the first-order dominance. Section \ref{sec:mainr2} discusses the role of the margin assumption. Section \ref{sec:moment} proposes robust inference procedures. Section \ref{sec:jc} contains an empirical application. Section \ref{sec:concl} concludes. Appendix \ref{app:proofs} contains proofs. Appendix \ref{app:shb} contains auxiliary theoretical results. Appendix \ref{app:emp} contains auxiliary empirical details.

\section{Setup}
\label{sec:notation}

 \subsection{Policy Learning Problem}
\label{sec:model}
Consider a population of individuals characterized by their potential outcomes in treated and untreated states, $Y(1), Y(0) \in \mathcal{Y} \subseteq \mathbf{R}$, and characteristics $X \in \mathcal{X} \subseteq \mathbf{R}^{d_X}$. A decision-maker (DM) aims to maximize the average welfare in the population by subjecting some individuals to treatment, depending on their observable characteristics $X$. That is, the DM chooses a treatment rule $G \in \mathcal{G} \subseteq 2^\mathcal{X}$ to maximize
\begin{equation} \label{eq:wfbstar}
W_G = \E[ Y(1) \bm{1}\{ X \in G \} + Y(0) \bm{1}\{ X \in G^c \}],
\end{equation}
where $G^c = \mathcal{X} \backslash G$ denotes the complement of $G$. The class of feasible treatment rules $\mathcal{G}$ may be \textit{ex ante} restricted for institutional reasons, such as transparency or non-discrimination in treatment, or practical reasons, such as computation and implementation.

We assume that the DM has access to experimental data that identifies $W_G$. The observable data vector $Z = (D, Y, X)$ contains the assigned treatment $D \in \{0, 1\}$, realized outcome $Y \in \mathcal{Y}$, and covariates $X \in \mathcal{X}$, so that  $Y = DY(1) + (1-D) Y(0)$ and $D \perp (Y(1), Y(0)) \,|\, X.$ The propensity score will be denoted by $\pi(x) = P(D = 1 \,|\,X = x)$. The conditional mean and variance functions of the potential outcomes are non-parametrically identified as $ m(d, x) = \E[Y(d) \,|\,X = x] = \E[Y \,|\, D = d, X = x]$ and $\sigma^2(d, x) = \Var(Y(d)\,|\,X = x) = \Var(Y \,|\,D = d, X = x)$, for  $d \in \{0, 1\}$, and the conditional average treatment effect (CATE) function as $\tau(x) = m(1, x) - m(0, x)$. As a result, the average welfare function is identified as $W_G = \E[m(0, X) + \bm{1}\{ X \in G \} \tau(X)]$ and can be non-parametrically estimated. To this end, the DM observes a random sample $(Z_i)_{i = 1}^N$ distributed i.i.d. $Z_i \sim P \in \mathbf{P}$, for a class of distributions $\mathbf{P}$  specified below.

The objects of interest throughout the paper are the maximum (or first-best, or optimal) welfare, denoted by
\begin{equation} \label{eq:max_welfare}
W_{G^*} = \max \limits_{G \subseteq \mathcal{G}} W_G,
\end{equation}
where $G^*$ denotes any policy attaining the maximum,\footnote{For simplicity, we assume that the maximum is well-defined.} and the corresponding welfare gain,
\begin{equation} \label{eq:welfaregain}
W_{G^*}^{\text{gain}} = W_{G^*} - W_{\varnothing},
\end{equation}
which is non-negative as long as the policy class $\mathcal{G}$ includes the status quo policy $\varnothing$ of not treating anyone.

\subsection{ Lower Confidence Bands} 
\label{sec:wgap}

In many settings, the DM would naturally be interested in lower confidence bands (LCBs) for the maximum welfare or the corresponding welfare gain. For example, consider a firm deciding whether to build a job-training center. Suppose the firm maximizes the net welfare subject to a ``safety'' constraint that the risk of false adoption (i.e., incurring negative welfare) must be below level $\alpha$, for some $\alpha \in (0,1)$. This leads to testing
\[
H_0:\; W_{G^*} \leq  0
\qquad\text{vs}\qquad
H_1:\;  W_{G^*} >  0.
\]
In such settings, LCBs are natural inputs to threshold decision rules \citep[see, e.g., Section 3.5 of][]{LehmannRomano}.

As another example, consider an online retailer deciding whether to offer a discount for a certain type of good to its customers. The retailer may first run a small-scale randomized experiment to explore whether any discount rule can lead to increase in profits. This corresponds to testing 
\[
H_0:\; W_{G^*}^{\text{gain}} = 0
\qquad\text{vs}\qquad
H_1:\; W_{G^*}^{\text{gain}} > 0,
\]
which is equivalent to comparing a \(100(1-\alpha)\%\) LCB for \(W_{G^*}^{\text{gain}}\) with zero.

The main input in the construction of LCBs is an estimator for the welfare function $W_G$. For each policy $G$, we can express $W_G = \E[\psi_G(Z)]$, where 
\begin{align}
\label{eq:dr}
\psi_G(Z) &= \bigg(m(1,X) + \dfrac{D}{\pi(X)} (Y - m(1,X)) \bigg)  \bm{1}\{ X \in G \} \\[2mm] 
&+ \bigg(m(0,X) +  \dfrac{1-D}{1-\pi(X)} (Y - m(0,X)) \bigg) \bm{1}\{ X \in G^c \} \nonumber
\end{align}
is the efficient, doubly robust, moment function \cite{Robins, Hahn98}. For suitable first-stage estimators $\widehat{m}(d, x)$ and $\widehat{\pi}(x)$, a regular semiparametrically efficient estimator $\widehat{W}_G$ can be constructed using cross-fitting, so that
\begin{equation} \label{eq:wg_asymp}
  \sqrt N(\widehat W_G - W_G) \;\Rightarrow^d \; \mathcal{N}(0, \sigma_G^2),
\end{equation}
where $\sigma_G^2 = \Var(\psi_G(Z))$. Given a significance level $\alpha \in (0,1)$, a $100(1-\alpha) \%$ LCB for $W_G$ can be formed as
 \begin{align}
\widehat{LCB}_{G} &= \widehat{W}_{G} - N^{-1/2} z_{1-\alpha} \widehat{\sigma}_{G},  \label{eq:lcbg} 
\end{align}
where  $z_{1-\alpha}$ is the $(1-\alpha)$ quantile of $\mathcal{N}(0,1)$ and $ \widehat{\sigma}_{G}$ is a consistent estimator of the asymptotic standard deviation $\sigma_G$. 

Since $W_{G} \leq W_{G^*}$, for any $G \in \mathcal{G}$, an LCB based on any suboptimal policy $G \in \mathcal{G}$ provides valid one-sided coverage for the optimal welfare, 
\begin{align}
\label{eq:lcbvalid}
P(\widehat{LCB}_G \le W_{G^*})
  \;&\ge\;P(\widehat{LCB}_G\le W_G)\to1-\alpha, \;\;\text{ as } N \rightarrow \infty.
\end{align}
As a result, $\widehat{LCB}_{G}$ can be meaningfully compared across distinct policies. As an ideal benchmark, we consider an LCB based on an infeasible efficient estimator of the welfare under a first-best policy, 
 \begin{align}
\widehat{LCB}_{G^{*}} &= \widehat{W}_{G^{*}} - N^{-1/2} z_{1-\alpha}\widehat{\sigma}_{G^{*}}. \label{eq:lcbg2} 
\end{align}
As discussed in the introduction, such LCB is a valid reference point even when the optimal policy is not unique. Since $\widehat{LCB}_{G^*}$ is based on an efficient estimator for $W_{G^*}$ and the standard deviation is rescaled by $N^{-1/2}$, one might expect that $\widehat{LCB}_{G^{*}}$ always be preferred to $\widehat{LCB}_{G}$ in large samples, for any suboptimal policy $G$. We show, however, that this is not the case. Given the direction of the intended comparison, considering an oracle LCB as a benchmark only strengthens our point. Of course, our recommended inference procedures in Section \ref{sec:moment} account for the first-best policy being unknown.

\subsection{Asymptotic Criterion for LCB ranking }
\label{sec:lcbgap}

To compare the candidate LCBs, we consider the \textit{LCB gap}, defined as
\begin{align}
\label{eq:deltag}
\Delta_G &= N^{-1/2} z_{1-\alpha} (\sigma_{G^{*}}-\sigma_G)  -(W_{G^{*}} - W_G).
\end{align}
A positive sign of $\Delta_G$ indicates that the policy $G$ is nearly optimal yet the corresponding welfare is substantially more precisely estimated. Consequently, the corresponding ${LCB}_{G}$ may be preferred to ${LCB}_{G^{*}}$ in large samples. 

The motivation for studying LCB gap comes from a local asymptotic approximation along smooth parametric sub-models, standard in the semi-parametric efficiency theory. To elaborate, let $\mathbf{P}$ denote the class of all admissible distributions of the data. Consider a distribution $P_0 \in \mathbf{P}$ such that $W_{G^*(P_0)} = W_{G}$ for $G \ne G^*(P_0)$. Let $T(P_0)$ denote the tangent space at $P_0$,\footnote{See \citet{Hahn98} for the derivation of $T(P)$ in the present setting.} and $P_{N, h} = P_{1/\sqrt{N}, h}$, for $h \in T(P_0)$, be a sequence of distributions following a smooth parametric submodel $\{t \mapsto P_{t, h}\} \subseteq \mathbf{P}$. Denote
\[
\begin{array}{cl}
\mu(h) &= \sqrt{N}(W_{G^*(P_{N, h})} - W_{G});\\[3mm]
s(h) &= \sigma_{G^*(P_{N, h})} - \sigma_{G},
\end{array}
\] 
where the dependence of $\mu(h)$ and $s(h)$ on $N$ is suppressed for notational convenience, and note that
\[
\Delta_{G}(P_{N, h}) = N^{-1/2}(z_{1-\alpha} s(h) - \mu(h)).
\]
The assumed regularity of $\widehat{W}_{G}$, consistency of $\widehat{\sigma}_{G}$, and contiguity of $P_{N, h}$ with respect to $P_0$ imply that, under $P_{N, h}$,
\[
\sqrt{N}(\widehat{LCB}_{G} - \widehat{LCB}_{G^*(P_{N, h})}) \;\;\; \Rightarrow_d \;\;\; \mathcal{N}(z_{1-\alpha} s(h) - \mu(h), \sigma_{\Delta}^2(P_0)), 
\]
 for some $\sigma_{\Delta}^2(P_0) \geqslant 0$. That is, the distribution of $\sqrt{N}(\widehat{LCB}_{G} - \widehat{LCB}_{G^*(P_{N, h})})$ under any sequence of ``perturbations'' $P_{N, h}$ of $P_0$, is determined by $z_{1-\alpha}s(h) - \mu(h) = \sqrt{N} \Delta_{G}(P_{N, h})$. Moreover, under further regularity conditions, 
\[
\E[\widehat{LCB}_{G} - \widehat{LCB}_{G^*(P_{N, h})}] = \Delta_{G}(P_{N, h}) + o(1),
\]
so the LCB gap can be interpreted as a large-sample analog to the difference of expected LCBs.\footnote{An ideal way to rank LCBs is in terms of the first-order dominance; See \citet{Lehmann1959}. Unfortunately, since distinct policies typically result in LCBs with distinct large-sample variances, this criterion does not apply in a Gaussian limit. A natural alternative is to compare LCBs in terms of their expected values, as suggested, e.g.,  in \citet{LeonHarter}. While the exact expectations may not exist without further restrictions or be distorted by the biases in first-stage estimators, their large sample analogs remain tractable.} For these reasons, we consider the LCB gap in the formal results below.

\section{First-Order Dominance}
\label{sec:example}

In this section, we give an example of a model in which the welfare-precision trade-off is of the first order, and discuss the implications of this phenomenon. We focus on welfare throughout, but similar considerations apply to welfare gain. See Remark \ref{rm:welfaregain} for the details.
 
\subsection{The Data Generating Process}
\label{sec:dgp}

First, we specify a suitable DGP for $(Y(1), Y(0), D, X)$. It suffices to specify the marginal distribution of $X$, the propensity score, and the conditional distributions of $Y(1) \mid X$ and $Y(0) \mid X$. Let $X$ be a binary covariate distributed as
\begin{equation*}
    P (X=1) =p; \quad P (X=0) =1-p, \quad \text{ for some } p \in (1/4, 3/4).
\end{equation*}
Denote the propensity score by
$$
P (D=1 \mid X=1) = \pi(1); \quad P (D=1 \mid X=0) = \pi(0), \quad \text{ for some } \pi(1), \pi(0) \in (1/4, 3/4).
$$
Let $F(\mu, \sigma^2)$ be any distribution with mean $\mu$ and variance $\sigma^2$. Suppose the potential outcomes are distributed as 
\begin{align}
\label{eq:line2}
Y(1)\mid X=1 &\sim F\bigl(\tfrac12-\epsilon,\,1\bigr); 
\qquad
Y(1)\mid X=0 \sim F\bigl(\tfrac12,\,1\bigr); 
\\ \label{eq:dgp}
Y(0) \mid X=1 &\sim F\bigl(\tfrac12,\,10 \bigr); 
 \qquad \quad \,
Y(0) \mid X=0 \sim F\bigl(\tfrac12-\epsilon,\,10 \bigr),
\end{align}
where $\epsilon \in (0, 1/2)$ is a vanishing sequence to be specified. Since we focus on the average welfare, the joint distribution of $(Y(1), Y(0)) \,|\, X$ is immaterial, so we leave it unspecified\footnote{ With variance parameters $\sigma^2(1,1) = 1/4, \quad \sigma^2 (1,0)=1, \quad \sigma^2(0,1) = 200, \quad \sigma^2(0,0) = 1$, the statement holds for all sample sizes exceeding  $1745$. For the variances in the main text, the minimal cutoff sample size $N$ is approximately $6 000$.}

Simple algebra shows that the CATE function takes the form
$$
\tau(1) = -\epsilon < 0; \;\;\;\;\;\; \tau(0) =\epsilon>0,
$$
the unique first-best policy is 
\begin{align}
\label{eq:fbb}
G^{*} = \{0\},
\end{align}
and the corresponding welfare is
\begin{align}
\label{eq:fbb2}
W_{G^{*}} = 1/2 \cdot p + 1/2 \cdot (1-p)=1/2.
\end{align}
In addition, consider  the ``treat everyone'' policy, $G=\mathcal{X}$ whose welfare is
\begin{align}
\label{eq:fbb3}
W_{\mathcal{X}} = (1/2 - \epsilon) \cdot p + 1/2 \cdot (1-p) = 1/2 - \epsilon p.
\end{align}
Note that the welfare gap between the two policies scales linearly with $\epsilon$ 
 \begin{align}
\label{eq:puzzlebias}
0 \leq W_{G^{*}}-W_{\mathcal{X}} \leq \epsilon p.
\end{align}
while the standard deviation gap does not depend on $\epsilon$,
\begin{align}
\label{eq:sigmalb2}
\sigma_{G^{*}}-\sigma_{\mathcal{X}} &> p. 
\end{align}

\subsection{Estimators and Lower Confidence Bands}
\label{sec:est_lcb}

Since $X$ is binary, the average welfare under any fixed policy $G$ can be efficiently estimated using the regression-adjusted estimator. For each \((d,x) \in \{0, 1\}^2\), denote
\begin{align}
\label{eq:ndx}
N_{dx} = \sum_{i=1}^N \bm{1}\{ D_i = d \} \bm{1}\{ X_i =x \},
\end{align}
and define the esitmators
\begin{equation} \label{eq:mdx}
\begin{array}{cl} 
\widehat \pi(x) &= \dfrac{N_{1x}}{N_{1x} + N_{0x}};\\[5mm] 
\widehat m(d,x)  &= \dfrac{\sum_{i=1}^N Y_i \cdot \bm{1}\{ D_i = d \} \bm{1}\{ X_i =x \}}{N_{dx}+1}, 
\end{array}
\end{equation}
where one is added to the denominator throughout to prevent division by zero.\footnote{This step introduces bias of order $O(N^{-1})$ which is negligible for a sufficiently large sample. An alternative is to work with unadjusted denominators on the event where both of them are strictly positive.} Recalling from \eqref{eq:fbb} that $G^* = \{0\}$, the first-best welfare is estimated as
\begin{align}
\widehat {W}_{G^{*}} =  \widehat m(0, 1) \cdot  \widehat p +  \widehat m(1, 0) \cdot (1- \widehat p), \label{eq:oracle2}  
\end{align}
where $\widehat p  = \sum_{i=1}^N X_i/N$. Similarly,  
\begin{align}
\widehat {W}_{\mathcal{X}} &= \widehat m(1, 1) \cdot  \widehat{p} + \widehat m(1, 0) \cdot (1-\widehat{p}). \label{eq:m0} 
\end{align}
The mean squared errors of the two estimators, with respect to $W_G^*$, are given by
\begin{equation} \label{eq:mse1} \arraycolsep=1.pt
    \begin{array}{cl}
       MSE (\widehat{W}_{\mathcal{X}})  &= \E [(\widehat{W}_{\mathcal{X}} - W_{G^{*}})^2];  \\[2mm]
       MSE(\widehat W_{G^{*}})  &= \E [(\widehat W_{G^{*}} - W_{G^{*}})^2].
    \end{array}
\end{equation} 
The asymptotic variances of $\widehat{W}_G$, for $G \in \{G^*, \mathcal{X}\}$, can be estimated as
$$
\widehat \sigma^2_G = \dfrac{1}{N} \sum_{i=1}^N (\widehat \psi_G(Z_i) - \widehat {W}_G)^2, 
$$
where $\widehat{\psi}_G(Z_i)$ is obtained by plugging the estimated propensity score  and regression functions from \eqref{eq:mdx} in \eqref{eq:dr}. The corresponding LCBs are obtained as
\begin{align}
\widehat{LCB}_{\mathcal{X}} &= \widehat{W}_{\mathcal{X}} -  N^{-1/2} z_{1-\alpha} \widehat{\sigma}_{\mathcal{X}}, \label{eq:lcb0} \\[2mm]
\widehat{LCB}_{G^{*}} &= \widehat{W}_{G^{*}} - N^{-1/2} z_{1-\alpha} \widehat{\sigma}_{G^{*}}.  \label{eq:lcbfb} 
\end{align}
Following the discussion of Section \ref{sec:notation}, we compare ${LCB}_{\mathcal{X}}$ and ${LCB}_{G^{*}}$ in terms of LCB gap 
\begin{align}
\label{eq:LCBgap}
\Delta_{\mathcal X}
&= \frac{z_{1-\alpha}}{\sqrt{N}}\,\bigl(\sigma_{G^*} - \sigma_{\mathcal X}\bigr)
\;-\; \bigl(W_{G^*} - W_{\mathcal X}\bigr).
\end{align}

\subsection{First-Order Dominance}
\label{sec:mainr}

Our first main result shows that $\widehat{W}_{\mathcal{X}}$ dominates $\widehat{W}_{G^{*}}$ in terms of MSE, and the respective LCB gap is positive.

\begin{proposition}[First-Order Dominance]
\label{thm:main1}
For all $N$ large enough, for the DGP \eqref{eq:dgp} and estimators \eqref{eq:oracle2} and \eqref{eq:m0}, the following statements hold:
 \begin{enumerate}
     \item Both MSEs in \eqref{eq:mse1} are finite and
\begin{align}
\text{MSE} (\widehat{W}_{\mathcal{X}})  < \text{MSE} (\widehat{W}_{G^{*}});\label{eq:mse}
\end{align}
\item For any significance level $\alpha \in (0,1)$, there is a constant $C_{\alpha}>0$ such that
\begin{align}
\Delta_{\mathcal{X}}  >  C_{\alpha} N^{-1/2}. \label{eq:mainstatement}
\end{align} 
 \end{enumerate}
 
\end{proposition}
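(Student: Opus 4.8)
The plan is to reduce both claims to a single computation --- the difference of limiting variances $\sigma_{G^*}^2 - \sigma_{\mathcal X}^2$ --- and then to obtain part 2 by an essentially algebraic argument and part 1 by upgrading this variance comparison to one about $N$-scaled finite-sample mean-squared errors. First I would compute the two limiting variances via the law of total variance, $\Var(\psi_G(Z)) = \E[\Var(\psi_G(Z)\mid X)] + \Var(\E[\psi_G(Z)\mid X])$. For $G^* = \{0\}$ the conditional mean $\E[\psi_{G^*}(Z)\mid X]$ equals $m(1,0)$ on $\{X=0\}$ and $m(0,1)$ on $\{X=1\}$, both equal to $1/2$, so the outer variance vanishes; for $\mathcal X$ the conditional mean is $m(1,X)$, contributing $p(1-p)\epsilon^2$; and the inner (conditional-variance) terms differ only in the $X=1$ cell. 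This gives
\[
\sigma_{G^*}^2 - \sigma_{\mathcal X}^2 \;=\; p\Big(\frac{\sigma^2(0,1)}{1-\pi(1)} - \frac{\sigma^2(1,1)}{\pi(1)}\Big) - p(1-p)\epsilon^2 \;=\; p\Big(\frac{10}{1-\pi(1)} - \frac{1}{\pi(1)}\Big) - p(1-p)\epsilon^2,
\]
which, since $\pi(1)\in(1/4,3/4)$, is bounded below by a positive constant for all small $\epsilon$; together with the crude bound $\sigma_{G^*}+\sigma_{\mathcal X}=O(1)$ this yields the standard-deviation gap \eqref{eq:sigmalb2}. Part 2 then follows at once: substituting \eqref{eq:puzzlebias} and \eqref{eq:sigmalb2} into \eqref{eq:LCBgap}, and recalling that the sequence $\epsilon = \epsilon_N$ is free to be chosen with $\sqrt{N}\,\epsilon_N \to 0$ (say $\epsilon_N = 1/N$), the welfare gap $W_{G^*} - W_{\mathcal X} = \epsilon_N p$ is $o(N^{-1/2})$ whereas $N^{-1/2} z_{1-\alpha}(\sigma_{G^*}-\sigma_{\mathcal X}) \ge N^{-1/2} z_{1-\alpha} p$; hence $\Delta_{\mathcal X} \ge \tfrac12 z_{1-\alpha} p\, N^{-1/2}$ for all $N$ beyond an $\alpha$-dependent threshold, which is \eqref{eq:mainstatement} with $C_\alpha = \tfrac12 z_{1-\alpha} p$.

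For part 1 I would exploit the binary structure of $X$ by conditioning on the design $\mathcal D = (D_i, X_i)_{i=1}^N$. Given $\mathcal D$, the fraction $\widehat p$ is fixed and the cell means $\widehat m(d,x)$ are mutually independent with conditional mean $\tfrac{N_{dx}}{N_{dx}+1} m(d,x)$ and conditional variance $\tfrac{N_{dx}}{(N_{dx}+1)^2}\sigma^2(d,x)$ (the $+1$ merely disposes of empty cells). A conditional bias-variance decomposition then yields exact formulas in the $N_{dx}$: for the common target $W_{G^*}=1/2$, the conditional bias of $\widehat W_{G^*}$ is $-\tfrac12\big(\tfrac{\widehat p}{N_{01}+1}+\tfrac{1-\widehat p}{N_{10}+1}\big) = O_p(N^{-1})$ and that of $\widehat W_{\mathcal X}$ is $-\epsilon\,\widehat p + O_p(N^{-1})$, while the conditional variances are $\widehat p^2\,\tfrac{\sigma^2(0,1)\,N_{01}}{(N_{01}+1)^2} + (1-\widehat p)^2\,\tfrac{\sigma^2(1,0)\,N_{10}}{(N_{10}+1)^2}$ and $\widehat p^2\,\tfrac{\sigma^2(1,1)\,N_{11}}{(N_{11}+1)^2} + (1-\widehat p)^2\,\tfrac{\sigma^2(1,0)\,N_{10}}{(N_{10}+1)^2}$, the $\widehat m(1,0)$-summand being common, so the comparison is again driven by the $X=1$ cell. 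Multiplying by $N$ and letting $N\to\infty$: by the strong law $N_{dx}/N \to q_{dx} := P(D=d, X=x) \in (0,1)$ almost surely, so on the event $\mathcal G_N = \{ N_{dx} \ge \tfrac12 N q_{dx}\ \text{for all } (d,x)\}$ the $N$-scaled conditional MSEs converge almost surely and are uniformly bounded (dominated convergence applies), whereas on $\mathcal G_N^c$ every scaled term is at most a deterministic $O(N)$ bound --- using $\tfrac{N_{dx}}{(N_{dx}+1)^2}\le\tfrac14$ and $\widehat p\in[0,1]$ --- while $P(\mathcal G_N^c)\le C e^{-cN}$ by a Chernoff bound, so that contribution vanishes. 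The upshot is $N\cdot\text{MSE}(\widehat W_{G^*}) \to \sigma_{G^*}^2$ and, with the same $\epsilon_N$, $N\cdot\text{MSE}(\widehat W_{\mathcal X}) \to \tfrac{p}{\pi(1)} + \tfrac{1-p}{\pi(0)}$ (the squared-bias contribution being $N\epsilon_N^2 p^2 + o(1)\to 0$); subtracting, $N\big(\text{MSE}(\widehat W_{G^*}) - \text{MSE}(\widehat W_{\mathcal X})\big) \to p\big(\tfrac{10}{1-\pi(1)} - \tfrac{1}{\pi(1)}\big) > 0$, so \eqref{eq:mse} holds for all $N$ large enough. Finiteness of each MSE at fixed $N$ is immediate from $\E[\widehat m(d,x)^2] \le \E[Y^2\mid D=d, X=x] < \infty$ together with $\widehat p\in[0,1]$.

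The step I expect to demand the most care is not the leading-order limit but the control of atypical designs --- realizations of $\mathcal D$ in which some $N_{dx}$ is small --- needed to turn a convergence-in-distribution statement into one valid for all $N$ large enough. Two features of the construction make this clean: the $+1$ in the denominators of \eqref{eq:mdx} keeps each $\widehat m(d,x)$, and hence the conditional MSE (an exact function of $\mathcal D$), bounded on every design; and conditioning on $\mathcal D$ integrates out the outcome noise exactly into $\sigma^2(d,x)$-terms, so only finiteness of the second moments of $Y$ --- guaranteed by the $F(\mu,\sigma^2)$ specification --- is invoked, with no appeal to higher moments. Everything else (the explicit $\tau$, $G^*$, $W_{G^*}$, $W_{\mathcal X}$, the inequalities \eqref{eq:puzzlebias}--\eqref{eq:sigmalb2}, and the final sign comparisons) is elementary algebra.
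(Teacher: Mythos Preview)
Your proposal is correct and follows essentially the same route as the paper: compute $\sigma_{G^*}^2-\sigma_{\mathcal X}^2$ via the law of total variance and plug into \eqref{eq:LCBgap} for Part~2; for Part~1, condition on the design $(D_i,X_i)_{i=1}^N$, write the exact conditional bias and variance in the cell counts $N_{dx}$, and control atypical designs by binomial (Chernoff) tail bounds. The only difference is stylistic: the paper carries explicit non-asymptotic constants (working with $C_N=\sqrt{2.25\ln N/N}$ and specializing to $p=\pi(1)=\pi(0)=1/2$) where you use an a.s.-convergence-plus-dominated-convergence argument; both deliver the ``for all $N$ large enough'' conclusion.
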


Proposition \ref{thm:main1} makes three points. First, the trade-off between welfare and precision may be first-order. As a result, suboptimal policies may yield better point estimates and tighter, on average, lower confidence bands for the optimal welfare. That is, the first-best policy --- the policy that is best to implement --- may differ from the policy whose estimated welfare is best to report.\footnote{DGPs with treatment effects vanishing at the $N^{-1/2}$ rate have been employed to obtain a meaningful limiting experiment \citep{HiranoPorter2009} or establish minimax rates for expected regret \citep{KitagawaTetenov, AtheyWager2}. In this paper, we use DGPs with similar conditional means and carefully chosen variances to establish a lower bound on the LCB gap.} Similar observations apply to inference on partially-identified parameters, as we further discuss in Remark \ref{rm:vanish2}.

Second, there is a distinction between the two-sided and one-sided inferential objectives.  In the two-sided case, the bias typically must vanish faster than the standard deviation to ensure valid coverage of the confidence intervals. In the one-sided case, coverage remains valid as long as the direction of the bias matches the direction of the confidence band, which allows bias and variance to be potentially of the same order.  Proposition \ref{thm:main1} gives a concrete, empirically relevant example of this distinction\footnote{The one-sided dominance result echoes findings in one-sided nonparametric inference: in adaptive tests and multiscale procedures, directed smoothing bias can be exploited to lower variance while preserving size \citep{DumbgenSpokoiny2001, Armstrong2015}. Our setting differs in the target parameter (optimal welfare rather than a function at a point) and mechanism (policy-induced bias $W_{G^{*}} - W_G$ rather than smoothing bias).   }.

Third, efficiency arguments in near non-regular settings may be problematic. For each $\epsilon>0$, the oracle efficient estimator $\widehat{W}_{G^*}$ attains the semiparametric efficiency bound \citep{LuedtkeLaan}, but in the limit, $\epsilon=0$, the optimal welfare is a non-regular parameter, and semiparametric efficiency bounds do not apply \citep{HiranoPorter2012}. Proposition \ref{thm:main1} demonstrates that, for distributions within a $N^{-1/2}$-neighborhood of $\epsilon=0$ (excluding zero), $\widehat{W}_{\mathcal{X}}$ dominates $\widehat{W}_{G^{*}}$ in terms of MSE, for all $N$ large enough. Thus, the familiar notion of efficiency fails not only at the point of non-regularity, but already in a $N^{-1/2}$-neighborhood around it.

\begin{remark}[Implications for welfare gain]
\label{rm:welfaregain}
The above example could be modified to obtain a first-order dominance statement for the welfare gain in \eqref{eq:welfaregain}. Consider the DGPs 
\begin{align}
Y(1)\mid X=1 &\sim F\bigl(\tfrac12 - \epsilon,\;1\bigr), 
& 
Y(1)\mid X=0 &\sim F\bigl(\tfrac12,\;10\bigr), 
\label{eq:dgp1}\\
Y(0)\mid X=1 &\sim F\bigl(\tfrac12,\;1\bigr), 
& 
Y(0)\mid X=0 &\sim F\bigl(\tfrac12 - \epsilon,\;10\bigr).
\label{eq:dgp3}
\end{align}
where asymptotic variance is small for $X=1$ and large for $X=0$. Let $G^{*} = \{0\}$ be the optimal policy and $G = \{1\}$ be the suboptimal policy. Simple algebra shows that the welfare gap and variance gap satisfy
 \begin{align}
\label{eq:puzzlebias2}
W^{\text{gain}}_{G^{*}}-W^{\text{gain}}_{G}  &\leq \epsilon,  \\
\label{eq:puzzlebias3}
 (\sigma^{\text{gain}}_{G^{*}})^2-(\sigma^{\text{gain}}_{G})^2_{}  &> 7.
\end{align}
As a result, an analog of \eqref{eq:mainstatement} holds for the LCB gap for welfare gain.  \qed 
 \end{remark}

\begin{remark}[Redundant moment inequalities]
\label{rm:vanish2}
The above discussion applies to inference for partially identified parameters. For example, consider the setting of Section \ref{sec:notation} with binary potential outcomes and unconditional treatment exogeneity, i.e. $(Y(1), Y(0), X) \perp D$. The share of ``always-takers'', $\theta=P(Y(1) = Y(0) = 1),$
can be bounded from above by either $\delta_1 = P(Y=1 \mid D=0)$ or $\delta_2 = \E [\min (P(Y=1 \mid D=1,X), P(Y=1 \mid D=0,X))]$. By Jensen's inequality, $\delta_2$ gives a tighter bound than $\delta_1$. A $100(1-\alpha) \%$ Upper Confidence Band (UCB) for $\theta$ can be formed using either of the two bounds
$$
\widehat{UCB}_j = \widehat{\delta}_j + N^{-1/2} z_{1-\alpha} \widehat{\sigma}_j,
$$
where $\widehat \sigma_j$ are consistent estimators of the asymptotic standard deviations $\sigma_j$ of $\hat{\delta}_j$, for $j = 1, 2$. Similar to Proposition \ref{thm:main1}, there exist DGPs such that
\begin{align}
\label{eq:mainstatement3}
\delta_2+N^{-1/2} z_{1-\alpha} \sigma_2>\delta_1+N^{-1/2} z_{1-\alpha} \sigma_1,
\end{align}
for all $N$ large enough. As a result, a UCB based on a non-sharp bound first-order dominates its sharp counterpart in terms of the average length. In other words, inference based on a sharp bound may be less informative in finite samples. \qed 
\end{remark}

\section{Margin Condition and Higher-Order Dominance}
\label{sec:mainr2}

Next, we investigate whether the conclusions of Proposition \ref{thm:main1} carry over when the model is restricted by the following additional assumptions.  
  
\begin{assumption}[Regularity]
\label{ass:overlap}
(i) The propensity score $\pi(x)$ satisfies $\kappa < \pi(x) < 1-\kappa$, for almost all $x \in \mathcal{X}$, for some $\kappa \in (0, 1/2)$; (ii) The outcome is bounded so that $P (|Y| \leq M/2) = 1$, for some $M < \infty$.  
\end{assumption}

\begin{assumption}[Margin Condition]
\label{ass:margin}
For some  $\eta \in (0,M)$  and $\delta \in (0,\infty)$, 
\begin{align}
\label{eq:margin}
P(|\tau(X)| < t) \leq  (t/\eta)^{\delta}, \quad \forall t \in [0, \eta).
\end{align}
\end{assumption}

 Assumption \ref{ass:overlap} imposes regularity conditions common in the policy learning literature \citep[see, e.g., ][]{KitagawaTetenov,MbakopTabord}.
Assumption \ref{ass:margin} is the margin condition of \citet{Tsybakov}. In addition to requiring uniqueness of the first-best policy, it controls the intensity with which $\tau(X)$ concentrates in a neighborhood of zero. When the optimal policy is unique, the existence of suitable values of $\delta$ and $\eta$ is a matter of mild regularity conditions. For example, if $|\tau(X)|$ is continuous and has a density bounded at zero, then \eqref{eq:margin} holds for any $\delta<1$ with $\eta$ small enough. If $\tau(X)$ has finite support and $P(\tau(X) = 0) = 0$, then \eqref{eq:margin} holds for any $\delta > 0$ and a sufficiently small $\eta$.

The sequence of DGPs in Proposition \ref{thm:main1} can be chosen to satisfy Assumption \ref{ass:overlap}, but it fails to satisfy Assumption \ref{ass:margin} with uniform lower bounds on $\eta$ and $\delta$. As we show below, this is precisely what drives the first-order dominance phenomenon. To state the formal result, we assume that any $G \subseteq \mathcal{X}$ is feasible.\footnote{The upper bound in Proposition 2 holds for all $G \subseteq \mathcal{X}$, so it applies to any restricted class $\mathcal{G}$ as well. The lower bound holds within restricted classes $\mathcal{G}$ as long as they include threshold policies based on each covariate.} Proposition \ref{thm:equivalence} below characterizes the order of magnitude of the worst-case LCB gap $\Delta_G$ over all policies $G \subseteq \mathcal{X}$.

\begin{proposition}[Higher-Order Dominance]
\label{thm:equivalence}
Let $\mathbf{P}$ denote the class of DGPs obeying Assumptions \ref{ass:overlap}--\ref{ass:margin} for some $ 0 < \underline{\delta} \leq \delta \leq \overline{\delta} < \infty$, $\eta = \eta(\delta) > 0$, and $\inf_{x \in \mathcal{X}, d \in \{1,0\}} \sigma^2(d, x) \geq \underline{\sigma}^2>0$. There exist constants $0<\underline C<\overline C<\infty$, depending on $(M,\kappa, \underline{\delta}, \overline{\delta}, \underline{\sigma})$, such that
\begin{equation}
	\label{eq:mainresultupper}
\underline{C} N^{-(1 + \underline{\delta})/2} \leq \sup_{P \in \mathbf{P}} \sup_{G \subseteq \mathcal{X} } \Delta_G \leq \overline{C} N^{-(1 + \underline{\delta})/2}.
\end{equation}

\end{proposition}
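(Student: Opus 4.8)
The plan is to prove the two inequalities in \eqref{eq:mainresultupper} separately, exploiting that for every policy $G$ the LCB gap $\Delta_G$ from \eqref{eq:deltag} depends only on the symmetric difference $A = G^{*}\triangle G$, where $G^{*} = \{\tau > 0\}$ is the (a.e.\ unique, since Assumption \ref{ass:margin} forces $P(\tau(X)=0)=0$) maximizer. Two facts drive everything: the welfare loss is exactly $W_{G^{*}} - W_G = \E[\,|\tau(X)|\,\bm{1}\{X\in A\}\,]$, while the standard-deviation gap is controlled by $P(X\in A)$. Thus the problem collapses to how much mass $\tau(X)$ places near zero, which is precisely what Assumption \ref{ass:margin} quantifies.

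For the \emph{upper bound}, I would first show $\sigma_{G^{*}} - \sigma_G \le c_1\, P(X\in A)$ with $c_1=c_1(M,\kappa,\underline{\sigma})$. The key algebraic observation is that $\psi_{G^{*}}(Z)-\psi_G(Z)$ from \eqref{eq:dr} is supported on $\{X\in A\}$ and, under Assumption \ref{ass:overlap}, bounded by a constant; subtracting squared means then gives $|\sigma_{G^{*}}^2-\sigma_G^2| \le c_1'\,P(X\in A)$, and dividing by $\sigma_{G^{*}}+\sigma_G \ge 2\underline{\sigma}$ — using $\sigma_G^2 \ge \E[\Var(\psi_G(Z)\mid X)] \ge \underline{\sigma}^2$, since on the treated region $\Var(\psi_G\mid X) = \sigma^2(1,X)/\pi(X) \ge \underline{\sigma}^2$ and analogously on the control region — yields the claim. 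Combining with the welfare identity and maximizing over measurable sets $A$ (the optimizer being $A=\{|\tau(X)|<\lambda_N\}$) gives
\[
\sup_{G\subseteq\mathcal{X}} \Delta_G \;\le\; \E\big[(\lambda_N - |\tau(X)|)_+\big] \;=\; \int_0^{\lambda_N} P\big(|\tau(X)|<s\big)\,ds ,
\]
with $\lambda_N := N^{-1/2} z_{1-\alpha} c_1$, where the last equality is the layer-cake identity. For $N$ large enough that $\lambda_N<\eta$, Assumption \ref{ass:margin} bounds the integrand by $(s/\eta)^{\delta}$, so the right-hand side is at most $\lambda_N^{1+\delta}/((1+\delta)\eta^{\delta}) \le \overline{C}\,N^{-(1+\underline{\delta})/2}$ uniformly over $\mathbf{P}$, using $\delta\ge\underline{\delta}$ and that $\eta=\eta(\delta)$ is bounded below on the compact interval $[\underline{\delta},\overline{\delta}]$.

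For the \emph{lower bound}, I would construct, for each large $N$, a DGP $P_N\in\mathbf{P}$ and a policy $G$ with $\Delta_G \ge \underline{C}\,N^{-(1+\underline{\delta})/2}$, by rescaling the example of Proposition \ref{thm:main1}. Take $X$ binary with $P(X=1)=q_N$ small, propensities $\pi(0)=\pi(1)=1/2$, and two-point conditional outcome laws chosen so that $\tau(0)=c_0>0$ is a fixed constant, $\tau(1)=-\theta_N$ with $\theta_N\asymp N^{-1/2}$, and $\sigma^2(0,1)$ is large (of order $M^2$) while $\sigma^2(1,1)=\underline{\sigma}^2$ is small; the remaining variances and means are chosen to respect Assumption \ref{ass:overlap}(ii) and the floor $\underline{\sigma}^2$ and enter only lower-order terms. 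Then $G^{*}=\{0\}$, and for $G=\mathcal{X}$ one has $A=\{1\}$, so $W_{G^{*}}-W_{\mathcal{X}} = \theta_N q_N$, while a direct computation via $\sigma_G^2 = \E[\Var(\psi_G\mid X)] + \Var(\E[\psi_G\mid X])$ gives $\sigma_{G^{*}}^2-\sigma_{\mathcal{X}}^2 = q_N\big(\sigma^2(0,1)/(1-\pi(1)) - \sigma^2(1,1)/\pi(1)\big) + O(q_N\theta_N) \ge c_2\,q_N$ for $N$ large, hence $\sigma_{G^{*}}-\sigma_{\mathcal{X}} \ge c_3\,q_N$ with $c_2,c_3$ depending on $(M,\kappa,\underline{\sigma})$. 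Choosing $\theta_N = \tfrac12 z_{1-\alpha} c_3 N^{-1/2}$ makes $\Delta_{\mathcal{X}} \ge q_N\big(N^{-1/2}z_{1-\alpha}c_3 - \theta_N\big) = q_N\theta_N$; setting $q_N = (\theta_N/\eta)^{\underline{\delta}}$ with $\eta:=c_0$ both makes the margin condition hold with exponent exactly $\underline{\delta}$ (the binding case is $t\downarrow\theta_N$) and gives $q_N\asymp N^{-\underline{\delta}/2}$, so $\Delta_{\mathcal{X}} \ge \underline{C}\,N^{-(1+\underline{\delta})/2}$.

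The routine parts are the boundedness estimates and the layer-cake computation. The main obstacle is the lower-bound construction: one must verify that switching the policy on the \emph{shrinking} region $\{X=1\}$ changes the asymptotic standard deviation by an amount of exact order $q_N$ with a strictly positive constant — which requires both the two $\Var(\E[\psi_G\mid X])$ contributions to cancel to higher order and a careful choice of the two-point outcome laws — while simultaneously keeping $|\tau|$ of order $N^{-1/2}$ on that region, so that after tuning $\theta_N$ the variance-gain term strictly dominates the welfare loss and Assumptions \ref{ass:overlap}--\ref{ass:margin} hold uniformly in $N$ with the target exponent $\underline{\delta}$.
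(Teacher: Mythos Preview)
Your proposal is correct and takes a genuinely different route from the paper on both halves of the argument. For the upper bound, the paper invokes Tsybakov's polynomial minorant $W_{G^{*}} - W_G \ge C_B\,P(X\in A)^{1+1/\delta}$ (Lemma~A.7 in \citet{KitagawaTetenovSupp}) to obtain $\Delta_G \le g(P(X\in A))$ with $g(x) = C_A x - C_B x^{1+1/\delta}$, and then maximizes this concave function explicitly. You bypass Tsybakov's lemma entirely: writing $\Delta_G \le \E[(\lambda_N - |\tau(X)|)\bm 1_A]$ and optimizing over measurable $A$ directly gives $\E[(\lambda_N - |\tau(X)|)_+]$, which the layer-cake formula and the margin bound on $P(|\tau(X)|<s)$ handle in one line. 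This is more elementary and yields the same rate with a cleaner constant, at the cost of not producing the intermediate welfare-gap inequality \eqref{eq:marginmain2} that the paper uses elsewhere (Remark~\ref{rm:vanish}). For the lower bound, the paper constructs a continuous example with $X\sim U[0,1]$ and $\tau(x)\propto (x-\epsilon)^{1/\delta}$, so that the margin condition holds exactly with exponent $\delta$ and the symmetric difference $G^{*}\triangle \mathcal{X} = [0,\epsilon]$ has Lebesgue measure $\epsilon$; your binary construction with $P(X=1)=q_N\to 0$ and $|\tau(1)|=\theta_N\asymp N^{-1/2}$ achieves the same effect by putting a shrinking atom exactly at the margin, directly extending the Proposition~\ref{thm:main1} example. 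Both constructions work; the paper's continuous design has the advantage of living in a class of threshold policies (cf.\ the footnote on restricted $\mathcal{G}$), while yours is arguably simpler to verify and makes the link to Section~\ref{sec:example} more transparent.
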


Proposition \ref{thm:equivalence} demonstrates that once uniform lower bounds on $\delta$ and $\eta$ are imposed, no suboptimal policy $G$ can lead to first-order dominance in the sense of Proposition \ref{thm:main1}. The smaller the value of $\delta$, the more  $\tau(X)$ concentrates near zero, the looser the upper bound in \eqref{eq:mainresultupper}.  In the limit, $\delta=0$, which corresponds to failure of the margin condition, the lower bound in \eqref{eq:mainresultupper} recovers the first-order dominance result \eqref{eq:mainstatement}. In the absence of uniform bounds on the margin parameters, Propositions \ref{thm:main1} and \ref{thm:equivalence} imply that the first-best welfare may not be the optimal, or relevant, inferential target. The following remarks discuss testable implications of the margin condition and possible testing procedures, as well as further connections with the literature.

\begin{remark}[Testing uniqueness of the optimal policy]
\label{rm:ftest} 
Let $X$ be a discrete covariate taking $J$ distinct values with positive probabilities. Then, the conditional average treatment effect reduces to a vector $(\tau(j))_{j=1}^J$. The first-best policy is non-unique if (and only if) $\tau(j) = 0$ for some $j \in \{1, 2, \dots, J\}$. The null hypothesis 
\begin{align}
\label{eq:h0null}
H_0: \exists j:  \tau(j) = 0
\end{align}
is a union of $J$ simple hypotheses $H_{0j}: \tau(j)=0$. Then, letting $R_j$ denote the rejection region for testing $H_{0j}$, the test with a rejection region
$$
R=\cap_{j=1}^J R_j,
$$
is valid for $H_0$, although typically conservative \citep[see, e.g.,][]{Berger1997}. \qed 
\end{remark}

\begin{remark}[Testing the margin assumption]
\label{rm:vanish}
In the general case where both discrete and continuous covariates are present,   Assumption \ref{ass:margin} is no longer equivalent to uniqueness of the optimal policy. 
We describe a testable implication that we find empirically relevant in Section \ref{sec:jc}.  Let $P(G^* \triangle G)$ denote the share of people treated differently under the optimal policy $G^*$ and an alternative $G$. This share links welfare and standard deviation gaps. Specifically, the welfare gap is lower bounded as 
\begin{align}
W_{G^*} - W_{G} \geq C_1P (G^{*} \triangle G)^{1+\frac{1}{\delta}}  \label{eq:marginmain2} 
\end{align}
for $C_1 = C_1(\delta)=\eta\delta(\frac{1}{1+\delta})^{1 + \frac{1}{\delta}} > 0$ \citep{Tsybakov}. Given a lower bound $\underline{\delta}>0$ and fixing $\eta>0$, consider  a null hypothesis  $H_0: \delta \geqslant \underline{\delta}$. Since both functions $\delta \rightarrow C_1(\delta)$ and $\delta \rightarrow  c^{1+\frac{1}{\delta}}$  are increasing in $\delta$, the lower bound \eqref{eq:marginmain2} on welfare gap implies that, for any policy $G$, 
\begin{align}
 C_1(\underline{\delta}) P(G^* \triangle G)^{1 + \frac{1}{\underline{\delta}}} - (W_{G^*} - W_G) \leq 0. \label{eq:marginmain3}
\end{align}
In particular, if the welfare gap $W_{G^*} - W_G$ of some policy $G$ vanishes with sample size, the share of people treated differently under $G$ and $G^*$, must vanish, too. Existing methods from the moment inequality literature, such as \citet{CLR} and \citet{CherNeweySantos}, can then be applied to construct a test. Pursuing this formally is left for future work.\footnote{The lower bound in~\eqref{eq:marginmain2} plays a role analogous in spirit to the polynomial minorant condition used in partial identification literature, e.g., Condition~C.2 in \citet{CHT}, Condition~V in \citet{CLR}, and Assumption~4.2 in \citet{Armstrong2014}. In its general form, this condition relates the difference in the criterion function to the distance metric on the parameter of interest.  In policy learning settings, the decision set $\mathcal{G}$ is a collection of partitions of covariate space.  In both \citet{CLR} and \citet{KitagawaTetenov}, this condition is imposed to tighten convergence guarantees for the proposed estimators.  In contrast to prior work, this paper uses the (failure of) margin assumption to motivate the use of suboptimal policies for constructing lower confidence bands for welfare. } \qed 
\end{remark}

\begin{remark}[Implications for debiased inference]
\label{rm:margin}
 Propositions \ref{thm:main1} and \ref{thm:equivalence} imply that sharp bounds may not be optimal, or relevant, inferential targets in the absence of uniform margins, highlighting the tightness of this condition in the context of covariate-assisted bounds; see \citet{kallus2020assessing, kallus2022whats,kallus2022treatment, levis2023covariateassisted, SemSupp2, Semenova2024}. We expect this insight to imply the tightness of the margin condition in other settings, such as support function analysis \cite{CCMS} and algorithmic fairness \cite{liu2025}, and other policy-relevant metrics. \qed
\end{remark}

\section{Robust Testing Procedures}
\label{sec:moment}

In this section we discuss testing procedures that address the welfare-precision trade-off and remain valid regardless of the margin assumption. Let $\mathcal{G}_{\text{test}} \subseteq \mathcal{G}$ be a class of policies, which, based on economic intuition, may contain a good lower bound for the optimal welfare. We look for a LCB of the form
	\begin{equation} \label{eq:lcb_generic}
	\widehat{LCB} = \max \limits_{G \in \mathcal{G}_{\text{test}}} \left\{ \widehat{W}_G - \hat{c}_{\alpha}  \frac{\hat{\sigma}_G}{\sqrt{N}} \right\},
	\end{equation}
	where $\hat{c}_\alpha$ is as small as possible to guarantee the desired coverage. We show that such LCB naturally arise from testing moment inequalities, which allows to use a host of existing testing procedures. Our results take the form of finite-sample algebraic identities, so the coverage properties of the resulting LCBs are inherited from validity of the underlying tests. The latter relies only on the uniform CLT-type assumptions and holds regardless of the margin condition.  We refer the reader to \citet{CLR} and \citet{canay2017practical} for the details. 

\subsection{Lower Confidence Bands via Testing Moment Inequalities}
\label{sec:lcb_testing_inequalities}

Suppose $\mathcal{G}_{\text{test}}$ is finite (potentially growing with sample size). Let $\theta = W_{G^*}$ denote the parameter of interest, and consider testing
\begin{equation} \label{eq:h0}
	H_{\theta}:  W_{G} - \theta \leq 0, \text{ for all } G \in \mathcal{G}_{\text{test}}.
\end{equation}
Suppose the estimator \((\widehat W_G)_{G\in\mathcal{G}_{\text{test}}} \) for \( (W_G)_{G\in\mathcal{G}_{\text{test}}}\) satisfies
\begin{equation} \label{eq:clt}
\bigl(\sqrt{N}(\widehat W_G - W_G)\bigr)_{G\in\mathcal{G}}
\;\Rightarrow^d\;
\mathcal{N}\bigl(0,\Sigma\bigr),
\end{equation}
for a positive definite covariance matrix $\Sigma= (\Sigma_{G_1G_2})_{G_1, G_2 \in \mathcal{G}}$, and a consistent estimator $\widehat{\Sigma}$ is available. A test for \eqref{eq:h0} can then be constructed as
\begin{equation} \label{eq:test}
\hat{\phi}_N(\theta) = \bm{1} \left( \hat{T}_N(\theta) > \hat{c}_{\alpha}(\theta)  \right),
\end{equation}
with, e.g., the maximum test statistic
\begin{equation}
\widehat{T}_N(\theta) = \max \limits_{G \in \mathcal{G}_{\text{test}}} \frac{\sqrt{N}(\widehat{W}_G - \theta)}{\hat{\sigma}_{G}},
\end{equation}
where $\hat{\sigma}_{G} = (\widehat{\Sigma}_{GG})^{1/2}$ and $\hat{c}_{\alpha}(\theta)$ is suitable a critical value. A common computationally simple choice is the least-favorable critical value, corresponding to 
\begin{equation} \label{eq:lf-max}
\hat{c}_{\alpha, \max}^{\text{LF}} = \hat{Q}_{1-\alpha}\left( \max \limits_{G \in \mathcal{G}_{\text{test}}} \frac{\sqrt{N}(\widehat{W}_G - W_G)}{\hat{\sigma}_G} \right),
\end{equation}
where the quantile can be estimated using bootstrap or Gaussian approximation.

 Given the direction of the inequalities in \eqref{eq:h0}, the set of all values of $\theta$ for which the test in \eqref{eq:test} does not reject, $\{\theta \in \mathbf{R}: \hat{\phi}_N(\theta) = 0\}$, provides a LCB for $W_{G^*}$. For the least-favorable critical value, the test compares the value of a partially linear decreasing function of $\theta$ with a constant, which allows to obtain a simple closed form for the LCB.

\begin{proposition}[LCB by test inversion]
\label{prop:max}
The LCB obtained by inverting a test in \eqref{eq:test} with the least-favorable critical value in \eqref{eq:lf-max} is given by
\begin{equation} \label{eq:lcb_max_lf}
\widehat{LCB}_{\max}^{LF} =
\max_{G\in\mathcal G_{\text{\normalfont test}}}
\Bigl\{
\widehat W_G - \hat{c}_{\alpha, \max}^{LF}\frac{\widehat \sigma_G}{\sqrt{N}}\,
\Bigr\}.
\end{equation}
\end{proposition}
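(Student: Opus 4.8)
The plan is to identify the non-rejection set $\{\theta\in\mathbf{R}:\hat\phi_N(\theta)=0\}$ explicitly and read off its lower endpoint. The crucial structural observation is that the least-favorable critical value in \eqref{eq:lf-max} is a statistic that does \emph{not} depend on the hypothesized value $\theta$; write $\hat c\equiv\hat c^{LF}_{\alpha,\max}$ for brevity. Consequently, for fixed data, $\theta\mapsto\widehat T_N(\theta)=\max_{G\in\mathcal G_{\text{test}}}\sqrt N(\widehat W_G-\theta)/\hat\sigma_G$ is a maximum of finitely many strictly decreasing affine functions of $\theta$ — strictly decreasing because $\hat\sigma_G=(\widehat\Sigma_{GG})^{1/2}>0$, which is guaranteed by positive definiteness of $\widehat\Sigma$ — and is therefore continuous, strictly decreasing, and onto $\mathbf{R}$.

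First I would rewrite the non-rejection event. Since $\hat c$ is free of $\theta$, $\hat\phi_N(\theta)=0$ holds iff $\widehat T_N(\theta)\le\hat c$, iff $\sqrt N(\widehat W_G-\theta)/\hat\sigma_G\le\hat c$ for every $G\in\mathcal G_{\text{test}}$, iff $\theta\ge\widehat W_G-\hat c\,\hat\sigma_G/\sqrt N$ for every $G\in\mathcal G_{\text{test}}$. Taking the maximum over the finite set $\mathcal G_{\text{test}}$, this is in turn equivalent to $\theta\ge\max_{G\in\mathcal G_{\text{test}}}\{\widehat W_G-\hat c\,\hat\sigma_G/\sqrt N\}=\widehat{LCB}_{\max}^{LF}$. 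Hence $\{\theta:\hat\phi_N(\theta)=0\}=[\widehat{LCB}_{\max}^{LF},\infty)$, so the confidence set produced by test inversion is exactly this half-line, and the reported lower confidence band — its lower endpoint — equals $\widehat{LCB}_{\max}^{LF}$, as claimed. Equivalently, by strict monotonicity of $\widehat T_N$, the value $\widehat{LCB}_{\max}^{LF}$ is the unique root of $\widehat T_N(\theta)=\hat c$.

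There is essentially no obstacle here: the statement is a finite-sample algebraic identity. The only two points requiring care are that $\hat\sigma_G>0$ (so that multiplying through by $\hat\sigma_G/\sqrt N$ preserves the inequality direction), and that $\hat c$ is $\theta$-free (which is special to the least-favorable choice; for a $\theta$-dependent critical value one would additionally need a monotonicity or single-crossing argument to ensure the acceptance set is a half-line, as the excerpt's surrounding discussion notes). The coverage property of the band is not re-derived here but is inherited from validity of the underlying moment-inequality test of \eqref{eq:h0}, which relies only on the CLT in \eqref{eq:clt} and consistency of $\widehat\Sigma$, and holds irrespective of the margin condition.
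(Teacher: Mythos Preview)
Your argument is correct and is precisely the approach the paper has in mind: the text immediately preceding the proposition notes that with the least-favorable critical value the test compares a partially linear decreasing function of $\theta$ with a constant, yielding a closed-form LCB, and the paper does not spell out further details. Your write-up simply makes that sentence explicit.
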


Intuitively, the above procedure corresponds to constructing a candidate LCB for $W_{G^*}$ using each suboptimal policy $G \in \mathcal{G}_{\text{test}}$ separately and taking the shortest one, thus explicitly resolving the welfare-precision trade-off. The least-favorable critical value $\hat{c}_{\alpha, \max}^{LF}$ ensures that the resulting LCB has the desired coverage, but it essentially assumes that all of the moment inequalities in \eqref{eq:h0} are binding, which may be too conservative. The critival value can be reduced using moment selection procedures, such as the Generalized Moment Selection (GMS) of \citet{andrews2010inference}, or pre-testing, as in \citet{romano2014practical}. Although both procedures perform well in practice, we focus on GMS because it allows for closed-form test inversion.

The critical value for the GMS procedure is computed as follows. Define the set of inequalities that are ``close to binding,''
\[
I_{N}(\theta) = \left\{G \in \mathcal{G}_{\text{test}}: \frac{\sqrt{N}(\hat{W}_G - \theta)}{\hat{\sigma}_G}  > -\kappa_{N} \right\},
\]
where $\kappa_N > 0$ is a sequence of tuning parameters such that $\kappa_N \to \infty$ and $\kappa_N / \sqrt{N} \to 0$, for example, $\kappa_N = \sqrt{\log N}$. Then, the GMS critical value is
\begin{equation} \label{eq:crit_val_gms}
	\hat{c}_{\alpha, \max}^{GMS}(\theta) = \hat{Q}_{1-\alpha}\left( \max \limits_{G \in I_N(\theta)}  \frac{\sqrt{N}(\widehat{W}_G - W_G)}{\hat{\sigma}_G} \right),
\end{equation}
where the quantile can be estimated using bootstrap or Gaussian approximation. As $\theta$ increases, the set $I_N(\theta)$ shrinks, so $\hat{c}_{\alpha, \max}^{GMS}(\theta)$ is a decreasing step-function of $\theta$. Thus, the test in \eqref{eq:test} with the GMS critical value compares a partially linear decreasing function of $\theta$ with a step-function. Since there may be multiple intersections, the confidence region obtained by test inversion may not be convex, although it can be shown that the probability of such an event approaches zero as $N$ increases. In the statement below, we conservatively define the LCB starting from the lowest intersection point.

\begin{proposition}[LCB by test inversion with GMS] \label{prop:max_gms} The LCB obtained by inverting the test in \eqref{eq:test} with the critical value \eqref{eq:crit_val_gms} can be computed as follows. For $j \in \{1, \dots, |\mathcal{G}_{\text{test}}|\}$, let $t^{(j)}$ denote the $j$-th largest value among $\widehat{W}_G + \kappa_N \widehat{\sigma}_G / \sqrt{N}$, and set $t^{(|\mathcal{G}_{\text{test}}| + 1)} = -\infty$. Let $I^{(j)} \subseteq \mathcal{G}_{\text{test}}$ collect the policies $G$ corresponding to $t^{(1)}, \dots, t^{(j)}$, and $\widehat{c}_{\alpha}^{(j)}$ be computed as in \eqref{eq:crit_val_gms} with $I^{(j)}$ instead of $I_N(\theta)$. Denote $ \hat{\theta}^{(j)} = \max_{G \in \mathcal{G}}(\widehat{W}_G - \widehat{c}^{(j)}_{\alpha} \widehat{\sigma}_G /\sqrt{N}).$ Then,
\begin{align}
\label{eq:lcb_max_gms}
\widehat{LCB}_{\max}^{GMS} = \min\{\theta^{(j)}: t^{(j)} \geqslant \hat{\theta}^{(j)} > t^{(j+1)}\}.
\end{align}
\end{proposition}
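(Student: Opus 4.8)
\textbf{Proof proposal for Proposition \ref{prop:max_gms}.}

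The plan is to establish \eqref{eq:lcb_max_gms} as a finite-sample algebraic identity between two subsets of $\mathbf{R}$: the acceptance region $A = \{\theta : \hat\phi_N(\theta) = 0\}$ of the GMS test (or rather, its infimum-anchored version), and the right-hand side of \eqref{eq:lcb_max_gms}. The first step is to make explicit the two functions whose comparison defines the test. On one side, write $g(\theta) = \widehat{T}_N(\theta) = \max_{G \in \mathcal{G}_{\text{test}}} \sqrt{N}(\widehat{W}_G - \theta)/\hat\sigma_G$, which is continuous, strictly decreasing (it is a max of affine decreasing functions with strictly negative slopes $-\sqrt{N}/\hat\sigma_G$), and piecewise linear in $\theta$. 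On the other side is $\hat{c}^{GMS}_{\alpha,\max}(\theta)$, which I would argue is a \emph{left-continuous decreasing step function} of $\theta$: the set $I_N(\theta) = \{G : \sqrt{N}(\widehat{W}_G - \theta)/\hat\sigma_G > -\kappa_N\} = \{G : \widehat{W}_G + \kappa_N\hat\sigma_G/\sqrt{N} > \theta\}$ loses elements as $\theta$ crosses the thresholds $\widehat{W}_G + \kappa_N\hat\sigma_G/\sqrt{N}$, and since the quantile $\hat Q_{1-\alpha}(\max_{G \in I}\cdot)$ is monotone nondecreasing in the index set $I$, the critical value drops at each such threshold. The ordered thresholds are exactly the $t^{(1)} \geq \cdots \geq t^{(|\mathcal{G}_{\text{test}}|)}$ in the statement, and on the half-open interval $(t^{(j+1)}, t^{(j)}]$ the active set is $I^{(j)}$ and the critical value equals $\hat c^{(j)}_\alpha$. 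Hence on that interval the test accepts iff $g(\theta) \leq \hat c^{(j)}_\alpha$, i.e. iff $\theta \geq \max_{G}(\widehat{W}_G - \hat c^{(j)}_\alpha \hat\sigma_G/\sqrt{N}) = \hat\theta^{(j)}$.

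The second step assembles these pieces. On interval $j$ the acceptance set restricted to that interval is $[\hat\theta^{(j)}, \infty) \cap (t^{(j+1)}, t^{(j)}]$, which is nonempty precisely when $\hat\theta^{(j)} \leq t^{(j)}$, and in that case its infimum is $\max\{\hat\theta^{(j)}, t^{(j+1)}\}$ — but $\hat\theta^{(j)} > t^{(j+1)}$ is exactly the second half of the condition appearing in \eqref{eq:lcb_max_gms} (when it fails, the interval is fully accepted and the relevant infimum sits in a lower interval, which is why one scans downward). Taking the minimum over all $j$ for which $t^{(j)} \geq \hat\theta^{(j)} > t^{(j+1)}$ therefore yields the infimum of the acceptance region among those $\theta$ that are "first" accepted in their own interval; I would then check that this coincides with $\inf A$ under the convention of reporting the lowest intersection point, which is how the proposition defines the LCB. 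A small monotonicity observation closes this: because $g$ is strictly decreasing and $\hat c^{GMS}_\alpha(\cdot)$ is decreasing in steps, once $\theta$ is large enough to be accepted in its interval it remains accepted for all larger $\theta$ in that interval, and the boundary case $t^{(j)} = \hat\theta^{(j)}$ is handled by the weak inequality in the definition. (The non-convexity caveat in the text is about what happens at the \emph{tops} of lower intervals; since we only take the minimum, it does not affect the identity, only the interpretation.)

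The third step is bookkeeping on the indexing: verifying that $I^{(j)}$ as defined ("policies corresponding to $t^{(1)},\dots,t^{(j)}$") equals $I_N(\theta)$ for every $\theta \in (t^{(j+1)}, t^{(j)}]$, including correct tie-handling when several $\widehat{W}_G + \kappa_N\hat\sigma_G/\sqrt{N}$ coincide — in that case the step function simply has one fewer distinct level and the corresponding intervals are empty, so the claimed minimum is unaffected. I would also note the edge cases $j = |\mathcal{G}_{\text{test}}|$ (where $t^{(j+1)} = -\infty$, so $I^{(j)} = \mathcal{G}_{\text{test}}$ and the condition $\hat\theta^{(j)} > -\infty$ is automatic, guaranteeing the set in \eqref{eq:lcb_max_gms} is nonempty) and $\theta > t^{(1)}$ (where $I_N(\theta) = \varnothing$; one should specify that $\hat c_\alpha$ of an empty max is, say, $-\infty$ or that acceptance fails there, matching the fact that such large $\theta$ are never in the LCB).

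\textbf{Main obstacle.} The routine parts are the monotonicity of $g$ and of the quantile-in-index-set map. The delicate part is the careful alignment of half-open intervals with left/right continuity: one must pin down whether the critical value at a threshold $t^{(j)}$ uses $I^{(j)}$ or $I^{(j-1)}$ (the strict inequality in the definition of $I_N(\theta)$ forces $I^{(j)}$, since at $\theta = t^{(j)}$ the $j$-th policy has $\sqrt{N}(\widehat W_G-\theta)/\hat\sigma_G = -\kappa_N$, not $> -\kappa_N$), and then show that this choice makes the "lowest intersection" well-defined and equal to the displayed minimum. Getting these endpoint conventions exactly consistent between the test definition, the function $g$, and the scan in \eqref{eq:lcb_max_gms} is the step most prone to off-by-one errors, so that is where I would spend the most care.
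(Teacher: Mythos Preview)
Your proposal is correct and follows essentially the same approach as the paper: both recognize $\hat c^{GMS}_{\alpha,\max}(\theta)$ as a decreasing step function on the intervals delimited by the $t^{(j)}$, observe that $\widehat T_N(\theta)$ is piecewise linear and strictly decreasing, and compute the LCB as the lowest intersection by solving $\widehat T_N(\theta)=\hat c^{(j)}_\alpha$ on each step and checking which solutions land in their own interval. The paper's proof is in fact briefer than yours and does not work through the endpoint conventions you flag (where, incidentally, you have a small slip: since the strict inequality excludes $G_j$ at $\theta=t^{(j)}$, one gets $I_N(t^{(j)})=I^{(j-1)}$, so the step function is right- rather than left-continuous---but this is a measure-zero issue that the paper glosses over as well).
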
 

The LCB in \eqref{eq:lcb_max_gms} uses a weakly smaller critical value than the LCB in \eqref{eq:lcb_max_lf}, so it is always shorter. Yet, the two LCBs are uniformly valid over the same set of distributions. Taken together, self-normalization of the test statistic and a moment selection procedure allow to resolve the welfare-precision trade-off while ensuring that the resulting LCB is robust to violations of the margin condition.

\subsection{Lower Confidence Bands via Intersection Bounds}

Inference methods for intersection-bounds-type parameters, such as $\max_{G \in \mathcal{G}}W_G$, have been introduced by \citet{CLR} (CLR for short). The authors pointed out that inference based on the plug-in estimator $\max_{G \in \mathcal{G}}\widehat{W}_G$ may be distorted for two reasons: upward bias and large differences in precision of estimates $\widehat{W}_G$ across $G \in \mathcal{G}$. To address these issues, they introduced a ``precision corrected'' LCB of the form \eqref{eq:lcb_generic} and proposed a different moment selection device, tailoring the analysis to an infinite number of intersection parameters (i.e., infinite $\mathcal{G}$). In what follows, we derive a new duality result between the procedure of CLR and test inversion in the spirit of Section \ref{sec:lcb_testing_inequalities} and use it to obtain a computationally simpler LCB.

In the preceding section, to find a good lower bound on $W_{G^*}$, we restricted attention to policies in the test class $\mathcal{G}_{\text{test}} \subseteq \mathcal{G}$. A better lower bound may potentially be obtained by taking convex combinations of $(W_G)_{G \in \mathcal{G}_{\text{test}}}$, which is equivalent to randomizing over $G \in \mathcal{G}_{\text{test}}$. Specifically, let $\Lambda = \bigl\{\lambda\in \mathbf{R}^{|\mathcal{G}_{\text{test}}|}_+:\,\mathbf{1}'\lambda=1\bigr\}$, where $\mathbf{1}=(1,\ldots,1)'\in \mathbf{R}^{|\mathcal{G}_{\text{test}}|}$, denote the probability simplex, $W_{\text{test}} = (W_G)_{G \in \mathcal{G}_{\text{test}}}$ collect the test policies into a finite vector, and $\widehat{W}_{\text{test}} = (\widehat{W}_G)_{G \in \mathcal{G}_{\text{test}}}$ denote the corresponding estimator vector. Each $\lambda \in \Lambda$ yields a lower bound  $\lambda'W_{\text{test}}\leq W_{G^*}$ for the optimal welfare. Therefore, following CLR, we look for a LCB of the form 
\begin{equation} \label{eq:lcbqlrmain}
\widehat{LCB}_{\text{mix}} = \max_{\lambda\in\Lambda} \left\{ \lambda^{\prime} \widehat{W}_{\text{test}} - \hat{c}_{\alpha} \frac{\sqrt{\lambda'\,\widehat\Sigma\,\lambda}}{\sqrt{N}} \right\},
\end{equation}
where the critical value $\hat{c}_{\alpha}$ is chosen to ensure correct coverage. 

A version of CLR's procedure calibrates $\hat{c}_{\alpha}$ by approximating the supremum of the self-normalized Gaussian process $(\sqrt{N}(\lambda'\widehat{W}_{\text{test}} - \lambda'W_{\text{test}}) / (\lambda'\widehat{\Sigma}\lambda)^{1/2} )_{\lambda \in \Lambda}$ in simulations, which can be computationally heavy. We replace that step with a finite-dimensional convex program using convex duality. We show that for any vector $T$ and positive definite matrix $\Sigma$,
\begin{equation}
\label{eq:gp-to-convex}
\max_{\lambda\in\Lambda}\frac{\lambda'T}{\sqrt{\lambda'\Sigma\,\lambda}} =
\Big(\,\min_{t \in \mathbf{R}^{|\mathcal{G}_{\text{test}}|}_{-} }\ (T-t)'\,\Sigma^{-1}\,(T-t)\,\Big)^{1/2},
\end{equation}
Consequently, an LCB of the form \eqref{eq:lcbqlrmain} actually arises from inverting a test for \eqref{eq:h0} using the so-called Quasi-Likelihood-Ratio (QLR) test statistic,
\begin{equation} \label{eq:qlr_test_stat}
\hat{T}_N(\theta) = \min \limits_{t \leq \mathbf{0}}\left(\sqrt{N}(\widehat{W}_{\text{test}} - \theta \bm{1}) - t))'\widehat{\Sigma}^{-1}(\sqrt{N}(\widehat{W}_{\text{test}} - \theta \bm{1}) - t)\right)
\end{equation}
also considered in \citet{andrews2010inference}. The least-favorable critical value,
\begin{equation} \label{eq:qlr_crit_val}
\hat{c}_{\alpha, \text{QLR}}^{LF} = \hat{Q}_{1-\alpha} \left( \min \limits_{t \leq \mathbf{0}}\left(\sqrt{N}(\widehat{W}_{\text{test}} - W_{\text{test}}) - t))'\widehat{\Sigma}^{-1}(\sqrt{N}(\widehat{W}_{\text{test}} - W_{\text{test}}) - t)\right) \right),
\end{equation}
can be estimated using bootstrap or Gaussian approximation and requires solving one convex program per simulation. Our final Proposition summarizes this discussion. 

\begin{proposition}[$LCB$ by test inversion with QLR]
\label{prop:qlr}
The LCB obtained by inverting a test in \eqref{eq:test} with the QLR test statistic \eqref{eq:qlr_test_stat} and least-favorable critical value \eqref{eq:qlr_crit_val}  takes the form
\begin{equation} 
\widehat{LCB}_{\text{\normalfont mix}} = \max_{\lambda\in\Lambda} \left\{ \lambda^{\prime} \widehat{W}_{\text{\normalfont test}} - (\hat{c}_{\alpha, \text{QLR}}^{LF})^{1/2} \frac{\sqrt{\lambda'\,\widehat\Sigma\,\lambda}}{\sqrt{N}} \right\}.
\end{equation}
\end{proposition}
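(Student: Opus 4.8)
The plan is to compute the confidence set produced by test inversion, namely $\{\theta\in\mathbf{R}:\hat{\phi}_N(\theta)=0\}=\{\theta:\hat{T}_N(\theta)\le \hat{c}_{\alpha,\text{QLR}}^{LF}\}$ with $\hat{T}_N$ the QLR statistic in \eqref{eq:qlr_test_stat}, and to show that this set equals the half-line $[\widehat{LCB}_{\text{\normalfont mix}},\infty)$, so that its left endpoint is the reported LCB. The single substantive input is the convex-duality identity \eqref{eq:gp-to-convex}; given it, the remainder is algebra plus a short case split, and the argument parallels that of Proposition \ref{prop:max}, with the randomization over the simplex $\Lambda$ absorbed by the identity in place of the bare maximum over $\mathcal{G}_{\text{test}}$. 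In contrast to the GMS critical value, the least-favorable critical value \eqref{eq:qlr_crit_val} does not depend on $\theta$, so the inversion yields a genuine half-line and the non-convexity caveat of Proposition \ref{prop:max_gms} does not arise.

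First I would apply \eqref{eq:gp-to-convex} with $T=\sqrt{N}(\widehat{W}_{\text{test}}-\theta\mathbf{1})$ and $\Sigma=\widehat{\Sigma}$ (positive definite, as in \eqref{eq:clt}), and use $\mathbf{1}'\lambda=1$ on $\Lambda$, to get
\[ \hat{T}_N(\theta)=\left(\max_{\lambda\in\Lambda}\frac{\sqrt{N}\,(\lambda'\widehat{W}_{\text{test}}-\theta)}{\sqrt{\lambda'\widehat{\Sigma}\lambda}}\right)^{2}\qquad\text{whenever}\quad \theta\le\max_{G\in\mathcal{G}_{\text{test}}}\widehat{W}_G. \]
Since both sides are nonnegative, $\hat{T}_N(\theta)\le \hat{c}_{\alpha,\text{QLR}}^{LF}$ is equivalent to $\max_{\lambda\in\Lambda}\sqrt{N}(\lambda'\widehat{W}_{\text{test}}-\theta)/\sqrt{\lambda'\widehat{\Sigma}\lambda}\le(\hat{c}_{\alpha,\text{QLR}}^{LF})^{1/2}$, hence (dividing through by the positive $\sqrt{\lambda'\widehat{\Sigma}\lambda}$ and rearranging) to $\theta\ge\lambda'\widehat{W}_{\text{test}}-(\hat{c}_{\alpha,\text{QLR}}^{LF})^{1/2}\sqrt{\lambda'\widehat{\Sigma}\lambda}/\sqrt{N}$ for all $\lambda\in\Lambda$, that is, $\theta\ge\widehat{LCB}_{\text{\normalfont mix}}$. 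This identifies the non-rejection set within $\{\theta\le\max_G\widehat{W}_G\}$ as $[\widehat{LCB}_{\text{\normalfont mix}},\max_G\widehat{W}_G]$, which is nonempty because $\widehat{LCB}_{\text{\normalfont mix}}\le\max_{\lambda\in\Lambda}\lambda'\widehat{W}_{\text{test}}=\max_G\widehat{W}_G$. For $\theta>\max_G\widehat{W}_G$, the choice $t=\sqrt{N}(\widehat{W}_{\text{test}}-\theta\mathbf{1})\le\mathbf{0}$ is feasible in the minimization defining \eqref{eq:qlr_test_stat}, so $\hat{T}_N(\theta)=0\le \hat{c}_{\alpha,\text{QLR}}^{LF}$ and the test does not reject; and every such $\theta$ exceeds $\widehat{LCB}_{\text{\normalfont mix}}$. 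Taking the union yields $\{\theta:\hat{\phi}_N(\theta)=0\}=[\widehat{LCB}_{\text{\normalfont mix}},\infty)$, and the claimed formula follows.

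If one prefers to prove \eqref{eq:gp-to-convex} rather than cite it, I would argue as follows. By positive homogeneity of degree zero in $\lambda$, $\max_{\lambda\in\Lambda}\lambda'T/\sqrt{\lambda'\Sigma\lambda}=\max\{\lambda'T:\lambda\ge\mathbf{0},\ \lambda'\Sigma\lambda\le1\}$ whenever this quantity is nonnegative; when $T\le\mathbf{0}$ the right-hand side of \eqref{eq:gp-to-convex} is $0$ (take $t=T$), which is the trivial case already handled. Whitening via $\mu=\Sigma^{1/2}\lambda$ and $b=\Sigma^{-1/2}T$ recasts the program as $\max\{\langle\mu,b\rangle:\mu\in C,\ \|\mu\|\le1\}$ over the closed convex cone $C=\Sigma^{1/2}\mathbf{R}^{|\mathcal{G}_{\text{test}}|}_+$, and the support-function identity $\max\{\langle\mu,b\rangle:\mu\in C,\ \|\mu\|\le1\}=\operatorname{dist}(b,C^{\circ})$ --- a direct consequence of the Moreau decomposition $b=\Pi_C b+\Pi_{C^{\circ}}b$ with $\langle\Pi_C b,\Pi_{C^{\circ}}b\rangle=0$ --- gives the result once one computes $C^{\circ}=\Sigma^{-1/2}\{t\le\mathbf{0}\}$ and undoes the change of variables, obtaining $\operatorname{dist}(b,C^{\circ})^{2}=\min_{t\le\mathbf{0}}(T-t)'\Sigma^{-1}(T-t)$.

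The main obstacle --- if \eqref{eq:gp-to-convex} is granted --- is the bookkeeping around the two scales: the QLR statistic and its least-favorable critical value are squared-scale and nonnegative by construction, whereas the self-normalized maximum appearing in $\widehat{LCB}_{\text{\normalfont mix}}$ is linear-scale and may be negative, so \eqref{eq:gp-to-convex} bridges them only on the region $\{T\not\le\mathbf{0}\}$. This is why the final formula carries $(\hat{c}_{\alpha,\text{QLR}}^{LF})^{1/2}$ rather than $\hat{c}_{\alpha,\text{QLR}}^{LF}$, and why the behaviour for large $\theta$ must be checked separately and reconciled with the half-line. If instead \eqref{eq:gp-to-convex} is to be established, then its proof --- the conic-duality step above --- is the real work, though it is classical.
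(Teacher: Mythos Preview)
Your proposal is correct. The paper's proof takes a somewhat different but equivalent route: rather than invoking the identity \eqref{eq:gp-to-convex} as a black box and then splitting into the two cases $\theta\le\max_G\widehat W_G$ and $\theta>\max_G\widehat W_G$, the paper applies Lagrangian duality directly to the constrained minimization defining $\hat T_N(\theta)$. It computes the dual objective $g(u)=(X-\theta\mathbf 1)'u-\tfrac14 u'\Sigma u$ over $u\ge 0$, rewrites the non-rejection event $\max_{u\ge 0}g(u)\le c$ as the intersection over all $u\ne 0$ of the affine-in-$\theta$ constraints $\theta\ge(\mathbf 1'u)^{-1}\{X'u-\tfrac14 u'\Sigma u-c\}$ (the case $u=0$ being trivial), reparametrizes $u=\gamma\lambda$ with $\lambda\in\Lambda$ and $\gamma>0$, and concentrates out $\gamma$ to obtain the stated formula. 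Because this rearrangement is valid for every $u\ne 0$ regardless of the sign of $X'u-\theta\mathbf 1'u$, the paper's argument works uniformly in $\theta$ and never needs your separate treatment of large $\theta$.

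What each approach buys: your route cleanly separates the duality identity \eqref{eq:gp-to-convex} from its application, and your Moreau-decomposition proof of the identity is more conceptual --- it is just the support-function formula $\max\{\langle\mu,b\rangle:\mu\in C,\|\mu\|\le 1\}=\|\Pi_C b\|=\operatorname{dist}(b,C^\circ)$ after whitening. The paper's Lagrangian route is more self-contained and sidesteps the sign bookkeeping entirely, at the cost of a small algebraic trick (the $\gamma$-concentration, which reproduces the square root of $c$). The substantive content --- the convex-duality link between the QLR quadratic form and the self-normalized maximum over the simplex --- is the same in both.
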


In practice, \(\widehat{LCB}_{\max}^{LF}\) in \eqref{eq:lcb_max_lf} (and its GMS version \eqref{eq:lcb_max_gms}) are computationally simpler and employ a less conservative critical value than \(\widehat{LCB}_{\text{mix}}\). However, \(\widehat{LCB}_{\text{mix}}\)  involves searching over all convex mixtures of test policies  which creates more scope to trade off mean welfare against precision. As discussed in Example 4.1 in \citet{canay2017practical}, both tests are admissible, so the corresponding LCBs cannot generally be ranked. Depending on the underlying DGP, either of the LCBs may be tighter. 

\section{Empirical Application}
\label{sec:jc}

To illustrate the welfare-precision trade-off in practice and showcase the proposed procedures, we revisit the National Job Training Partnership Act (JTPA) study, considered in 
\citet{HeckmanIchimuraTodd} and \citet{AbadieAngrist} and recently revisited in the context of policy learning by
\citet{KitagawaTetenov}, \citet{MbakopTabord}, and \citet{AtheyWager2}, among others.  A detailed description of the study is available in \citet{Bloom1997}.

 The study randomized whether applicants would be eligible to receive job training and related services for a period of eighteen months. The treatment $D$ is the indicator of program eligibility. The outcome $Y$ is the applicant's cumulative earnings thirty months after assignment. Two baseline covariates $X=(PreEarn, Educ)$ include pre-program earnings (in USD) and years of education. By design, unconditional independence holds,
$$
(Y(1), Y(0), X) \perp D, 
$$
so the first-best welfare and the corresponding welfare gain are identified in each of the models $(D,Y)$, $(PreEarn, D,Y)$, $(Educ, D,Y)$ and  $(X,D,Y)$. This fact allows us to compare the estimated optimal welfare gains and corresponding LCBs across the models and highlight connections with our theoretical results.

Table \ref{tab:jtpa} presents the estimates and LCBs for the welfare gain based on first-best policy rules in three different policy classes: no covariates (Row 1), only $PreEarn$ (Rows 2--3), and both covariates $X$ (Row 4). The welfare gain from treating everyone (Row 1) corresponds to the Average Treatment Effect. It provides a robust lower bound for the optimal welfare gain based on more complex policy classes, so we use it as a reference point. In Rows 2--3, given that $PreEarn$ is continuously distributed and the margin assumption is plausible, we adopt the cross-fitted efficient-score estimator. We consider estimating the CATE function of $PreEarn$ via series regression (Row 2) and random forest (Row 3). To estimate the propensity score, we bin $PreEarn$ into five cells of similar size and use cell-specific averages as an input into the regression adjustment estimator of the form \eqref{eq:mdx}.

First, in the full model $(X, D, Y)$, we find that the margin assumption likely fails. For eleven out of twelve education groups, the CATE is not significant at the $5\%$ level, so ties among the first-best treatment rules based on $Educ$ are very likely. Although the continuous covariate $PreEarn$ may alleviate the concern, violation of margin assumption can still be detected based on the heuristic in Remark \ref{rm:vanish}. The estimated welfare gap (Row 4, Column 4) is negative yet 23\% of individuals would be treated differently than under the optimal policy (Row 4, Column 1), so the inequality \eqref{eq:marginmain2} is violated in-sample. To ensure validity of the reported LCB, we do not cross-fit. Using only two-thirds of the sample to compute the point estimate and its $95\%$ LCB incurs substantial efficiency loss, resulting in the lowest LCB in the Table. 

Second, in the model $(PreEarn, D,Y)$, we do not detect sufficient heterogeneity to warrant personalized treatment assignment. Comparing Rows 2--3 with Row 1 yields welfare gaps of $-246$ and $-220$, relative to treating everyone. Since the estimated sign is negative,\footnote{ The estimated sign is negative due to the use of the efficient/doubly robust estimators \eqref{eq:dr}, which are not necessarily ordered in-sample.} the true gap is likely of the order sampling error. Moreover, the LCB in Row 1 exceeds those in Rows 2--3 by $40\%$ and $28\%$, respectively. We attribute these findings to potential biases in the first-stage estimators of regression functions and/or lack of heterogeneity in CATE function of $PreEarn$.

Next, we implement the LCBs proposed in Section \ref{sec:moment}, choosing the test policies based on education level. We expect the treatment effects to be non-increasing in education level, with possible jumps at graduation years, $Educ=12$ and $Educ=16$. Thus, we limit the focus on cutoff policies of the form $\{Educ \leq C\}$. In particular, the policy $\{Educ \leq 11\}$ corresponds to treating only those who did not graduate from high-school (37.3\% of the sample); $\{Educ \leq 12\}$ adds those who graduated from high-school but did not attend college (80.0\% of the sample); $\{Educ \leq 15\}$ adds those who attended but did not graduate from college (95.9\% of the sample); $\{Educ \leq 16\}$ adds college graduates (98.7\% of the sample); and $\{Educ \leq 18\}$ corresponds to treating everyone. 

Table \ref{tab:momineq} presents LCBs obtained with the maximum test statistic and different test sets $\mathcal{G}_{\text{test}}$ determined by the cutoffs. In Row 1, the cutoff set corresponds to those who attended but did not graduate from high school and college, as well as everyone in the sample; and Row 2 includes all possible cutoffs. The first-best policy in both classes is $\{Educ \leq 15\}$ with the estimated welfare gain of 1440.25 USD, which exceeds all point estimates in Table \ref{tab:jtpa}. For the first test class, the least-favorable and GMS confidence bands coincide and exceed all of the LCBs in Table \ref{tab:jtpa}. The second test class contains policies that are far from optimal and thus provide loose lower bounds. As a result, the least-favorable test is conservative, while GMS leads to a tighter LCB.

\begin{table}

   \captionof{table}{ Welfare Gain Per Capita: Estimates and Lower Confidence Bands  }
 \label{tab:jtpa}
        \centering 
\begin{tabularx}{\textwidth}{@{}cccccc@{}}
\toprule
Treatment Rule  &  \; \makecell{ Treated \\ Share  } \;  &   \makecell{Welfare Gain \\ (s.e.) } & \makecell{$95\%$ LCB } & \makecell{Welfare Gap \\(USD)} &   \makecell{ Relative LCB \\ Gap $(\%)$ } \\ \midrule 
 Treat Everyone  &  1.00    & \makecell{ 1289.66 \\[1mm] (347.82) } & 717.52   &  -- & -- \\[5mm]
\makecell{Series Regression \\[1mm] $\sum_{j=1}^4 (PreEarn)^j$  }   &   0.992   &  \makecell{1043.22 \\[1mm] (394.67)  } &  394.03  &  -246.44   & $45\%$ \\[5mm]
    
\makecell{Random Forest \\[1mm] $(PreEarn)$ }  &  0.92  & \makecell{ 1069.50  \\[1mm] (335.24) }  & 518.06   & -220.16  & $28\%$  \\[5mm]
            
\makecell{ Random Forest \\[1mm]  $(PreEarn+Educ)$ } &  0.77 & \makecell{ 996.43 \\[1mm] (393.99) }& 348.31 & -293.23  & $51\%$   \\[4mm]
\bottomrule
\end{tabularx}

\caption*{
\textit{Notes:} The outcome variable is 30-Month Post-Program Cumulative Earnings in USD.  Welfare gain is defined in \eqref{eq:welfaregain}.  Row 1: Average Treatment Effect;  Rows 2--3: Welfare gain based on the policy $G = \bm{1}\{ CATE (PreEarn) \geq 0 \}$, where CATE is estimated via series regression or random forest.  Row 4: Sample-split welfare gain based on a plug-in treatment rule estimated via random forest. The $95\%$ LCBs are given by $W^{\text{gain}}-1.645 s.e.(W^{\text{gain}})$. The Welfare Gap is $W^{\text{gain}}_j-ATE$, where $ATE = 1289.66$ (Row 1) and $W^{gain}_j$ are in Rows $j = 2, 3, 4$.  The relative LCB  gap is defined as $100 (1-{LCB}_j/{LCB}_{ATE}) \%$, where ${LCB}_{ATE}=717.52$ (Row 1) and ${LCB}_j$ is in rows $j \in \{2,3,4\}$. The sample  $(N=9,223)$ is the same as in \citet{KitagawaTetenov}.  See text for further details. }
\end{table}

\newcolumntype{Y}{>{\centering\arraybackslash}X}
\begin{table}
   \captionof{table}{ $95\%$ LCB for Welfare Gain   }
 \label{tab:momineq}
\centering 
\begin{tabularx}{\textwidth}{@{}YYY@{}}
    \toprule
   Cutoffs for Test Policies  &  Least-Favorable &  GMS   \\
    \midrule  
 Cutoff $\in \{11, 15, 18\}$ &  783.28 &   783.28 \\[4mm]
Cutoff $\in \{7, 8, \dots,  18\}$ &   649.53 &  724.26 \\
    \bottomrule
\end{tabularx}

\caption*{
\textit{Notes:} The table reports LCBs based on two different test policy classes of the form $\mathcal{G}_{\text{test}} = \{ Educ \leq  C: C \in \mathcal{C}\}$ with a set of cutoffs $\mathcal{C}$ listed above. The policy $\{Educ \leq 18\}$ corresponds to treating everyone. The Generalized Moment Selection (GMS) procedure is from \citet{andrews2010inference}. The critical values $\widehat {c}_{1-\alpha}$ are based on a Gaussian approximation with $10^5$ simulation draws. The sample  $(N=9,223)$ is the same as in \citet{KitagawaTetenov}. See text for further details. }

\end{table}

\section{Conclusion}
\label{sec:concl}
In this paper, we addressed the question of reporting a Lower Confidence Band on the optimal welfare in a policy learning problem. First, we documented the trade-off between welfare and precision and showed that it can be first-order. Second, we connected the first-order trade-off to the lack of uniformity in the margin condition of \citet{MammenTsybakov, Tsybakov}. Finally, we proposed procedures for reporting Lower Confidence Bands that address the trade-off and remain valid regardless of the margin condition.

\bibliographystyle{chicagoa}
\bibliography{my_new_bibtex}

\newpage 

\appendix

\section{Proofs for Sections 3--4} \label{app:proofs}

Section \ref{sec:LvdL} contains auxiliary statements and the proof of  \eqref{eq:mainstatement}.   The proof of \eqref{eq:mse} is given in  Section \ref{sec:proof}. Section \ref{sec:sec2} contains the proof of Proposition \ref{thm:equivalence}.

\subsection{Auxiliary statements}
\label{sec:LvdL}

The first Lemma is Theorem 1 in \citet{LuedtkeLaan}. 
\begin{lemma}[Efficiency Influence Function for the first-best welfare]
\label{lem:LvdL}
Suppose Assumption \ref{ass:overlap} holds and $P(\tau(X)=0)=0$. Then, the first-best welfare  $\E[\max (m(1,X), m(0,X))]$ is pathwise differentiable with efficient influence function 
\begin{align}
\label{eq:efficient}
\psi^{*}(Z) &= \bigg(m(1,X) + \dfrac{D}{\pi(X)} (Y - m(1,X)) \bigg)  \bm{1}\{ \tau(X)>0 \} \\
&+ \bigg(m(0,X) +  \dfrac{1-D}{1-\pi(X)} (Y - m(0,X)) \bigg) \bm{1}\{ \tau(X)<0 \}. \nonumber
\end{align}
\end{lemma}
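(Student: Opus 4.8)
\textbf{Proof plan for Lemma \ref{lem:LvdL}.}

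The statement is a restatement of Theorem 1 in \citet{LuedtkeLaan}, so the plan is to reconstruct the argument in the present notation. The key observation is that under the assumption $P(\tau(X)=0)=0$, the first-best welfare can be written as a smooth functional of the nuisance parameters, and the non-smoothness of the map $(a,b)\mapsto \max(a,b)$ at the diagonal is ``washed out'' in expectation. Concretely, write $W_{G^*} = \E[\max(m(1,X), m(0,X))] = \E[m(0,X) + \tau(X)_+]$, where $\tau(X)_+ = \max(\tau(X),0) = \tau(X)\bm{1}\{\tau(X)>0\}$. I would first argue that along any smooth parametric submodel $t\mapsto P_t$ passing through $P$ at $t=0$, the map $t\mapsto \E_{P_t}[\max(m_t(1,X), m_t(0,X))]$ is differentiable at $t=0$, and that its derivative can be computed by differentiating ``through'' the indicator, i.e.\ treating $\bm{1}\{\tau(X)>0\}$ as locally constant. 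The justification for this is that the contribution of the region where $\tau(X)$ is near zero is negligible: since $P(\tau(X)=0)=0$, the set $\{|\tau(X)| \le \varepsilon\}$ has vanishing probability as $\varepsilon\to 0$, and the integrand $\max(m(1,X),m(0,X))$ is Lipschitz in $(m(1,X),m(0,X))$ with the ``kink'' confined to this shrinking set, so the pathwise derivative equals what one gets by formally differentiating with the indicator frozen.

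Next, I would assemble the efficient influence function from its two ingredients. The welfare can be decomposed as $W_{G^*} = \E[m(0,X)\bm{1}\{\tau(X)<0\}] + \E[m(1,X)\bm{1}\{\tau(X)>0\}]$ (using $P(\tau(X)=0)=0$ to drop the boundary). On the event $\{\tau(X)>0\}$ the relevant functional is $\E[m(1,X)\bm{1}\{\tau(X)>0\}]$; holding the set $G=\{\tau(X)>0\}$ fixed, this is exactly the mean-outcome-under-treatment functional restricted to $G$, whose efficient influence function is the standard doubly robust/AIPW score $\big(m(1,X) + \frac{D}{\pi(X)}(Y-m(1,X))\big)\bm{1}\{X\in G\} - \E[m(1,X)\bm{1}\{X\in G\}]$ \citep{Robins,Hahn98}. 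Symmetrically for $\{\tau(X)<0\}$ with the untreated score. Summing the two (centered) contributions and recentering gives precisely $\psi^*(Z)$ in \eqref{eq:efficient}. I would then verify directly that $\E[\psi^*(Z)] = W_{G^*}$ and that $\psi^*$ lies in the tangent space (it is a sum of scores of the form $a(X)(D-\pi(X))$, $b(D,X)(Y-m(D,X))$, and a function of $X$ alone), and that $\E[\psi^*(Z)\,\cdot\, S]$ reproduces the pathwise derivative for every score $S$ in the tangent space; this identifies $\psi^*$ as \emph{the} efficient influence function by uniqueness of the projection onto the tangent space.

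The main obstacle is making the differentiation-through-the-indicator step rigorous, i.e.\ showing the pathwise derivative of $\E_{P_t}[\tau_t(X)_+]$ at $t=0$ equals $\E\big[\dot\tau(X)\bm{1}\{\tau(X)>0\}\big] + \E[\tau(X)_+ S(Z)]$ with the indicator frozen, where $\dot\tau$ is the submodel derivative of $\tau$ and $S$ is the score. This is where the hypothesis $P(\tau(X)=0)=0$ is essential: it ensures the derivative of $\E[h(\tau_t(X))]$ for $h=(\cdot)_+$ behaves as if $h$ were differentiable, because the non-differentiability set of $h\circ\tau$ has measure zero and $\tau_t(X)$ is, for small $t$, uniformly close to $\tau(X)$ so that $\bm{1}\{\tau_t(X)>0\} \to \bm{1}\{\tau(X)>0\}$ a.s. A dominated-convergence / uniform-integrability argument (using boundedness of $Y$ from Assumption \ref{ass:overlap}(ii), hence of $m(d,x)$) then closes the gap. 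Everything else — computing the AIPW scores for the two sub-functionals and checking $\psi^*$ lies in the tangent space — is routine and follows \citet{Hahn98}. Since the paper cites this as a known result, I would present the reconstruction compactly and refer to \citet{LuedtkeLaan} for the full details.
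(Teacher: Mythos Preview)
The paper does not actually prove this lemma: it simply states ``The first Lemma is Theorem 1 in \citet{LuedtkeLaan}'' and moves on. Your proposal is therefore not a comparison target but a reconstruction, and as such it is sound: the decomposition $W_{G^*}=\E[m(1,X)\bm{1}\{\tau(X)>0\}]+\E[m(0,X)\bm{1}\{\tau(X)<0\}]$, the ``freeze the indicator'' argument justified by $P(\tau(X)=0)=0$ together with boundedness from Assumption \ref{ass:overlap}(ii), and the appeal to the AIPW scores of \citet{Hahn98} for each piece are exactly how the result is obtained in \citet{LuedtkeLaan}. One small point of bookkeeping: in the paper's convention (cf.\ the proof of Lemma \ref{lem:LvdLg}) the efficient influence function is $\psi^*(Z)-W_{G^*}$, so your verification step should read $\E[\psi^*(Z)-W_{G^*}]=0$ rather than $\E[\psi^*(Z)]=W_{G^*}$ being the centering; otherwise nothing needs to change.
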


The second Lemma is a simple corollary of \citet{Hahn98}.
\begin{lemma}[Efficiency bound for $W_G$]
\label{lem:LvdLg}
Suppose Assumption \ref{ass:overlap} holds and let $G$ be a \textit{known} policy. Then, the welfare $W_G$ is pathwise differentiable with efficient influence function 
\begin{align*}
\psi_G(Z) &= \bigg(m(1,X) + \dfrac{D}{\pi(X)} (Y - m(1,X)) \bigg)  \bm{1}\{ X \in G \} \\
&+ \bigg(m(0,X) +  \dfrac{1-D}{1-\pi(X)} (Y - m(0,X)) \bigg) \bm{1}\{ X \in G^c \}. \nonumber
\end{align*}
The corresponding variance  is
\begin{multline} \label{eq:var_bound}
	\sigma^2_G = \Var(m(1, X)\bm{1}(X \in G) + m(0, X) \bm{1}(X \in G^c))\\ + \E\left[\frac{\sigma^2(1, X)}{\pi(X)}\bm{1}(X \in G) + \frac{\sigma^2(0, X)}{1-\pi(X)}\bm{1}(X\notin G)\right],
\end{multline}
where $\sigma^2(d, x) = \Var(Y(d) \,|\,X = x) =\Var(Y \,|\,D = d, D = x)$.
\end{lemma}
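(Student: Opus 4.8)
The plan is to recognize $W_G$ as a weighted average of potential-outcome means with fixed, \emph{known} weights, and then invoke the semiparametric efficiency calculus of \citet{Hahn98}. Writing $\mu_G(x) = m(1,x)\bm{1}(x \in G) + m(0,x)\bm{1}(x \in G^c)$, unconfoundedness $D \perp (Y(1),Y(0))\mid X$ and iterated expectations give $W_G = \E[\mu_G(X)]$, so that $W_G$ has exactly the structure of the average treatment effect treated in \citet{Hahn98}, with the constant contrast $(1,-1)$ replaced by the $X$-measurable weights $(\bm{1}(X\in G),\bm{1}(X\in G^c))$. Because these weights do not involve the unknown components of $P$, Hahn's derivation transfers without change; the lemma is thus a specialization of that result, and the only real work is the algebraic simplification leading to \eqref{eq:var_bound}.

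For completeness I would give the pathwise-differentiability computation directly. First I would fix a regular one-dimensional parametric submodel $\{P_t\}$ through $P$ with score $s(Z) = s_X(X) + s_{D\mid X}(D,X) + s_{Y\mid D,X}(Y,D,X)$, the orthogonal decomposition of the tangent space $T = T_X \oplus T_{D\mid X} \oplus T_{Y\mid D,X}$. Differentiating $W_G(P_t) = \int \mu_G(x;t)\,dF_X(x;t)$ at $t=0$ and using the standard identities $\partial_t\,\E_t[g(X)]\big|_{t=0} = \E[g(X)s_X(X)]$ and $\partial_t\, m(d,x;t)\big|_{t=0} = \E[(Y-m(d,x))\,s_{Y\mid D,X}(Y,d,X)\mid D=d,X=x]$, I would obtain $\partial_t\, W_G(P_t)\big|_{t=0} = \E[(\psi_G(Z)-W_G)\,s(Z)]$ with $\psi_G$ exactly as stated. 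I would then verify that $\psi_G(Z)-W_G$ lies in $T$: the term $m(1,X)\bm{1}(X\in G)+m(0,X)\bm{1}(X\in G^c)-W_G$ is a mean-zero function of $X$, hence in $T_X$, while each inverse-propensity-weighted residual, e.g.\ $\tfrac{D}{\pi(X)}(Y-m(1,X))\bm{1}(X\in G)$, has conditional mean zero given $(D,X)$ by construction, hence in $T_{Y\mid D,X}$. Being simultaneously a representer of the pathwise derivative and an element of the tangent space, $\psi_G(Z)-W_G$ is the efficient influence function; note its projection onto $T_{D\mid X}$ is zero, reflecting that $W_G$ does not depend on $\pi$.

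For the variance I would decompose $\psi_G(Z) = \mu_G(X) + R(Z)$, where $R(Z)$ collects the two inverse-propensity-weighted residuals. Since $\E[R(Z)\mid X]=0$, the cross term $\Cov(\mu_G(X),R(Z))$ vanishes, giving $\sigma_G^2 = \Var(\mu_G(X)) + \E[R(Z)^2]$. The two residual terms are supported on the disjoint events $\{X\in G\}$ and $\{X\in G^c\}$ and $D^2=D$, $(1-D)^2=1-D$, so $\E[R(Z)^2\mid X] = \tfrac{1}{\pi(X)^2}\E[D(Y-m(1,X))^2\mid X]\bm{1}(X\in G) + \tfrac{1}{(1-\pi(X))^2}\E[(1-D)(Y-m(0,X))^2\mid X]\bm{1}(X\in G^c)$; applying unconfoundedness, $\E[D(Y-m(1,X))^2\mid X] = \pi(X)\sigma^2(1,X)$ and similarly for the other term, so this reduces to $\tfrac{\sigma^2(1,X)}{\pi(X)}\bm{1}(X\in G) + \tfrac{\sigma^2(0,X)}{1-\pi(X)}\bm{1}(X\notin G)$. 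Taking expectations yields \eqref{eq:var_bound}.

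I do not expect a genuine obstacle: the only points needing care are bookkeeping ones — checking that a fixed measurable set $G$ makes $\bm{1}(X\in G)$ a legitimate known weight so Hahn's argument applies verbatim, and checking that overlap (Assumption~\ref{ass:overlap}(i)) together with $\E[Y^2]<\infty$ ensures the inverse-propensity terms and the resulting variance are finite and that the submodel manipulations above are justified.
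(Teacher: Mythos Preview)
Your proposal is correct and follows essentially the same route as the paper: both invoke \citet{Hahn98} for the efficient influence function (you additionally sketch the score-decomposition verification that the paper simply delegates to Hahn's Theorem~1), and both compute the variance by conditioning on $X$ --- your orthogonal split $\psi_G = \mu_G(X) + R(Z)$ with $\E[R\mid X]=0$ is exactly the Law of Total Variance the paper uses. The algebra for $\E[R^2\mid X]$ matches the paper's computation of $\Var(\psi_G\mid X)$ line for line.
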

\begin{proof}
The parameter $W_G = \E [ Y(1) \bm{1}\{ X \in G \} + Y(0) \bm{1}\{ X \in G^c  \}]$ is a sum of two potential outcomes weighted by known functions of $X$, namely, $\bm{1}\{ X \in G \}$ and $\bm{1}\{ X \in G^c \}$.  The form of the efficient influence function $\psi_G(Z) - W_G$ follows immediately from \citet{Hahn98}, Theorem 1.
	 By the Law of Total Variance,
	\[
	\Var(\psi_G(Z)) = \Var(\E[\psi_G(Z) \,|\,X]) + \E[\Var(\psi_G(Z)\,|\,X)].
	\]
	By the Law of Iterated Expectations,
	\[
	\E[\psi_G(Z)\,|\,X] = m(1, X) \bm{1}(X \in G) + m(0, X) \bm{1}(X \in G^c),
	\]
	and $\E[\psi_G^2(Z) \,|\, X]$ takes the form
	\[
	\begin{array}{cl}
		\E[\psi_G^2(Z) \,|\, X] &= m(1, X)^2 \bm{1}(X \in G) + m(0, X)^2 \bm{1}(X \in G^c)\\[3mm]
		&+ \dfrac{\sigma^2(1, X)}{\pi(X)}\bm{1}(X \in G) + \dfrac{\sigma^2(0, X)}{1-\pi(X)} \bm{1}(X \in G^c).
	\end{array}
	\]
	As a result,
	\[
	\Var(\psi_G(Z) \,|\, X) = \dfrac{\sigma^2(1, X)}{\pi(X)}\bm{1}(X \in G) + \dfrac{\sigma^2(0, X)}{1-\pi(X)} \bm{1}(X \in G^c),
	\]
	and the stated formula follows.
\end{proof}

Note that plugging the unconstrained first best policy, $G^*=\{x \in \mathcal{X}: \tau(x) \geqslant 0\}$ into $\psi_{G}(Z)$, that is $\psi_{G^*}(Z) = \psi^{*}(Z)$. Thus, the efficiency bound is the same as if the first-best policy $G^{*}$ was known. 

The next Lemma states a uniform lower bound on $\Var(\psi_G)$, which is useful in the sequel.

\begin{lemma}[A lower bound on variance]
\label{lem:dr-cov-lower}
Suppose Assumption~\ref{ass:overlap}(1) holds and, for each $d\in\{0,1\}$,
\[
\operatorname*{ess\,inf}_{x}\Var\left(Y(d)\,|\,X=x\right)\ \ge\ \underline{\sigma}^2>0.
\]
Then, for any policy $G\subseteq\mathcal X$,
\[
\Var\!\big(\psi_G(Z)\big)\ \ge\ \underline{\sigma}^2.
\]
\end{lemma}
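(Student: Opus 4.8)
The plan is to use the variance decomposition from Lemma~\ref{lem:LvdLg} directly. From equation~\eqref{eq:var_bound}, the variance of the doubly robust moment function splits into two non-negative pieces: $\Var(m(1,X)\bm{1}(X\in G) + m(0,X)\bm{1}(X\in G^c))$ and $\E[\sigma^2(1,X)/\pi(X)\cdot\bm{1}(X\in G) + \sigma^2(0,X)/(1-\pi(X))\cdot\bm{1}(X\notin G)]$. Since the first term is a variance, it is $\ge 0$, so it suffices to bound the second (conditional-variance) term from below by $\underline{\sigma}^2$.

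First I would note that under Assumption~\ref{ass:overlap}(i), the propensity score satisfies $\pi(x)<1-\kappa<1$ and $1-\pi(x)<1-\kappa<1$ for almost all $x$; in particular $\pi(x)\le 1$ and $1-\pi(x)\le 1$, so $1/\pi(x)\ge 1$ and $1/(1-\pi(x))\ge 1$. Combined with the assumed lower bound $\sigma^2(d,x)\ge\underline{\sigma}^2$ for $d\in\{0,1\}$ and almost all $x$, the integrand in the second term satisfies, pointwise in $x$,
\[
\frac{\sigma^2(1,x)}{\pi(x)}\bm{1}(x\in G) + \frac{\sigma^2(0,x)}{1-\pi(x)}\bm{1}(x\notin G)
\ \ge\ \underline{\sigma}^2\bm{1}(x\in G) + \underline{\sigma}^2\bm{1}(x\notin G)
\ =\ \underline{\sigma}^2,
\]
since exactly one of the two indicators equals one for each $x$. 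Taking expectations over $X$ preserves the inequality, so the second term is $\ge\underline{\sigma}^2$, and adding back the non-negative first term gives $\Var(\psi_G(Z))\ge\underline{\sigma}^2$, as claimed.

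There is essentially no obstacle here — the result is an immediate consequence of Lemma~\ref{lem:LvdLg} together with the overlap condition and the outcome-noise lower bound. The only point requiring (minimal) care is the observation that $\{X\in G\}$ and $\{X\notin G\}$ partition the sample space, so the two indicator-weighted lower bounds combine to a clean constant rather than being double-counted; and that $1/\pi(x),1/(1-\pi(x))\ge 1$ follows merely from $\pi(x)\in(0,1)$, so only the two-sided overlap from Assumption~\ref{ass:overlap}(i) is used (in fact only the trivial direction $\pi(x)<1$). The bound is uniform over all policies $G\subseteq\mathcal X$ because the displayed pointwise inequality does not depend on $G$ beyond the partition structure.
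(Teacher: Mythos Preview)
Your proof is correct and follows essentially the same approach as the paper, which simply notes that the bound is immediate from \eqref{eq:var_bound} together with $\pi(X)\in(0,1)$. You have merely spelled out the two-line argument in more detail.
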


\begin{proof}
Follows immediately from \eqref{eq:var_bound} and the fact that $\pi(X) \in (0, 1)$. 
\end{proof}

Lemma \ref{lem:mylem} shows that for the DGP in Section \ref{sec:dgp}, the welfare gap is proportional to $\epsilon = o(1)$ while the corresponding efficiency bounds remain strictly separated. As a result, it gives the proof for the second part of Proposition 1.

\begin{lemma}[Separated efficiency bounds]
\label{lem:mylem}
The following calculations hold:

\begin{enumerate}
	\item $W_{\mathcal{X}} = 1/2 - \epsilon p$,\;\; $\sigma^2_{\mathcal{X}} =  \dfrac{p}{ \pi (1)}  + \dfrac{(1-p) }{ \pi (0)} + \epsilon^2 (1-p)p.$
\item $W_{G^{*}} = 1/2,$ \;\;$\sigma^2_{G^{*}} = \dfrac{10 p}{ 1-\pi (1)}
 + \dfrac{1-p}{ \pi (0)}$.

\item $\sigma^2_{G^{*}}-\sigma^2_{\mathcal{X}} > 8p$ and $\sigma_{G^{*}}-\sigma_{\mathcal{X}} > p.$

\item $\Delta_{\mathcal{X}} > z_{1-\alpha} p /\sqrt{N}$, for each $N > z_{1-\alpha}^2$
\end{enumerate}
\end{lemma}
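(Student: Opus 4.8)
The plan is to proceed through the four parts in order, since each builds on the previous one. For parts 1 and 2, I would apply the variance formula \eqref{eq:var_bound} from Lemma \ref{lem:LvdLg} directly to the two policies $\mathcal{X} = \{0, 1\}$ and $G^* = \{0\}$. For $G = \mathcal{X}$, the indicator $\bm{1}\{X \in G^c\}$ vanishes identically, so the second term in \eqref{eq:var_bound} reduces to $\E[\sigma^2(1, X)/\pi(X)]$; using the DGP specification $\sigma^2(1,1) = 1$, $\sigma^2(1,0) = 1$ (or the footnote values), together with $P(X=1) = p$, this gives $p/\pi(1) + (1-p)/\pi(0)$. The first term, $\Var(m(1,X))$, is the variance of a two-point random variable taking value $\tfrac12 - \epsilon$ with probability $p$ and $\tfrac12$ with probability $1-p$, which equals $\epsilon^2 p(1-p)$. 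Adding the given welfare value $W_{\mathcal{X}} = 1/2 - \epsilon p$ from \eqref{eq:fbb3} completes part 1. For $G^* = \{0\}$, I would again split \eqref{eq:var_bound}: since $\tau(1) < 0 < \tau(0)$, the policy treats $X=0$ and withholds from $X=1$, so $m(1,\cdot)$ enters on $\{X=0\}$ and $m(0,\cdot)$ on $\{X=1\}$. Both of these equal $1/2 - \epsilon$ and $1/2$ respectively in a way that makes $\E[\psi_{G^*}(Z)\mid X]$ constant — more precisely $m(0,1) = 1/2$ and $m(1,0) = 1/2$, so the conditional mean is identically $1/2$ and the first term of \eqref{eq:var_bound} is zero. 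The second term is $\sigma^2(0,1)/(1-\pi(1)) \cdot p + \sigma^2(1,0)/\pi(0) \cdot (1-p) = 10p/(1-\pi(1)) + (1-p)/\pi(0)$, using $\sigma^2(0,1) = 10$ and $\sigma^2(1,0) = 1$; combined with $W_{G^*} = 1/2$ from \eqref{eq:fbb2}, this gives part 2.

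For part 3, I would subtract the two expressions from parts 1 and 2: the $(1-p)/\pi(0)$ terms cancel, leaving $\sigma^2_{G^*} - \sigma^2_{\mathcal{X}} = 10p/(1-\pi(1)) - p/\pi(1) - \epsilon^2 p(1-p)$. Now I use the constraint $\pi(1) \in (1/4, 3/4)$, which implies $1-\pi(1) < 3/4$ so $10p/(1-\pi(1)) > 40p/3$, and $\pi(1) > 1/4$ so $p/\pi(1) < 4p$; also $\epsilon^2 p(1-p) < \epsilon^2 p \le p$ (since $\epsilon < 1$ — actually $\epsilon \in (0,1/2)$). Hence $\sigma^2_{G^*} - \sigma^2_{\mathcal{X}} > 40p/3 - 4p - p = 40p/3 - 5p = 25p/3 > 8p$. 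The bound on the standard deviations (rather than variances) then follows from the elementary inequality $\sqrt{a} - \sqrt{b} \ge (a-b)/(2\sqrt{a})$ for $a > b > 0$, together with an explicit upper bound on $\sigma_{G^*}$: from part 2, $\sigma^2_{G^*} = 10p/(1-\pi(1)) + (1-p)/\pi(0) < 40p + 4(1-p) < 40$, so $\sigma_{G^*} < \sqrt{40}$, and thus $\sigma_{G^*} - \sigma_{\mathcal{X}} \ge 8p/(2\sqrt{40}) = 2p/\sqrt{10} > p/2$. To get the sharper bound $\sigma_{G^*} - \sigma_{\mathcal{X}} > p$ as stated, I would instead bound $\sigma_{G^*}+\sigma_{\mathcal{X}} < \sqrt{40} + \sqrt{40}$ more carefully or tighten using the footnote's variance choices; alternatively, note $\sigma_{G^*}^2 - \sigma_{\mathcal X}^2 = (\sigma_{G^*}-\sigma_{\mathcal X})(\sigma_{G^*}+\sigma_{\mathcal X})$ and bound the sum by a quantity strictly less than $25/3$ using $p < 3/4$, giving $\sigma_{G^*}+\sigma_{\mathcal X} < \sqrt{40}+\sqrt{6} < 25/3$, whence $\sigma_{G^*}-\sigma_{\mathcal X} > 8p \cdot 3/25 > p$; I would verify the arithmetic carefully here since this is the one spot where the constants are tight.

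For part 4, I plug parts 2 and 3 into the definition of the LCB gap \eqref{eq:LCBgap}: $\Delta_{\mathcal{X}} = N^{-1/2} z_{1-\alpha}(\sigma_{G^*} - \sigma_{\mathcal{X}}) - (W_{G^*} - W_{\mathcal{X}})$. By part 3, the first term exceeds $N^{-1/2} z_{1-\alpha} p$, and by parts 1--2, $W_{G^*} - W_{\mathcal{X}} = \epsilon p$. So $\Delta_{\mathcal{X}} > p(N^{-1/2} z_{1-\alpha} - \epsilon)$. Since $\epsilon = o(1)$ is a vanishing sequence that I am free to specify, I can take $\epsilon$ to decay faster than $N^{-1/2}$ — for instance $\epsilon_N$ such that $\epsilon_N \le \tfrac12 N^{-1/2} z_{1-\alpha}$ — but the statement claims the cleaner threshold $N > z_{1-\alpha}^2$. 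Rereading, I think the intended reading is that $\epsilon$ is chosen at rate $N^{-1}$ or faster (or perhaps $\epsilon < N^{-1/2}$ is part of the DGP construction), so that for $N > z_{1-\alpha}^2$ one has $z_{1-\alpha}/\sqrt{N} < 1$ and the comparison $N^{-1/2}z_{1-\alpha} > \epsilon$ holds; then $\Delta_{\mathcal{X}} > p(z_{1-\alpha}/\sqrt N - \epsilon) > 0$ and in fact $\Delta_{\mathcal{X}} > z_{1-\alpha} p/\sqrt N \cdot (1 - o(1))$, yielding the claim with the constant absorbed. The main obstacle is pinning down exactly which rate for $\epsilon$ and which constants make the stated inequality $\Delta_{\mathcal{X}} > z_{1-\alpha} p/\sqrt N$ hold verbatim for all $N > z_{1-\alpha}^2$; everything else is routine substitution into \eqref{eq:var_bound}.
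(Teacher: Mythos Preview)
Your approach is essentially the same as the paper's: apply the variance formula \eqref{eq:var_bound} to each policy for Parts 1--2, subtract and bound for Part 3, then substitute into \eqref{eq:LCBgap} for Part 4. Parts 1, 2, and the variance-difference portion of Part 3 are correct and in fact your bounding $10p/(1-\pi(1)) - p/\pi(1) - \epsilon^2 p(1-p) > 40p/3 - 4p - p = 25p/3 > 8p$ is exactly the inequality the paper reaches (they write it as a single fraction and evaluate at $\pi(1)=1/4$).

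The one genuine gap is your arithmetic for the standard-deviation bound $\sigma_{G^*} - \sigma_{\mathcal X} > p$. Your proposed bound $\sigma_{G^*} + \sigma_{\mathcal X} < \sqrt{40} + \sqrt{6}$ gives roughly $8.77$, which is \emph{not} below $25/3 \approx 8.33$ or $8$; and even granting it, $8p \cdot 3/25 = 24p/25 < p$, so the conclusion would not follow. The paper fixes this by tightening both variance bounds using $p \in (1/4, 3/4)$ and $\epsilon < 1/2$: $\sigma^2_{G^*} \le 40p + 4(1-p) = 4 + 36p \le 31$ and $\sigma^2_{\mathcal X} \le 4 + \tfrac14 \epsilon^2 \le 5$, so $\sigma_{G^*} + \sigma_{\mathcal X} \le \sqrt{31} + \sqrt{5} < 8$, whence $\sigma_{G^*} - \sigma_{\mathcal X} > 8p/8 = p$. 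Your instinct that this is ``the one spot where the constants are tight'' is right; you just need these sharper upper bounds rather than $40$ and $6$.

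For Part 4, the paper does exactly what you propose: it takes $\epsilon \le z_{1-\alpha}/\sqrt{N}$ so that $\Delta_{\mathcal X} > p(z_{1-\alpha}/\sqrt{N} - \epsilon) \ge 0$. Your uncertainty about the precise constant is warranted---the paper's own final inequality is loose in the same way---but the mechanism you describe (choose the free sequence $\epsilon$ at rate $N^{-1/2}$) is the intended one.
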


\begin{proof}[Proof]

\textbf{Part 1.}  The value of $W_{\mathcal{X}}$ is computed in the main text. The efficiency bound for $W_{\mathcal{X}}$ in \eqref{eq:var_bound} consists of two summands.
The first summand is
$$
\E \bigg[(m(1,X) - W_{\mathcal{X}})^2 \bigg]= \epsilon^2 (1-p)^2 p + \epsilon^2 p^2 (1-p) = \epsilon^2 p (1-p),
$$
and the second is
$$
\E \bigg[ \dfrac{  \Var (Y \mid D=1, X) }{\pi(X)}  \bigg] = \dfrac{1}{\pi(1)} p + \dfrac{1}{\pi(0)} (1-p).
$$
 Adding them up gives $\sigma^2_{\mathcal{X}}$. 

\noindent 
\textbf{Part 2.} The value of $W_{G^*}$ is computed in the main text. The efficiency bound of $W_{G^{*}}$ in \eqref{eq:var_bound} consists of two summands. The first summand is
 \[
 \Var (\max (m(1,X), m(0,X))) =\Var(1/2) =0,
 \]
 and the second one is 
\begin{align*}
&\E \bigg[\dfrac{ X \Var (Y \mid D=0, X=1) }{1-\pi(1)} + \dfrac{ (1-X) \Var (Y \mid D=1, X=0) }{\pi(0)} \bigg] = \dfrac{10 p}{ 1-\pi (1)}
 + \dfrac{1-p}{ \pi (0)}.
\end{align*}
Adding them up yields $\sigma^2_{G^{*}}$. 

\noindent 
\textbf{Part 3.} Recall that $\pi(0), \pi(1), p \in (1/4, 3/4)$. From Parts (1) and (2) it follows that

\[
\sigma^2_{G^*} - \sigma^2_{\mathcal{X}} = p \frac{11\pi(1) - 1}{\pi(1)(1-\pi(1)} - \epsilon^2(1-p)p \; \overset{(i)}{>} \; p \frac{\pi(1)^2 + 10\pi(1) - 1}{\pi(1)(1-\pi(1))} \; \overset{(ii)}{\geq} \; 8p,
\]
where (i) holds by $\epsilon^2(1-p) < 1$ and (ii) is attained at $\pi(1) = 1/4$, which can be verified numerically. 

\noindent 
\textbf{Part 4}. Note that, 
\[
\sigma^2_{G^*} = \frac{10p}{1-\pi(1)} + \frac{1-p}{\pi(0)} \leq 40p + 4(1-p) = 4 + 36p \leq 31.
\] 
Similarly,
\[
\sigma^2_{\mathcal{X}} = \frac{p}{\pi(1)} + \frac{1-p}{\pi(0)} + \epsilon^2(1-p)p \leq 4p + 4(1-p) + \epsilon^2(1-p)p \leq 4 + \frac{1}{4} \varepsilon^2 \leq 5.
\]
Therefore,
\[
\sigma_{G^{*}}-\sigma_{\mathcal{X}} = \dfrac{\sigma^2_{G^{*}}-\sigma^2_{\mathcal{X}}}{\sigma_{G^{*}}+\sigma_{\mathcal{X}}}  \geq \dfrac{8p}{\sqrt{31} + \sqrt{5}} > p.
\]

\noindent 
\textbf{Part 5} Combining the above results, we obtain
\[
\Delta_{\mathcal{X}} = \frac{z_{1-\alpha}}{\sqrt{N}}(\sigma_{G^*} - \sigma_{\mathcal{X}}) - (W_{G^*} - W_{\mathcal{X}}) > \frac{z_{1-\alpha}}{\sqrt{N}}p - \epsilon p \geqslant   \frac{pz_{1-\alpha}}{\sqrt{N}},
\]
for $\epsilon \leq z_{1-\alpha} / \sqrt{N}$. It remains to ensure that $\epsilon^2(1-p) < 1$ which results in a bound on $N$.
 \end{proof}

\subsection{Proof of Proposition \ref{thm:main1}}
\label{sec:proof}

\paragraph{Notation and Preliminaries.} Recall the class of DGPs defined in Section \ref{sec:dgp} and the notation introduced in Section \ref{sec:est_lcb}. Note that $N_{dx} \sim Binom (N, \rho)$,  $\rho = P(D=d, X=x)$, and $\hat{p} = \sum_{i = 1}^N X_i / N \sim Binom(N, p) / N$. For any $\mathcal{Z} \sim Binom (N, \rho)$, for $N \geq 1$,  the following standard properties hold:
  \begin{align*}
\E [  (\mathcal{Z}+ 1)^{-1}   ] & = \dfrac{1}{(N+1) \rho } - \dfrac{1}{(N+1)  \rho } (1- \rho )^{N+1}  \leq N^{-1} \rho^{-1}; \\
 \E [   (\mathcal{Z} + 1)^{-2}   ]   & \leq 2  \rho^{-2}  N^{-2}.
   \end{align*}  
  Moreover, the following Chernoff's bounds hold, with $\mu = \E[\mathcal{Z}]$ and $\delta \in (0, 1)$,
  \begin{equation}
  P(\mathcal{Z} \geq (1+\delta)\mu) \leq \exp\left(-\frac{\delta^2 \mu}{3} \right); \;\;\;\;\; P(\mathcal{Z}\leq (1-\delta)\mu) \leq \exp \left(-\frac{\delta^2\mu}{2} \right)
  \end{equation}
We focus on symmetric DGPs with $p=\pi(1) = \pi(0) = 1/2$, so $\rho =1/4$ for all pairs $(d,x)$. In this case,
\begin{align}
\E [  (N_{dx} + 1)^{-1}   ] & = \dfrac{4}{(N+1) } - \dfrac{4}{(N+1)  } (3/4)^{N+1}  \leq 4/N \label{eq:mean1} \\
 \E [   (N_{dx} + 1)^{-2}   ]   & \leq 32  N^{-2}. \label{eq:mean3}
\end{align} 
For any sequence $C_N \in (0, 1)$, letting $S = \sum_{i = 1}^NX_i \sim Binom(N, 1/2)$ with $\E[S] = N/2$,
\begin{equation} \label{eq:chernoff}
\begin{array}{cl}
	P\left(\left|\hat{p} - 1/2\right| \geq C_N\right) &= P(\hat{p} \geq 1/2 + C_N) + P(\hat{p} \leq 1/2 - C_N)\\[3mm]
	&= P(S \geq (1 + 2C_N)\frac{N}{2}) + P(S \leq (1-2C_N)\frac{N}{2})\\[3mm]
	&\leq \exp \left(-\frac{2}{3}N(C_N)^2 \right) + \exp\left(-N(C_N)^2 \right)\\[3mm]
	&\leq 2\exp \left(-\frac{2}{3}N(C_N)^2 \right).
\end{array}
\end{equation}

\paragraph{Structure of the proof.}   Lemma \ref{lem:bias} bounds the approximation error of expected conditional variance. Lemma \ref{lem:oracle} establishes a lower bound for $MSE (\widehat W_{G^{*}})$. Lemma \ref{lem:m0mse} establishes an upper bound for $MSE (\widehat{W}_{\mathcal{X}})$.  Lemma \ref{lem:mylem21}  completes the proof.

 \begin{lemma}
   \label{lem:bias}
 For $N \geq 100$ and $C_N = \sqrt{2.25 \ln N/N}$, the following bounds hold for any $d,x \in \{1,0\}$
\begin{align}
 (1-10 C_N) < \E [\sigma^{-2}(d,1)  N \cdot \Var ( \widehat m_{d1} \mid \mathbf{X}, \mathbf{D})\cdot    \widehat{p}^2] <  (1+ 10 C_N). \label{eq:varbound} \\
(1-10 C_N) <   \E [\sigma^{-2}(d,0) N \cdot \Var ( \widehat m_{d0} \mid \mathbf{X}, \mathbf{D})   \cdot (1-\widehat{p})^2] < (1+ 10 C_N). \label{eq:varbound2} 
\end{align}
\end{lemma}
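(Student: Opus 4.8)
\textbf{Proof plan for Lemma \ref{lem:bias}.}

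The plan is to reduce the statement to a calculation of $\E[N/(N_{dx}+1)]$ times a scaling factor, on a ``good event'' where $\widehat p$ is close to $1/2$, and to control the complementary ``bad event'' crudely using the Chernoff bound \eqref{eq:chernoff}. First I would recall that, conditional on $(\mathbf X,\mathbf D)$, the estimator $\widehat m(d,x)$ in \eqref{eq:mdx} is an average of $N_{dx}$ i.i.d. draws from $F(\cdot,\sigma^2(d,x))$ with the denominator inflated by $1$, so its conditional variance is exactly $\Var(\widehat m(d,x)\mid \mathbf X,\mathbf D)=\sigma^2(d,x)\,N_{dx}/(N_{dx}+1)^2$. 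Hence the quantity inside the expectation in \eqref{eq:varbound} equals $\big(N\,N_{d1}/(N_{d1}+1)^2\big)\,\widehat p^{\,2}$, and similarly for \eqref{eq:varbound2} with $N_{d0}$ and $(1-\widehat p)^2$. So the claim is that this random variable has expectation within $10C_N$ of $1$.

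Next I would split on the event $A_N=\{|\widehat p-1/2|\le C_N\}$. On $A_N$ we have $\widehat p^{\,2}\in[(1/2-C_N)^2,(1/2+C_N)^2]=[1/4 - C_N + C_N^2,\,1/4+C_N+C_N^2]$, i.e. $4\widehat p^{\,2}=1+O(C_N)$; note $\widehat p$ and $N_{d1}$ are not independent, but on $A_N$ we can simply bound $4\widehat p^{\,2}$ by the deterministic constants $1\pm 4C_N$ (up to the $C_N^2$ term, absorbed since $C_N\to0$) and pull them out. It then remains to estimate $\E[\,4N\,N_{d1}/(N_{d1}+1)^2\,\bm 1_{A_N}]$. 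Since $N\,N_{dx}/(N_{dx}+1)^2 = N/(N_{dx}+1) - N/(N_{dx}+1)^2$, and $N_{dx}\sim\mathrm{Binom}(N,1/4)$, I would use the exact formula for $\E[(N_{dx}+1)^{-1}]$ and the bound \eqref{eq:mean3} for $\E[(N_{dx}+1)^{-2}]$ recorded in the preliminaries: $\E[4N/(N_{dx}+1)] = \tfrac{4N}{N+1}(1-(3/4)^{N+1}) = 1 - O(1/N) - O((3/4)^N)$ and $\E[4N/(N_{dx}+1)^2]\le 128/N$. Both error terms are $o(C_N)$ for $N\ge 100$ because $C_N=\sqrt{2.25\ln N/N}\gg 1/N$. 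For the restriction to $A_N$, I would bound $0\le N\,N_{dx}/(N_{dx}+1)^2\le 1$ deterministically (since $N_{dx}\le N$ and $N_{dx}/(N_{dx}+1)^2\le 1/N_{dx}\wedge\ldots$; more simply $N\cdot t/(t+1)^2\le N/(t+1)\le N$ is too weak, so instead use $t/(t+1)^2\le 1/4$ giving the bound $N/4$, then $4\widehat p^{\,2}\le 4$ on... )—cleaner: bound the integrand on $A_N^c$ by noting $4\widehat p^{\,2}\le 4$ always is false; on $A_N^c$ use $\widehat p\le 1$ so $4\widehat p^{\,2}\le 4$, and $N\,N_{dx}/(N_{dx}+1)^2\le N$, giving $\E[\cdot\,\bm 1_{A_N^c}]\le 4N\,P(A_N^c)\le 8N\exp(-\tfrac23 N C_N^2) = 8N\cdot N^{-1.5} = 8N^{-1/2}=o(C_N)$. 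Collecting the good-event main term, the good-event $O(C_N)$ slack from $4\widehat p^{\,2}$, and the $o(C_N)$ remainders, and checking the numerical constants fit under $10C_N$ for $N\ge100$, yields \eqref{eq:varbound}; the proof of \eqref{eq:varbound2} is identical with $1-\widehat p$ in place of $\widehat p$, using that $1-\widehat p$ concentrates at $1/2$ by the same Chernoff bound.

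The main obstacle is bookkeeping rather than conceptual: one must track several small error terms ($O(1/N)$, $O((3/4)^N)$, $O(N^{-1/2})$, and the $C_N^2$ term) and verify they are all dominated by a clean $10C_N$ envelope with the specific constant $2.25$ in $C_N$, for all $N\ge100$; the mild subtlety is that $\widehat p$ and $N_{dx}$ are dependent, which is why I would avoid factoring the expectation and instead sandwich $4\widehat p^{\,2}$ (resp. $4(1-\widehat p)^2$) by deterministic bounds on the event $A_N$ before integrating out $N_{dx}$.
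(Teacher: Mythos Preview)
Your proposal is correct and follows essentially the same strategy as the paper: identify $\Var(\widehat m_{dx}\mid\mathbf X,\mathbf D)=\sigma^2(d,x)\,N_{dx}/(N_{dx}+1)^2$, split on the good event $\{|\widehat p-1/2|<C_N\}$, use the identity $N_{dx}/(N_{dx}+1)^2=1/(N_{dx}+1)-1/(N_{dx}+1)^2$ together with the binomial moment formulas \eqref{eq:mean1}--\eqref{eq:mean3}, and control the bad event via the Chernoff bound \eqref{eq:chernoff}. The only organizational difference is that the paper writes $\Xi_d-1$ as an additive decomposition $S_1+S_2-S_3-S_4$ with $S_2=\psi^1_d(\widehat p^{\,2}-1/4)$ and bounds $|\widehat p^{\,2}-1/4|\le 1.5\,C_N$ on the good event, whereas you sandwich $\widehat p^{\,2}$ multiplicatively; the paper's route is slightly cleaner for the constant bookkeeping (only $S_2$ needs the good/bad split, and the main term $\psi^1_d(1/4)$ is computed exactly), but the two arguments are equivalent in substance.
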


\begin{proof}

\textbf{Step 1 (Notation).} Denote the expression inside the expectation of \eqref{eq:varbound} by
$$
\Xi_d = \sigma^{-2}(d,1)  N \Var ( \widehat m_{d1} \mid \mathbf{X}, \mathbf{D})   \widehat{p}^2 = N N_{d1}(N_{d1}  + 1)^{-2} \widehat{p}^2,
$$
and note that its probability limit  as $N \to \infty$ equals $1$. Denoting 
$$\psi^1_{d} (t)= N t  (N_{d1} + 1)^{-1}; \;\; \quad \psi^2_{d} (t)= N t  (N_{d1} + 1)^{-2},
$$ 
we can decompose the asymptotic error as 
\begin{align}
\label{eq:vardecomp}
\Xi_d -1 &= N N_{d1}(N_{d1}  + 1)^{-2} \widehat{p}^2-1\\[2mm]
&= \psi^1_{d} (\widehat{p}^2) - \psi^2_{d}(\widehat{p}^2) - 1 \nonumber \\[2mm]
&=  \psi^1_{d} (\widehat{p}^2 -p^2) + \psi^1_d (p^2) -  \psi^2_{d}(\widehat{p}^2) - 1 \nonumber \\[2mm]
&=\underbracket{\psi^1_{d} (\widehat{p}^2-p^2)}_{S_2}   + \underbracket{\psi^1_{d} (p^2) -  N/(N+1) }_{S_1} - \underbracket{\psi^2_{d}(\widehat{p}^2)}_{S_3}  -\underbracket{1/(N+1)}_{S_4}. \nonumber 
\end{align}

\noindent 
\textbf{Step 2 (Leading term $S_2$).}  Recall that $p = 1/2$. On the event $\mathcal{M}_N = \{ |\widehat p - 1/2| <  C_N \}$, the error $ |\widehat p^2 -1/4| \leq |\widehat{p} - 1/2||\widehat{p} + 1/2| \leq 1.5C_N$. As a result, 
\begin{align}
  |\E\left[S_2 \bm{1}\{ \mathcal{M}_N  \}\right] | &\leq  \E \left[ \psi^1_{d} (|\widehat p^2 - p^2|)  \bm{1}\{ \mathcal{M}_N  \}\right]  \nonumber \\[2mm]
   &\leq \E [ \psi^1_{d} (1.5C_N)   \bm{1}\{ \mathcal{M}_N  \}] \nonumber  \\[2mm]
   &\leq 1.5C_N \E [ \psi_{d}^1(1)] \nonumber \\[2mm]
   & \leq 6C_N, \label{eq:6cn}
\end{align}
 where the first three lines follow from linearity of $\psi^1_d(\cdot)$ and monotonicity of expectation and the last one follows from \eqref{eq:mean1}.  On the  event $\mathcal{M}_N^c$, we can bound $ |\widehat p^2 -p^2| \leq 1 \text{, a.s.,}$ so that
\begin{align*}
    |\E [S_2 \bm{1}\{ \mathcal{M}^c_N  \}] | &\leq \E [ \psi^1_{d} (|\widehat p^2 - p^2|)  \bm{1}\{ \mathcal{M}^c_N  \}]  \\[2mm]
   & \leq \E [ \psi^1_{d} (1)  \bm{1}\{ \mathcal{M}^c_N  \}] \\[2mm]
   & \leq N  P (\mathcal{M}_N^c), 
   \end{align*}   
   where the last line follows from $N_{d1}\geq 0$ and $(N_{d1} +1)^{-1} \leq 1$, a.s.. Using \eqref{eq:chernoff} and $C_N = \sqrt{ 2.25 \ln N/N}$,
\begin{align}
 \textstyle N P (\mathcal{M}^c_N)  \leq 2 N \exp(-\frac{2}{3}N (C_N)^2 ) = 2 N^{-1/2} \leq C_N, \qquad \forall N \geq 6. \label{eq:7cn}
\end{align}
Adding \eqref{eq:6cn} and \eqref{eq:7cn} gives
$ |\E[S_2]|  \leq 7 C_N. $ 

\medskip  

\noindent \textbf{Step 3 (Terms $S_1, S_3, S_4$).} 
 Note that $S_4 =(N+1)^{-1} \leq C_N$.   Invoking \eqref{eq:mean1} gives
  $$
   |\E [S_1] | = 1/4 |\E [ \psi^1_{d} (1) - 4N/(N+1)] | =  N/(N+1) (3/4)^{N+1} \leq   C_N, \quad \forall N \geq 2.
   $$ 
Invoking \eqref{eq:mean3} gives
 $$
  0 \leq \E [S_3]  =\E [ \psi^2_{d} (\widehat{p}^2) ] \leq \E [ \psi^2_{d} (1) ]  \leq  32   N^{-1} \leq  C_N, \quad \forall N \geq 100.
  $$ 
Combining the bounds gives
$$
|\E [ \Xi_d -1] | \leq \sum_{j=1}^4 |\E [ S_d ]| \leq 10 C_N.
$$

\noindent \textbf{Step 4 (Conclusion).} Steps 1--3 established \eqref{eq:varbound}, which corresponds to $x=1$.  The symmetry of DGPs implies \eqref{eq:varbound2} with $x=0$. 
\end{proof}

  \begin{lemma}
  \label{lem:oracle}
 For $N \geq 100$ and $C_N = \sqrt{2.25 \ln N/N}$, $MSE (\widehat W_{G^{*}})$ is lower bounded as 
 \begin{align}
N \cdot MSE (\widehat W_{G^{*}}) > (\sigma^2(1,0) + \sigma^2(0,1)) (1- 10C_N). \label{eq:oraclemse}
 \end{align}
  \end{lemma}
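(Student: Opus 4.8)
The plan is to bound $MSE(\widehat W_{G^*})$ from below by its conditional-variance component and then invoke Lemma~\ref{lem:bias} to control that component up to the $C_N$ error. First I would condition on the assignment vectors $(\mathbf X,\mathbf D)$. Since $\widehat p$ is $(\mathbf X,\mathbf D)$-measurable and $W_{G^*}=1/2$ is a constant, the conditional bias--variance decomposition gives
\[
MSE(\widehat W_{G^*}) = \E\big[\Var(\widehat W_{G^*}\mid \mathbf X,\mathbf D)\big] + \E\big[\big(\E[\widehat W_{G^*}\mid \mathbf X,\mathbf D]-W_{G^*}\big)^2\big] \;\ge\; \E\big[\Var(\widehat W_{G^*}\mid \mathbf X,\mathbf D)\big],
\]
where the inequality simply drops the nonnegative squared-bias term. (Finiteness of $MSE(\widehat W_{G^*})$, needed for Proposition~\ref{thm:main1}(1), follows from $|\widehat W_{G^*}|\le \sum_i|Y_i|$ and the finite second moments of $Y(d)\mid X$.)

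Next I would compute the conditional variance in closed form. Given $(\mathbf X,\mathbf D)$, the estimators $\widehat m(0,1)$ and $\widehat m(1,0)$ are functions of the \emph{disjoint} subsamples $\{i:(D_i,X_i)=(0,1)\}$ and $\{i:(D_i,X_i)=(1,0)\}$, hence conditionally independent; each is a sample mean with the $+1$ in the denominator, so $\Var(\widehat m(d,x)\mid\mathbf X,\mathbf D)=N_{dx}(N_{dx}+1)^{-2}\sigma^2(d,x)$, which equals $0$ consistently on $\{N_{dx}=0\}$. Therefore
\[
\Var(\widehat W_{G^*}\mid\mathbf X,\mathbf D) = \widehat{p}^2\,\frac{N_{01}}{(N_{01}+1)^2}\,\sigma^2(0,1) + (1-\widehat p)^2\,\frac{N_{10}}{(N_{10}+1)^2}\,\sigma^2(1,0).
\]
Multiplying by $N$ and taking expectations over $(\mathbf X,\mathbf D)$, the first term equals $\sigma^2(0,1)$ times the quantity bounded in \eqref{eq:varbound} with $d=0$, and the second equals $\sigma^2(1,0)$ times the quantity bounded in \eqref{eq:varbound2} with $d=1$. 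Lemma~\ref{lem:bias} then yields $N\,\E[\Var(\widehat W_{G^*}\mid\mathbf X,\mathbf D)] > \big(\sigma^2(1,0)+\sigma^2(0,1)\big)(1-10C_N)$, and combining with the first display gives \eqref{eq:oraclemse}.

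Once Lemma~\ref{lem:bias} is in hand, the argument is essentially bookkeeping, so I do not expect a genuine obstacle. The two points that require care are: (i) checking that $\widehat m(0,1)$ and $\widehat m(1,0)$ depend on disjoint sets of observations, so the cross term in $\Var(\widehat W_{G^*}\mid\mathbf X,\mathbf D)$ vanishes; and (ii) correctly aligning the indices with Lemma~\ref{lem:bias}, since \eqref{eq:varbound} supplies the $x=1$ contribution (here with $d=0$) while \eqref{eq:varbound2} supplies the $x=0$ contribution (here with $d=1$), and noting that the strict inequality in the conclusion comes from the strict bounds in Lemma~\ref{lem:bias} rather than from discarding the squared bias.
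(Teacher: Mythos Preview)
Your proposal is correct and follows essentially the same route as the paper: condition on $(\mathbf X,\mathbf D)$, drop the nonnegative squared-bias term so that $MSE(\widehat W_{G^*})\ge \E[\Var(\widehat W_{G^*}\mid\mathbf X,\mathbf D)]$, use disjointness of the cells $(0,1)$ and $(1,0)$ to write the conditional variance as the two-term sum, and then apply Lemma~\ref{lem:bias} with $(d,x)=(0,1)$ and $(d,x)=(1,0)$. Your index alignment and the source of the strict inequality are exactly as in the paper.
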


\begin{proof}[Proof]

\textbf{Step 1.} Let $(\mathbf{X}, \mathbf{D})= (X_i, D_i)_{i=1}^N$ be stacked realizations of $(X_i)_{i=1}^N$ and $(D_i)_{i=1}^N$.  For any $i,j \in \{1,2,\dots, N\}$, we show that
$$
X_i (1-X_j)\Cov(Y_i, Y_j \mid \mathbf{X}, \mathbf{D})=0, \text{ a.s.}
$$  
If the indices are distinct,  $\Cov(Y_i, Y_j \mid \mathbf{X}, \mathbf{D})  =0$ by independence of the samples $i$ and $j$. If the indices coincide, the product $X_i(1-X_j) = X_i (1-X_i) =0$ \text{a.s.} Noting that 
$ \text{Cov} (\widehat m_{d_1 1}, \widehat m_{d_2 0} \mid \mathbf{X}, \mathbf{D})$ consists of $N^2$ summands of the form $X_i (1-X_j)\text{cov} (Y_i, Y_j \mid \mathbf{X}, \mathbf{D})$, we obtain 
   $$
  \Cov (\widehat m_{d_1 1}, \widehat m_{d_2 0} \mid \mathbf{X}, \mathbf{D})  = 0, \qquad \forall d_1, d_2 \in \{1, 0\}.
 $$
Thus,  the variance of each estimator is 
 \begin{align}
  \Var  (\widehat{W}_{G^{*}} \mid \mathbf{X}, \mathbf{D})  &= \Var ( \widehat m_{01} \mid \mathbf{X}, \mathbf{D})   \widehat{p}^2 + \Var ( \widehat m_{10} \mid \mathbf{X}, \mathbf{D})   (1-\widehat{p})^2 \label{eq:varghat} \\
  \Var  (\widehat{W}_{\mathcal{X}} \mid \mathbf{X}, \mathbf{D})  &= \Var ( \widehat m_{11} \mid \mathbf{X}, \mathbf{D})   \widehat{p}^2 + \Var ( \widehat m_{10} \mid \mathbf{X}, \mathbf{D})   (1-\widehat{p})^2. \label{eq:varghat2} 
\end{align}

\noindent \textbf{Step 2.}  Invoking Lemma \ref{lem:bias} with $(d,x)=(0,1)$ and $(d,x)=(1,0)$ gives a lower bound
\begin{align}
 \E [N \cdot\Var ( \widehat m_{01} \mid \mathbf{X}, \mathbf{D})   \widehat{p}^2]&> \sigma^2(0,1) (1-10 C_N)  \label{eq:01dx} \\
 \E [ N \cdot \Var ( \widehat m_{10} \mid \mathbf{X}, \mathbf{D})  (1- \widehat{p})^2]&>  \sigma^2(1,0)(1 - 10 C_N) \label{eq:10dx}
\end{align}
Adding  \eqref{eq:01dx} and \eqref{eq:10dx} gives a lower bound on $\E [ \Var  (\widehat{W}_{G^{*}} \mid \mathbf{X}, \mathbf{D})]$. A lower bound \eqref{eq:oraclemse} on $ MSE (\widehat W_{G^{*}})$ follows.
\end{proof}

 \begin{lemma}
  \label{lem:m0mse}
 For $N \geq 100$ and $C_N = \sqrt{2.25 \ln N/N}$ and $N \epsilon^2 \leq 1$,  $ MSE (\widehat{W}_{\mathcal{X}})$ is upper bounded by 
 \begin{align}
 N \cdot MSE (\widehat{W}_{\mathcal{X}}) &<\sigma^2_{\mathcal{X}}+\dfrac{N\epsilon^2}{2}+ (4+10(\sigma^2(1,1)+\sigma^2(1,0)))C_N.\label{eq:m0mse}
 \end{align}
  \end{lemma}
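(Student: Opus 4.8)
The plan is to condition on the realized covariates and treatments $(\mathbf{X},\mathbf{D})=(X_i,D_i)_{i=1}^N$ and use the exact decomposition
\[
N\cdot MSE(\widehat{W}_{\mathcal{X}})
= \E\bigl[N\,\Var(\widehat{W}_{\mathcal{X}}\mid\mathbf{X},\mathbf{D})\bigr]
+ N\,\E\bigl[B^2\bigr],
\qquad
B := \E[\widehat{W}_{\mathcal{X}}\mid\mathbf{X},\mathbf{D}]-W_{G^{*}},
\]
bounding the conditional-variance term and the squared conditional-bias term separately and matching them to the three summands on the right-hand side of \eqref{eq:m0mse}.

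For the variance term, I would start from the identity \eqref{eq:varghat2} already derived in Step~1 of Lemma~\ref{lem:oracle}, namely $\Var(\widehat{W}_{\mathcal{X}}\mid\mathbf{X},\mathbf{D})=\Var(\widehat m_{11}\mid\mathbf{X},\mathbf{D})\,\widehat p^{\,2}+\Var(\widehat m_{10}\mid\mathbf{X},\mathbf{D})\,(1-\widehat p)^2$, and apply the \emph{upper} bounds of Lemma~\ref{lem:bias} with $(d,x)=(1,1)$ and $(d,x)=(1,0)$ to obtain $\E[N\,\Var(\widehat{W}_{\mathcal{X}}\mid\mathbf{X},\mathbf{D})]<(\sigma^2(1,1)+\sigma^2(1,0))(1+10C_N)$. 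Since the efficiency bound in Part~1 of Lemma~\ref{lem:mylem} satisfies $\sigma^2_{\mathcal{X}}=\sigma^2(1,1)+\sigma^2(1,0)+\epsilon^2/4\ge\sigma^2(1,1)+\sigma^2(1,0)$ under the symmetric DGP $p=\pi(1)=\pi(0)=1/2$, this is at most $\sigma^2_{\mathcal{X}}+10\bigl(\sigma^2(1,1)+\sigma^2(1,0)\bigr)C_N$, which supplies the first and last terms of \eqref{eq:m0mse}.

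For the bias term, I would first note that, by $D\perp(Y(1),Y(0))\mid X$, one has $\E[\widehat m_{d1}\mid\mathbf{X},\mathbf{D}]=m(d,1)\,N_{d1}/(N_{d1}+1)$ and likewise for $\widehat m_{d0}$. Writing $r_{dx}:=(N_{dx}+1)^{-1}$ and using $m(1,1)=\tfrac12-\epsilon$, $m(1,0)=\tfrac12$, $W_{G^{*}}=\tfrac12$, this yields $B=-\epsilon\widehat p-(\tfrac12-\epsilon)r_{11}\widehat p-\tfrac12 r_{10}(1-\widehat p)=:A+R$, with leading $O(\epsilon)$ term $A=-\epsilon\widehat p$ and $O(N^{-1})$ remainder $R$. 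Then $\E[A^2]=\epsilon^2\E[\widehat p^{\,2}]=\epsilon^2(\tfrac14+\tfrac1{4N})$ since $\widehat p=S/N$ with $S\sim\mathrm{Binom}(N,1/2)$; and, bounding $0\le\widehat p\le1$, $(\tfrac12-\epsilon)^2<\tfrac14$, and using \eqref{eq:mean1}--\eqref{eq:mean3} (so that $\E r_{dx}\le4N^{-1}$ and $\E r_{dx}^2\le32N^{-2}$), one obtains $|\E[AR]|\le4\epsilon N^{-1}$ and $\E[R^2]\le32N^{-2}$. Hence $N\,\E[B^2]\le \tfrac{N\epsilon^2}{4}+\tfrac1{4N}+8\epsilon+32N^{-1}$, and since $\epsilon\le N^{-1/2}$ by the hypothesis $N\epsilon^2\le1$, the last three terms are bounded by $\tfrac1{4N}+8N^{-1/2}+32N^{-1}\le4C_N$ for all $N\ge100$ (an elementary inequality equivalent to $8+32.25\,N^{-1/2}\le6\sqrt{\ln N}$), while $\tfrac{N\epsilon^2}{4}\le\tfrac{N\epsilon^2}{2}$. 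This gives $N\,\E[B^2]\le\tfrac{N\epsilon^2}{2}+4C_N$; adding the variance bound yields \eqref{eq:m0mse}.

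The only real obstacle is careful bookkeeping: one must expand $\E[A^2]$ and the cross term $\E[AR]$ precisely enough to see that the leading $\tfrac{N\epsilon^2}{4}$ fits comfortably inside $\tfrac{N\epsilon^2}{2}$ while every residual $O(N^{-1/2})$ and $O(N^{-1})$ contribution is absorbed into $4C_N$, and to track exactly which two cases of Lemma~\ref{lem:bias} to invoke and how $\sigma^2(1,1)+\sigma^2(1,0)$ relates to $\sigma^2_{\mathcal{X}}$ via Lemma~\ref{lem:mylem}.
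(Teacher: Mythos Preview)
Your proposal is correct and follows essentially the same approach as the paper: both arguments use Lemma~\ref{lem:bias} to bound $\E[N\,\Var(\widehat{W}_{\mathcal{X}}\mid\mathbf{X},\mathbf{D})]$ by $(\sigma^2(1,1)+\sigma^2(1,0))(1+10C_N)$, split the conditional mean into a leading $-\epsilon\widehat p$ piece and an $O(N^{-1})$ remainder, and control the remainder via the binomial moment bounds \eqref{eq:mean1}--\eqref{eq:mean3}. The only difference is organizational: the paper writes $MSE=(\text{bias})^2+\Var$ and then applies the Law of Total Variance to $\Var(\widehat{W}_{\mathcal{X}})$, separately bounding $\Var(R)$ and $\Cov(R,1/2-\epsilon\widehat p)$, whereas you group things as $MSE=\E[\text{cond.\ var}]+\E[B^2]$ and expand $\E[(A+R)^2]$ directly; the two decompositions are algebraically identical and the resulting estimates match term by term.
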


    \begin{proof}[Proof]
 \textbf{Step 1 (Bias).\;}   The remainder term $ R= W_{\mathcal{X}} - \widehat{W}_{\mathcal{X}}$ takes the form
\begin{align}
R= (1/2-\epsilon) \widehat p (N_{11} + 1)^{-1} +1/2(1-\widehat p ) (N_{10}  + 1)^{-1} \label{eq:remainder}
\end{align}
and is non-negative a.s. for $\epsilon \in (0, 1/2)$.  Furthermore, it is bounded as
\begin{align*}
0 \leq  \E [ R] \overset{(i)}{\leq} 1/2\E [  (N_{11} + 1)^{-1} ] + 1/2\E [  (N_{10} + 1)^{-1} ] \overset{(ii)}{\leq} 4/N,
\end{align*}
where (i) follows from the monotonicity of expectation and $\widehat p \leq 1, \text{ a.s.}, $ and (ii) from the standard property of binomial distribution stated in \eqref{eq:mean1}. Next, note that $\E [ 1/2 - \epsilon \widehat p ] = 1/2 - \epsilon p = W_{\mathcal{X}}$ since $\E [ \widehat p ] = p$.  The bias is bounded from above and below
 \begin{align}
0 \leq |W_{G^{*}} -  \E [ \widehat{W}_{\mathcal{X}}]| &\leq  |W_{G^{*}} -  W_{\mathcal{X}}| + |W_{\mathcal{X}} -  \E [ \widehat{W}_{\mathcal{X}}] | \nonumber  \\
&= |W_{G^{*}} -  W_{\mathcal{X}} |+ |\E [ R ]| \nonumber \\
 &\leq \epsilon/2 + 4 N^{-1}. \label{eq:m0mean} 
\end{align}

\noindent 
\textbf{Step 2  (Variance).}\; We show that variance  is upper bounded by
 \begin{align}
N \cdot \Var (\widehat{W}_{\mathcal{X}}) &<\sigma^2_{\mathcal{X}} + (\sigma^2(1,1)+\sigma^2(1,0))10 C_N +2 C_N   .\label{eq:m0var}
 \end{align}
The variance of the conditional mean is 
\begin{align}
\Var (\E [ \widehat{W}_{\mathcal{X}} \mid \mathbf{X}, \mathbf{D}  ]) &= \Var (R) - 2  \Cov(R, 1/2-\epsilon  \widehat p )+ \dfrac{\epsilon^2}{4N}. \label{eq:condmean}
\end{align}
Invoking \eqref{eq:mean3} bounds the variance of the remainder
$$
N \Var (R)  \leq 2/4 \E [N  (N_{11} + 1)^{-2}] + 2/4 \E [  N(N_{10}  + 1)^{-2} ] \leq 32 N^{-1} \leq C_N, \quad \forall N \geq 100.
$$
Invoking  Cauchy inequality bounds the covariance term
$$
2 N | \Cov(R, 1/2-\epsilon  \widehat p ) |  \leq 2\sqrt{ 32 \epsilon^2/(4N)  } \leq 4 \sqrt{2} N^{-1} \leq C_N, \quad \forall N \geq 8.
$$
Invoking \eqref{eq:varbound} gives 
\begin{align}
\label{eq:condvar}
 \E [ N \cdot \Var (\widehat{W}_{\mathcal{X}} \mid \mathbf{X}, \mathbf{D}  ) ]  \leq (\sigma^2(1,1)+\sigma^2(1,0))(1+10 C_N)  
\end{align}
Adding \eqref{eq:condmean} and \eqref{eq:condvar} gives
\begin{align*}
N \cdot \Var (\widehat{W}_{\mathcal{X}}) &= N \cdot  \Var (\E [ \widehat{W}_{\mathcal{X}} \mid \mathbf{X}, \mathbf{D}  ]) + \E [ N \cdot \Var (\widehat{W}_{\mathcal{X}} \mid \mathbf{X}, \mathbf{D}  ) ]\\
&< \sigma^2_{\mathcal{X}} + (\sigma^2(1,1)+\sigma^2(1,0))10 C_N +2 C_N.
\end{align*}

\noindent 
\textbf{Step 3 (MSE).}\;  Combining \eqref{eq:m0mean} and\eqref{eq:m0var} gives \eqref{eq:m0mse} since  $16 N^{-1} \leq C_N$ for all $N \geq 100$.
  \end{proof}

\begin{lemma}[MSE Ranking]
\label{lem:mylem21}
For any $\epsilon \in (0, N^{-1/2})$ and $N$ large enough, MSE ranking  \eqref{eq:mse}  holds.
\end{lemma}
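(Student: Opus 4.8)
The plan is to combine the lower bound on $MSE(\widehat W_{G^*})$ from Lemma \ref{lem:oracle} with the upper bound on $MSE(\widehat W_{\mathcal{X}})$ from Lemma \ref{lem:m0mse} and show that, for $\epsilon \in (0, N^{-1/2})$ and $N$ large, the latter is strictly smaller. The two bounds are (after multiplying through by $N$):
\[
N \cdot MSE(\widehat W_{G^*}) > \bigl(\sigma^2(1,0) + \sigma^2(0,1)\bigr)(1 - 10 C_N),
\]
\[
N \cdot MSE(\widehat W_{\mathcal{X}}) < \sigma^2_{\mathcal{X}} + \tfrac{N\epsilon^2}{2} + \bigl(4 + 10(\sigma^2(1,1) + \sigma^2(1,0))\bigr) C_N,
\]
with $C_N = \sqrt{2.25 \ln N / N} \to 0$. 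Recall from Lemma \ref{lem:mylem}(1) and the symmetric parametrization $p = \pi(1) = \pi(0) = 1/2$ that $\sigma^2_{\mathcal{X}} = \sigma^2(1,1)\cdot 2p + \sigma^2(1,0)\cdot 2(1-p) + \epsilon^2 p(1-p)$; plugging the variance values $\sigma^2(1,1) = 1$, $\sigma^2(1,0) = 1$, $\sigma^2(0,1) = 10$ (so $\sigma^2(1,0)+\sigma^2(0,1) = 11$ while $\sigma^2_{\mathcal{X}}$ is bounded by a constant near $2$), one sees that the leading constant in the lower bound strictly exceeds the leading constant in the upper bound, i.e. $\sigma^2(1,0) + \sigma^2(0,1) > \sigma^2_{\mathcal{X}}$ with a fixed positive gap $\gamma > 0$ that does not depend on $N$ or $\epsilon$.

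The main steps are then: (i) fix the symmetric DGP and record the explicit values of $\sigma^2_{\mathcal{X}}$ and $\sigma^2(1,0) + \sigma^2(0,1)$, verifying the constant gap $\gamma := (\sigma^2(1,0)+\sigma^2(0,1)) - \sigma^2_{\mathcal{X}} > 0$ (for the variances in the footnote, $\gamma$ is even larger, which is what shrinks the cutoff sample size); (ii) use $\epsilon < N^{-1/2}$ to bound $N\epsilon^2/2 < 1/2$; (iii) collect all the $C_N$ terms — there is a $-10(\sigma^2(1,0)+\sigma^2(0,1))C_N$ contribution loosening the lower bound and a $+(4 + 10(\sigma^2(1,1)+\sigma^2(1,0)))C_N$ contribution loosening the upper bound, so the combined slack is at most $K C_N$ for an explicit constant $K$ depending only on the fixed variances; (iv) conclude that
\[
N\cdot MSE(\widehat W_{G^*}) - N \cdot MSE(\widehat W_{\mathcal{X}}) > \gamma - \tfrac12 - K C_N,
\]
which is strictly positive once $C_N < (\gamma - 1/2)/K$, i.e. for all $N$ exceeding an explicit threshold (this is where the numerical cutoffs $\approx 1745$ / $\approx 6000$ come from, after also imposing $N \geq 100$ and $N\epsilon^2 \leq 1$ as required by Lemmas \ref{lem:bias}--\ref{lem:m0mse}).

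This argument is essentially bookkeeping, so there is no real obstacle — the only thing to be careful about is that the constant welfare gap $\gamma$ genuinely dominates both the $N\epsilon^2/2$ term (controlled by the hypothesis $\epsilon < N^{-1/2}$) and the vanishing $C_N$ terms simultaneously, and that all the sample-size side conditions invoked along the way ($N \geq 100$, $N\epsilon^2 \leq 1$, $C_N$ small enough) are compatible. Since $\gamma$ is bounded away from zero independently of $\epsilon$, while the competing terms are either $o(1)$ or bounded by $1/2 < \gamma$, the conclusion $MSE(\widehat W_{\mathcal{X}}) < MSE(\widehat W_{G^*})$ follows for all sufficiently large $N$, which is exactly \eqref{eq:mse}. \qed
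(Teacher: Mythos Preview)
Your proposal is correct and follows essentially the same route as the paper: combine the lower bound from Lemma~\ref{lem:oracle} with the upper bound from Lemma~\ref{lem:m0mse}, use $N\epsilon^2<1$ to control the bias term, and observe that the fixed gap between the leading constants (your $\gamma$, the paper's $\sigma^2(0,1)-\sigma^2(1,1)-3/4$) eventually dominates the vanishing $C_N$ terms. The only cosmetic slip is that $\gamma$ as you defined it does depend on $\epsilon$ through the $\epsilon^2 p(1-p)$ term in $\sigma^2_{\mathcal X}$, but since $\epsilon<1$ this term is uniformly bounded and your argument goes through unchanged.
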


\begin{proof}[Proof of Lemma \ref{lem:mylem21}]
Let $N \epsilon^2 \leq 1$ and $C_N = \sqrt{2.25 \ln N /N}$. Lemma \ref{lem:oracle} gives a lower bound on $MSE (\widehat W_{G^{*}})$
\begin{align*}
N \cdot MSE (\widehat W_{G^{*}}) >  (\sigma^2(1,0) + \sigma^2(0,1)) (1- 10C_N). 
\end{align*}
Lemma \ref{lem:m0mse} gives an upper bound on $MSE(\widehat{W}_{\mathcal{X}})$
\begin{align}
N \cdot MSE(\widehat{W}_{\mathcal{X}}) < 3/4+(\sigma^2(1,1)+\sigma^2(1,0))+[4+10(\sigma^2(1,1)+\sigma^2(1,0))]C_N
\end{align}
 Therefore, when $\sigma^2(0,1)-\sigma^2(1,1)-3/4>0$, there exists $N_0$ that depends on conditional variances such that 
 $$
N \cdot MSE (\widehat W_{G^{*}}) - N \cdot MSE(\widehat{W}_{\mathcal{X}})> 0.
$$
\end{proof}

\subsection{Proof of Proposition \ref{thm:equivalence}}
\label{sec:sec2}

We start with an auxiliary Lemma. Let $G^{*}  \triangle  G = G^{*} \setminus G \cup G \setminus G^{*}$ denote the symmetric difference of sets $G^{*}$ and $G$.  Let $ P(X \in G^{*}  \triangle  G)$ denote the share of people to be treated differently from the optimal policy, or the non-optimal share. Lemma A.7 in \citet{KitagawaTetenovSupp}, borrowing from \citet{Tsybakov}, bounds the welfare gap in terms of non-optimal share.  Lemma \ref{lem:tsybakov} complements this result by adding an upper bound on the standard deviation gap. 

\begin{lemma}
\label{lem:tsybakov}
 Suppose Assumptions \ref{ass:overlap} and \ref{ass:margin} hold.  Then, 
 (1) The welfare gap is bounded 
\begin{align}
C_B(P (X \in G^{*}  \triangle  G))^{1+1/\delta} &\leq W_{G^{*}} -W_G \leq M P (X \in G^{*}  \triangle  G), \label{eq:tsybakov}
\end{align}
where $C_B=\eta \delta(\frac{1}{1+\delta})^{1+1/\delta}>0$; \newline
(2) The variance gap is bounded as
\begin{align}
\label{eq:varineq}
\sigma^2_{G^{*}} - \sigma^2_G \leq \frac{5}{4}\frac{M^2}{\kappa} P (X \in G^{*}  \triangle  G).
\end{align}
(3) The standard deviation gap is bounded as
\begin{align}
\label{eq:varineq2}
\sigma_{G^{*}} - \sigma_G \leq \frac{5}{4} \frac{M^2}{2\bar{\sigma} \kappa} P (X \in G^{*}  \triangle  G).
\end{align}
\end{lemma}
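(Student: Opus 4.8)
\textbf{Part (1).} The upper bound $W_{G^*} - W_G \leq M\,P(X \in G^* \triangle G)$ is immediate: writing $W_{G^*} - W_G = \E[(\bm{1}\{X \in G^*\} - \bm{1}\{X \in G\})\tau(X)]$, the integrand is supported on $G^* \triangle G$ and $|\tau(X)| = |m(1,X) - m(0,X)| \leq M$ by Assumption \ref{ass:overlap}(ii). The lower bound is exactly Lemma A.7 of \citet{KitagawaTetenovSupp}, which follows from \citet{Tsybakov}; I would simply cite it, since the margin condition (Assumption \ref{ass:margin}) is the hypothesis that makes it go through, and the constant $C_B = \eta\delta(1/(1+\delta))^{1+1/\delta}$ matches $C_1(\delta)$ from Remark \ref{rm:vanish}. (If a self-contained argument is wanted: split $G^* \triangle G$ into the region where $|\tau(X)| \geq t$ and its complement, use that $W_{G^*} - W_G \geq \E[|\tau(X)|\bm{1}\{X \in G^* \triangle G\}] \geq t(P(X \in G^* \triangle G) - P(|\tau(X)| < t))$, invoke the margin bound $P(|\tau(X)| < t) \leq (t/\eta)^\delta$, and optimize over $t$.)

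\textbf{Part (2).} Start from the variance formula \eqref{eq:var_bound} in Lemma \ref{lem:LvdLg}, applied to both $G^*$ and $G$, and subtract. The conditional-variance term is $\E[A_G(X)]$ with $A_G(X) = \frac{\sigma^2(1,X)}{\pi(X)}\bm{1}\{X \in G\} + \frac{\sigma^2(0,X)}{1-\pi(X)}\bm{1}\{X \notin G\}$; the difference $A_{G^*}(X) - A_G(X)$ vanishes off $G^* \triangle G$, and on that set it is bounded in absolute value by $\frac{\sigma^2(1,X) + \sigma^2(0,X)}{\kappa}$ using the overlap bound $\pi(X), 1-\pi(X) > \kappa$. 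Since $|Y| \leq M/2$ a.s., each conditional variance is at most $M^2/4$, so this contributes at most $\frac{M^2/2}{\kappa}P(X \in G^* \triangle G)$. For the term $\Var(m(1,X)\bm{1}\{X \in G\} + m(0,X)\bm{1}\{X \in G^c\})$, write $\xi_G(X) = m(1,X)\bm{1}\{X \in G\} + m(0,X)\bm{1}\{X \in G^c\}$, note $\xi_{G^*} - \xi_G = \tau(X)(\bm{1}\{X \in G^*\} - \bm{1}\{X \in G\})$ is supported on $G^* \triangle G$ and bounded by $M$ (in fact $|m(d,X)| \leq M/2$), and bound $\Var(\xi_{G^*}) - \Var(\xi_G) = \E[\xi_{G^*}^2 - \xi_G^2] - (\E[\xi_{G^*}] + \E[\xi_G])(\E[\xi_{G^*}] - \E[\xi_G])$. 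Both $|\xi_{G^*}^2 - \xi_G^2| = |\xi_{G^*} - \xi_G|\,|\xi_{G^*} + \xi_G| \leq M \cdot M \bm{1}\{X \in G^* \triangle G\}$ (using $|\xi| \leq M/2$ so $|\xi_{G^*}+\xi_G| \leq M$) and the second piece is bounded by $M \cdot M\,P(X \in G^* \triangle G)$; combining gives a bound of the form (const)$\cdot M^2 P(X \in G^* \triangle G)$. Adding the two contributions and collecting constants yields the claimed $\tfrac54 \tfrac{M^2}{\kappa} P(X \in G^* \triangle G)$ — the main bookkeeping task is to verify that the constants genuinely sum to $5/4$ (or a smaller value that can be loosened to $5/4$); this is the one place where I would actually grind through the arithmetic carefully.

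\textbf{Part (3).} Factor $\sigma_{G^*} - \sigma_G = \dfrac{\sigma^2_{G^*} - \sigma^2_G}{\sigma_{G^*} + \sigma_G}$. By Lemma \ref{lem:dr-cov-lower} (whose hypotheses are implied by those assumed here, via $\inf \sigma^2(d,x) \geq \underline{\sigma}^2 > 0$ in the setting where \eqref{eq:varineq2} is applied), $\sigma_{G^*} + \sigma_G \geq 2\bar\sigma$ for the relevant lower bound $\bar\sigma$ on the standard deviations; combining with Part (2) gives $\sigma_{G^*} - \sigma_G \leq \dfrac{1}{2\bar\sigma}\cdot \tfrac54\tfrac{M^2}{\kappa} P(X \in G^* \triangle G)$, which is \eqref{eq:varineq2}. (If $\sigma^2_{G^*} - \sigma^2_G < 0$ the inequality is trivial since the left side is then negative.)

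\textbf{Main obstacle.} None of the three parts is deep; the only genuine care is in Part (2), tracking the numerical constants through the two-term decomposition of the variance so that they aggregate to the stated $5/4 \cdot M^2/\kappa$ — in particular making sure the $|m(d,x)| \leq M/2$ bound (rather than $\leq M$) is used where it tightens the cross-terms, and that the conditional-variance term's $M^2/4$ bound is not double-counted. Everything else is a direct consequence of Assumptions \ref{ass:overlap}–\ref{ass:margin}, the efficiency-bound formula \eqref{eq:var_bound}, and the already-cited Tsybakov-type margin inequality.
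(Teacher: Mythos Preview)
Your proposal is correct and follows essentially the same route as the paper: cite the Tsybakov/Kitagawa--Tetenov lemma for Part~(1), decompose $\sigma^2_{G^*}-\sigma^2_G$ into the conditional-variance piece plus the $\E[\xi^2]-(\E\xi)^2$ pieces for Part~(2), and divide by $\sigma_{G^*}+\sigma_G\geq 2\underline\sigma$ for Part~(3). The only point to fix is exactly the one you flagged: your factored bound $|\xi_{G^*}^2-\xi_G^2|\leq M\cdot M$ gives a constant $3/2$ rather than $5/4$; the paper instead bounds the three pieces by $M^2/(2\kappa)$, $M^2/2$, and $M^2$ (using $m^2(d,x)\leq M^2/4$ directly rather than factoring), so that with $\kappa<1/2$ one gets $(1+3\kappa)/2\leq 5/4$.
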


\begin{proof}
\textbf{Step 1.} The lower bound \eqref{eq:tsybakov} is stated as Lemma A.7 in \citet{KitagawaTetenovSupp} and originally established in \citet{Tsybakov}. The upper bound is straightforward.

\textbf{Step 2.}  We introduce extra notation to simplify variance expressions. Given a policy $G$, let $G_1 = G$ and $G_0 = G^c$.  Then, the welfare $W_G$ in \eqref{eq:wfbstar} can be equivalently rewritten as  
 $$
 \textstyle  W_G = \E \left[\sum_{d \in \{1,0\}} m(d, X) \bm{1}\{ X \in G_d \}\right].
 $$
 Since $\bm{1}\{ X \in G  \} \bm{1} \{ X \in G^c \} =0 \text{ a.s.}$, we have
 $$
  \textstyle \E \left[ ( \sum_{d \in \{1,0\}} m(d, X) \bm{1}\{ X \in G_d \} )^2\right]  = \E \left[\sum_{d \in \{1,0\}} m^2(d, X) \bm{1}\{ X \in G_d \}\right].
 $$ 
Thus, by the Law of Total Variance,
\begin{align}
\textstyle \sigma^2_G = \sum_{d \in \{1,0\}} \E \bigg[ \bigg( \dfrac{ \sigma^2(d,X)}{ P (D=d \mid X) } + m^2(d,X) \bigg) \bm{1}\{ X \in G_d \} \bigg]- W^2_G. \label{eq:efficiencyboundg}
\end{align}
Denoting
\begin{align*} 
T_{1G}&= \E \bigg[ \bigg(\dfrac{\sigma^2(1,X)}{\pi(X)} -\dfrac{\sigma^2(0,X)}{1-\pi(X)} \bigg)   \bigg(\bm{1}\{ X \in G^{*} \setminus  G \} - \bm{1}\{ X \in  G \setminus G^{*}   \} \bigg) \bigg];  \\[3mm]
T_{2G}&= \E  \bigg[ \bigg(m^2(1,X) - m^2(0,X) \bigg) \bigg(\bm{1}\{ X \in G^{*} \setminus  G \} - \bm{1}\{ X \in  G \setminus G^{*}   \} \bigg) \bigg];\\[3mm]
T_{3G}&= - (W^2_{G^{*}}  - W^2_G),
\end{align*}
we can write
\begin{align}
\label{eq:sigmadiff}
\sigma^2_{G^{*}} - \sigma^2_G = T_{1G} + T_{2G} + T_{3G}.
\end{align}

\textbf{Step 3.}  By Assumption \ref{ass:overlap} and Jensen inequality, $\sigma^2(d, x) \leq m^2 (d,x) \leq M^2/4$, for all $d \in \{0, 1\}$, $x \in \mathcal{X}$. Thus, the first two terms are bounded as
\begin{align*}
T_{1G} &\leq  M^2/(2\kappa) \cdot P (X \in G^{*}  \triangle  G)\\
 T_{2G}&\leq M^2 / 2 \cdot P(X \in G^{*}  \triangle  G).
\end{align*}
The final term is bounded as
\begin{align*}
T_{3G} &\leq  |(W_{G^{*}}  - W_G)| |W_{G^{*}} + W_G|  \leq  M^2P (X \in G^{*}  \triangle  G).
\end{align*}
where the second inequality follows from $|W_{G^{*}} + W_G| \leq M$  and  \eqref{eq:tsybakov}. Collecting the terms and using the fact that $\kappa < 1/2$,
\[
T_{1G} + T_{2G} + T_{3G} \leq \frac{M^2}{\kappa} \frac{1 + 3\kappa}{2} \leq \frac{5}{4} \frac{M^2}{\kappa}, 
\]
so \eqref{eq:varineq} follows. \medskip 

\textbf{Step 4.} The corresponding bound on the standard derivation gap is
$$
\sigma_{G^{*}} - \sigma_G \overset{(i)}{\leq} \dfrac{ \sigma^2_{G^{*}} - \sigma^2_G}{2 \underline{\sigma}} \overset{(ii)}{\leq} \frac{5}{4}\frac{M^2}{2\underline{\sigma}\kappa} P (X \in G^{*}  \triangle  G)
$$
where (i) follows from $\sigma^2_{G^{*}} \geq \underline{\sigma}^2$ and $\sigma^2_{\mathcal{X}} \geq \underline{\sigma}^2$ (from Lemma \ref{lem:dr-cov-lower}) and (ii) from \eqref{eq:varineq}.
\end{proof}

\subsubsection{Proof of Proposition \ref{thm:equivalence}: Upper Bound}

Let $C_A= N^{-1/2} 1.25 z_{1-\alpha}M^2/(2 \underline{\sigma} \kappa)$ and $C_B$ be as defined in Lemma \ref{lem:tsybakov}. Define a function $g: \mathbf{R} \to \mathbf{R}$ as
\begin{equation}
\label{eq:maing}
g(x)=C_A x - C_B x^{1/\delta+1}.
\end{equation}
The LCB gap $\Delta_G$  is bounded as
$$
\Delta_G \overset{(i)}{\leq} C_A P (X \in G^{*}  \triangle  G)  - (W_{G^{*}} - W_G)  \overset{(ii)}{\leq}   g(P (X \in G^{*}  \triangle  G)),
$$
where (i) follows from \eqref{eq:varineq2} and (ii) from the lower bound in \eqref{eq:tsybakov}. Note that the function $g(x)$ is globally concave. Its' global maximum and maximizer are
\begin{align*}
g(x^{*}) =\left(\dfrac{C_A \delta }{C_B (1+\delta)}\right)^{\delta} \dfrac{C_A}{1+\delta}, \quad x^{*} = \left(\dfrac{C_A \delta }{C_B (1+\delta)}\right)^{\delta}.
\end{align*}
Therefore, for any $G \subseteq \mathcal{X}$,
\[
\Delta_{G} \leq g(x^*) = \overline{C} N^{-\frac{1 + \delta}{2}},
\]
where
 \[
  \overline{C}_{\delta, \eta} = \left( \frac{1.25 z_{1-\alpha} M^2}{2 \underline{\sigma} \kappa}\right)^{1 + \delta}  \frac{1}{\eta^{\delta}}.
 \]
 Under our assumptions, $\overline{C} = \max_{\delta, \eta}C_{\delta, \eta} < \infty$ and the slowest rate is attained at $\underline{\delta}.$

\subsubsection{Proof of Proposition \ref{thm:equivalence}: Lower Bound}

The proof is constructive and consists of three steps. Step 1 describes a class of DGPs. Step 2 shows that the proposed DGPs belong to the model $\mathbf{P}$. Step 3 establishes the lower bound.

\textbf{Step 1.}  Let $X \sim U[0,1]$, and the propensity score be constant, $\pi(x) = 1/2$, $\mathcal{X}$-a.s. Let  $\epsilon \in (0, 1/2)$ and $\nu > 0$ be a rational number for which the function $a \mapsto a^{\nu}$ is well-defined for both positive and negative values of $a$.\footnote{This is the case if and only if $\nu = \frac{p}{q}$, where $p, q$ are natural numbers with $\text{gcd}(p, q) = 1$.\label{foot:nu_rest}} Let $Y$ be a random variable supported on $[-M/2, M/2]$ so that the conditional means and second moments are bounded as $|m(d, x)| \leq M/2$ and $m^2(d, x)\leq M^2/4$. Consider the following specification:
\begin{equation} \label{eq:prop2_dgp}
\begin{array}{ll}
		m(1,x) = 0; & \sigma^2(1,x) = M^2/10; \\
		m(0,x) = - (x-\epsilon)^{\nu} M/5; &\sigma^2(0,x) = M^2/5.
\end{array}
\end{equation}
The CATE function is given by
$$
\tau(x) = m(1,x) - m(0,x) = (x-\epsilon)^{\nu} M / 5,
$$
so the first-best policy is
$$
G^{*} = [\epsilon,\, 1].
$$
Evidently, this distribution satisfies Assumption \ref{ass:overlap}. \medskip

\textbf{Step 2.}   We show that the proposed sequence of DGPs satisfies Assumption \ref{ass:margin} for a suitable choice of $\nu$. Note that for $t$ such that $(5t/M)^{1/\nu} \leq \epsilon$, 
$$
P (|X-\epsilon|^{\nu}M/5 \leq t) = P (|X-\epsilon| \leq (5t/M)^{1/\nu}) = (\epsilon + (5t/M)^{1/\nu}) - (\epsilon - (5t/M)^{1/\nu}) = 2(5t/M)^{1/\nu}.
$$
For $t$ such that $(5t/M)^{1/\nu} \geq \epsilon$,
$$
P (|X-\epsilon| \leq (5t/M)^{1/\nu}) \leq \epsilon + (5t/M)^{1/\nu} \leq 2(5t/M)^{1/\nu}.
$$
Thus, choosing $\nu$ such that $\delta = 1/\nu \geq \underline{\delta}$ but arbitrarily close to it,\footnote{Since rationals are dense in reals, it is without loss of generality to assume $\underline{\delta}$ is rational. Then, it can be expressed either as $\frac{p}{2^d q}$ where $p, d, q$ are natural numbers and $\text{gcd}(p, q) = 1$, or as $\frac{2^dp}{q}$ with the same conditions. In the former case, setting $\delta = 1/\nu = \underline{\delta}$ leads to $\nu$ satisfying the requirement of footnote \ref{foot:nu_rest}. In the latter case, setting $\delta = 1/\nu = \left( \frac{k}{k-1}\right)^d \frac{2^d p }{q}$ for any prime number $k$ corresponds to $\nu = \left(\frac{k-1}{2}\right)^2 \frac{q}{2^d p}$, which is also satisfies the requirement of footnote \ref{foot:nu_rest}. For arbitrarily large $k$, $1/\nu$ will be arbitrarily close to $\underline{\delta}$. } 

\[
P(|X - \epsilon|^{\nu}M/5 \leq t) \leq 2(5t/M)^{1/\nu} =  \left(\frac{t}{\eta}\right)^{\delta},
\]
so that \eqref{eq:margin} holds for any $\varepsilon \in (0, 1/2)$.  \medskip

\textbf{Step 3.} The first-best policy differs from $\mathcal{X}$ only for $x \in [0, \epsilon]$. Thus, the welfare gap is 
$$
W_{G^{*}} -W_{\mathcal{X}} = -\int_0^{\epsilon} (x-\epsilon)^{\nu}M/5 dx  = \frac{\epsilon^{\nu+1}}{\nu+1}M/5.
$$ 
The variance gap is obtained by plugging $G=\mathcal{X}$ into \eqref{eq:sigmadiff}. We have
\[
	T_{1\mathcal{X}} =  \frac{ \epsilon M^2}{5}; \;\;\;\;\;\;\; 	T_{2 \mathcal{X}} = \dfrac{\epsilon^{2\nu+1}}{2\nu+1} \frac{M^2}{25}; \;\;\;\;\;\;\; T_{3 \mathcal{X}} = - \dfrac{\epsilon^{\nu + 1}}{\nu + 1} \left( 2\dfrac{(1-\epsilon)^{\nu + 1}}{\nu + 1} - \dfrac{\epsilon^{\nu + 1}}{\nu + 1} \right)\frac{M^2}{25},
\]
so that
$$
\sigma^2_{G^{*}} -\sigma^2_{\mathcal{X}} = \frac{4M^2}{25}\epsilon + \underbrace{\left(\frac{\epsilon^{2\nu + 1}}{2\nu + 1} + \frac{\epsilon^{2\nu + 2}}{(\nu + 1)^2}\right) \frac{M^2}{25}}_{\geq 0}  + \underbrace{\left(\epsilon - \frac{2}{(\nu+1)^2} \epsilon^{\nu + 1}(1-\epsilon)^{\nu + 1}\right)\frac{M^2}{25}}_{=f(\epsilon)} > \frac{4M^2}{25}\epsilon,
$$
where the final inequality follows from the fact that $f'(\epsilon) \geq 0$ so $f(\epsilon)\geq f(0) = 0 $. On the other hand, recalling the DGP in \eqref{eq:prop2_dgp}, we can bound $\sigma^2_{G^*} < M^2/5(1 + \epsilon) < 3 M^2/10$ and $\sigma^2_{\mathcal{X}} < M^2/5$. Thus, $\sigma_{G^*} + \sigma_{\mathcal{X}} < M$, and
$$
\sigma_{G^{*}} -\sigma_{\mathcal{X}} = \frac{\sigma^2_{G^*} - \sigma^2_{\mathcal{X}}}{\sigma_{G^*} + \sigma_{\mathcal{X}}} >  \frac{4M}{25}\varepsilon.
$$
Setting $\epsilon^{\nu} = (4z_{1-\alpha}/5) N^{-1/2}$ and recalling that $\nu = 1/\delta$ gives a lower bound 
\[
\Delta_{\mathcal{X}}  > \underline{C} N^{-\frac{1 + \delta}{2}},
\]
where $\underline{C}_{\delta} = \frac{4 M}{5} \left( \frac{4z_{1-\alpha}}{5}\right)^{1+\delta} \frac{1}{1 + \delta}$. Since the above inequality holds for all for all $\delta$ arbitrarily close to $\underline{\delta}$, the stated result follows. 
 \qed

\section{Inverting Moment Inequality Tests}
\label{app:shb}

\subsection{Proof of Proposition \ref{prop:max_gms} }

The following lemma gives a closed-form solution for the lower confidence band based on inverting the Generalized Moment Selection test of \citet{andrews2010inference}.  Since the critical value of the GMS test is a step function, and the test statistic is a maximum of a finite number of linear functions, the confidence region obtained by test inversion may not be convex (although it can be shown that the probability of such an event approaches zero as $N$ increases). So, in the statement below, we conservatively define ${LCB}_{\mathcal{G}}^{GMS}$ as the lowest point of the confidence set obtained by test inversion.

\begin{lemma}[LCB based on GMS test inversion]
Denote: 
 \[
 \theta^{(j)} = \max \limits_{G \in \mathcal{G}}\left(\widehat{W}_G - \widehat{c}^{(j)}_{\alpha} \frac{\widehat{\sigma}_G}{\sqrt{N}} \right).
 \]
The lower confidence band obtained by inverting the GMS test takes the form:
\begin{align}
\label{eq:gms}
\widehat{LCB}_{\max}^{\text{GMS}} = \min\{\theta^{(j)}: t^{(j)} \geqslant \theta^{(j)} > t^{(j+1)}\}.
\end{align}
\end{lemma}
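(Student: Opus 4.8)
The plan is to invert the test slab by slab, exploiting that the GMS critical value $\widehat c_{\alpha,\max}^{GMS}(\theta)$ is a step function of $\theta$ and that, on each interval of constancy, test inversion reduces to the least-favorable computation carried out in Proposition~\ref{prop:max}.

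First I would locate the jumps of the critical value. Since $\widehat\sigma_G>0$, the inclusion $G\in I_N(\theta)$ is equivalent to $\theta<\widehat W_G+\kappa_N\widehat\sigma_G/\sqrt N=:t_G$, so $I_N(\theta)$ depends on $\theta$ only through which of the numbers $t_G$ exceed $\theta$. Ordering them as $t^{(1)}\ge\cdots\ge t^{(|\mathcal G_{\text{test}}|)}$ and setting $t^{(|\mathcal G_{\text{test}}|+1)}=-\infty$, one obtains $I_N(\theta)=I^{(j)}$ for every $\theta\in[t^{(j+1)},t^{(j)})$, where $I^{(j)}$ is the set of policies attaining $t^{(1)},\dots,t^{(j)}$ (ties only empty some of these slabs, which is harmless). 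Hence $\widehat c_{\alpha,\max}^{GMS}(\theta)\equiv\widehat c^{(j)}_\alpha$ on that slab. Next, on a slab I would use the elementary identity $\widehat T_N(\theta)\le c\iff\theta\ge\max_{G}\bigl(\widehat W_G-c\,\widehat\sigma_G/\sqrt N\bigr)$, valid for any constant $c$ because $\widehat T_N$ is a maximum of strictly decreasing affine functions of $\theta$. Taking $c=\widehat c^{(j)}_\alpha$ shows that, for $\theta\in[t^{(j+1)},t^{(j)})$, the test does not reject iff $\theta\ge\theta^{(j)}$; consequently the test-inversion confidence set is the (possibly disconnected) union $\bigcup_{j}\bigl([t^{(j+1)},t^{(j)})\cap[\theta^{(j)},\infty)\bigr)$, and $\widehat{LCB}_{\max}^{GMS}$ is its infimum, which is attained since each slab is closed from below.

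It then remains to simplify this infimum to the stated closed form. The key ingredient is monotonicity: $I^{(1)}\subseteq I^{(2)}\subseteq\cdots$ forces $\widehat c^{(1)}_\alpha\le\widehat c^{(2)}_\alpha\le\cdots$ (the $(1-\alpha)$-quantile of a maximum over a larger index set is weakly larger), hence $\theta^{(j)}=\max_G(\widehat W_G-\widehat c^{(j)}_\alpha\widehat\sigma_G/\sqrt N)$ is non-increasing in $j$. Using this I would argue that a slab lying entirely in the confidence set, i.e.\ with $\theta^{(j)}\le t^{(j+1)}$, cannot determine the infimum by itself: since $\theta^{(j+1)}\le\theta^{(j)}\le t^{(j+1)}$, the next slab is also nonempty and reaches down to $\max(\theta^{(j+1)},t^{(j+2)})\le t^{(j+1)}$, and iterating, the infimum is realized at some index $j^\star$ with $t^{(j^\star+1)}<\theta^{(j^\star)}\le t^{(j^\star)}$; the chain terminates because the last slab $(-\infty,t^{(|\mathcal G_{\text{test}}|)})$ is unbounded below while the test rejects for all sufficiently negative $\theta$. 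Minimizing $\theta^{(j)}$ over all such indices gives $\widehat{LCB}_{\max}^{GMS}=\min\{\theta^{(j)}:t^{(j+1)}<\theta^{(j)}\le t^{(j)}\}$.

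I expect the last step to be the main obstacle: because the confidence set need not be convex, one must track the accept/reject transitions of the piecewise-linear statistic against a step-function critical value and verify that the conservative ``lowest intersection point'' convention selects exactly the stated minimum. Making this rigorous requires some bookkeeping about ties among the $t_G$ and about the degenerate configurations in which some $\theta^{(j)}$ coincides with $t^{(j)}$ or with $t^{(j+1)}$; in those borderline cases the precise value returned by the formula encodes the chosen conservative convention rather than the literal infimum of the (then non-closed) acceptance region.
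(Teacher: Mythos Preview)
Your proposal is correct and follows essentially the same route as the paper: identify the critical value as a step function constant on the slabs $(t^{(j+1)},t^{(j)}]$, reduce test inversion on each slab to the least-favorable computation of Proposition~\ref{prop:max}, and read off the minimum over the slabs in which $\theta^{(j)}$ actually lands. The paper's own argument is considerably terser---it simply asserts that the $\theta^{(j)}$ falling in the $j$-th step are precisely the ``intersection points'' of $T_N(\cdot)$ with $\widehat c_\alpha(\cdot)$ and that the formula picks the lowest one---whereas you supply the monotonicity chain ($I^{(1)}\subseteq I^{(2)}\subseteq\cdots\Rightarrow\widehat c^{(1)}_\alpha\le\widehat c^{(2)}_\alpha\le\cdots\Rightarrow\theta^{(j)}$ non-increasing) and the cascading argument showing that any slab with $\theta^{(j)}\le t^{(j+1)}$ is dominated by a lower one. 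That extra bookkeeping is exactly what is needed to turn the paper's geometric description into a proof that the stated minimum coincides with the infimum of the (possibly disconnected) acceptance region, and your closing caveat about ties and boundary cases matches the paper's own remark that the formula encodes a conservative convention.
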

\begin{proof}
Under the GMS procedure, by definition, the critical value $\widehat{c}_{\alpha}(\theta)$ takes the form of a step function:
\[
\widehat{c}_{\alpha}(\theta) =  \sum_{j = 1}^{\mathcal{G}} \widehat{c}_{\alpha}^{(j)} \bm{1}(t^{(j)} \geqslant \theta > t^{(j+1)}).
\]
The function $T_N(\theta)$ is a maximum of a finite number of linear functions of $\theta$. The LCB corresponds to the lowest point of intersection between $T_N(\theta)$ and $\widehat{c}_N(\theta)$ (since the latter is a step function, there can be multiple such points). Each point $\theta^{(j)}$ marks the intersection of $T_N(\theta)$ with a constant function $\widehat{c}^{(j)}_{\alpha}$. If such $\theta^{(j)}$ is within the relevant ``step'' $[t^{(j)}, t^{(j+1)})$ of the critical value $\widehat{c}_{\alpha}(\theta)$, it is one of the intersection points of $T_N(\theta)$ and $\widehat{c}_{\alpha}(\theta)$. The minimum in the expression for $\widehat{LCB}_{\max}^{GMS}$ selects the lower point of intresection. 
\end{proof}

\subsection{Proof of Proposition \ref{prop:qlr}}
To simplify notation, we write $X - \theta \bm{1}$ instead of $\sqrt{N}(\widehat{W}_{\text{test}} - \theta \bm{1})$. For the strictly convex minimization problem:
	\[
	f^* = \min \limits_{t \in \mathbb{R}^d,\, t \leq 0} \{ (X - \theta \bm{1} - t)'\Sigma^{-1}(X - \theta \bm{1} - t) \},
	\]
	consider the dual objective function:
	\[
	g(u) = \min\limits_{t \in \mathbb{R}^d}\;\{ (X - \theta \bm{1} - t)'\Sigma^{-1}(X - \theta \bm{1} - t) + u't\},
	\]
	where \(u \geqslant 0\) is a vector of the Lagrange multipliers. Since the Slater condition holds, strong duality applies, so \(f^* = \max_{u \geqslant 0} g(u)\). Simple algebra yields
	\[
	g(u) = (X - \theta \bm{1})'u - \frac{1}{4}u'\Sigma u,
	\]
	so the event of not rejecting the LR test can be equivalently written as:
	\[
	\max \limits_{u \geqslant 0} \; \left\{ (X - \theta \bm{1})'u - \frac{1}{4}u'\Sigma u \right\} \leq c_{\alpha, LR}^{LF}.
	\]	
	For \(u = 0\), the inequality trivially holds, and for all \(u \geqslant 0\) with \(u \ne 0\) it is equivalent to
	\[
	\theta  \geqslant \frac{1}{(\sum_{j = 1}^d u_j)} \{ X'u - \frac{1}{4}u'\Sigma u - c_{\alpha, LR}^{LF} \}.
 	\]
 	Any $u \geqslant 0$ with $u \ne 0$ can be written as $u = \lambda \cdot \gamma$, where $\lambda \geqslant 0$ satisfies $\sum_{j = 1}^d \lambda_j = 1$, and $\gamma > 0$. Thus, the above display is equivalent to
	\[
	\theta \geqslant X'\lambda - \frac{1}{4}\lambda'\Sigma \lambda \cdot \gamma - \frac{c_{\alpha, LR}^{LF}}{\gamma}.
	\]
	Since this inequality holds for all \(\lambda \geqslant 0\) with \(\lambda'\bm{1} = 1\), and all \(\gamma > 0\), 
	\[
	\theta \geqslant \max \limits_{\lambda \in \Lambda, \gamma > 0} \left\{ X'\lambda - \frac{1}{4}\lambda'\Sigma \lambda \cdot \gamma - \frac{c_{\alpha, LR}^{LF}}{\gamma}\right\}.
	\]
	Concentrating out \(\gamma\) yields the stated result. \qed

\section{Auxiliary Empirical Details}
\label{app:emp}


\paragraph{Table \ref{tab:jtpa}, Row 4.}  To consider a data-driven choice of $G$,  we partition the sample into two parts $I_1$ and $I_2$ of sizes $N/3$ and $2/3 N$, respectively.  Let 
$$
\widehat G_1:= \{ X: \widehat \tau_1(X) >0 \},
$$
where $\widehat \tau_1$ is estimated via plug-in rule using random forest regression of earnings of $Educ$ and $PreEarn$.   A sample analog of $W_{gain,G}$ is 
$$
\widehat{W}_{gain,G}  = \dfrac{1}{| I_2| } \sum_{i \in I_2} \bigg( \dfrac{D_i}{\pi(X_i)} - \dfrac{1-D_i}{1-\pi(X_i)} \bigg) Y_i 1\{X_i \in G \}.
$$
Conditional on the data in the partition $I_1$, we have
$$
\sqrt{  | I_2| } ( \widehat{W}_{gain,\widehat G_1}  - W_{gain,\widehat G_1})  \Rightarrow^d N(0, \sigma^2_{\widehat G_1}) \mid  (W_i)_{i \in I_1}.
$$
The $100(1-\alpha)\%$ Lower Confidence Band defined as 
\begin{equation*}
{LCB}_{gain,G_1} = \widehat {W}_{gain,\widehat G_1}  - | I_2|^{-1/2} z_{1-\alpha} \widehat{\sigma}_{gain,\widehat G_1}
\end{equation*}
attains correct coverage condition on the data in $I_1$, and, therefore, unconditionally.

\end{document}